\newtheorem{theorem}{Theorem}
\newtheorem{proposition}{Proposition}
\newtheorem{lemma}{Lemma}
\newtheorem{assumption}{Assumption}
\tikzset{
    flowbox/.style={rectangle, draw, minimum width=2.5cm, minimum height=1cm, text width=2.3cm, align=center, fill=gray!15},
    flowarrow/.style={->, very thick},
    decision/.style={diamond, draw, minimum width=2.5cm, minimum height=1.5cm, text width=2cm, align=center, aspect=2},
    choice/.style={rectangle, draw, rounded corners=0.3cm, minimum width=2.5cm, minimum height=1cm, text width=2.3cm, align=center, fill=yellow!20},
    inputbox/.style={trapezium, draw, trapezium left angle=70, trapezium right angle=110, minimum width=1.5cm, minimum height=1cm, inner xsep=0.3cm, inner ysep=0.2cm, align=center, fill=blue!10, text width=2.3cm},
    outputbox/.style={trapezium, draw, trapezium left angle=70, trapezium right angle=110, minimum width=1.5cm, minimum height=1cm, inner xsep=0.3cm, inner ysep=0.2cm, align=center, fill=green!10, text width=2.3cm},
    input/.style={rectangle, draw, minimum width=1.5cm, minimum height=1cm, inner xsep=0.3cm, inner ysep=0.2cm, align=center, fill=blue!15, thick, rounded corners=2pt},
    output/.style={rectangle, draw, minimum width=1.5cm, minimum height=1cm, inner xsep=0.3cm, inner ysep=0.2cm, align=center, fill=green!15, thick, rounded corners=2pt},
    yes/.style={flowarrow, edge label={node[midway, above, font=\small]{Yes}}},
    no/.style={flowarrow, edge label={node[midway, below, font=\small]{No}}},
    param/.style={flowarrow},
    circlenode/.style={circle, draw, minimum size=0.75cm, inner sep = 1pt},
    recordednode/.style={circlenode, fill=gray!30},
    latentnode/.style={circlenode, fill=white!30, dashed},
    invisiblenode/.style={circle, draw=none, minimum size=0.75cm, inner sep = 0pt},
    clusterinteraction/.style={->, thick, color=black, decoration={markings, mark=at position 0.5 with {\node[draw,circle,fill=black,inner sep=1pt] {};}}, postaction={decorate}},
    repulsioninteraction/.style={->, thick, color=black, decoration={markings, mark=at position 0.5 with {\draw[line width=0.8pt,-] (-0.1,-0.1) -- (0.1,0.1); \draw[line width=0.8pt,-] (-0.1,0.1) -- (0.1,-0.1);}}, postaction={decorate}},
    equalsinteraction/.style={->, thick}
}
\DeclareMathOperator*{\argmin}{argmin}
\newcommand{\indicator}{\mathbbm{1}}
\renewcommand{\ip}[2]{#1\cdot #2}
\newcommand{\NN}{\mathbb{N}}
\newcommand{\ZZ}{\mathbb{Z}}
\newcommand{\RR}{\mathbb{R}}
\newcommand{\EE}[1]{\mathbb{E} \left[ #1 \right]}
\newcommand{\RRd}{{\RR^d}}
\newcommand{\de}{\mathrm{d}}
\newcommand{\cov}[1]{\, {\rm cov}\left( #1 \right) }
\renewcommand{\var}[1]{\, {\rm var}\left( #1 \right) }
\newcommand{\set}[1]{\left\{#1\right\}}
\newcommand{\freq}{{k}}
\newcommand{\momentmeasure}[1]{M_{#1}}
\newcommand{\cumulantmeasure}[1]{C_{#1}}
\newcommand{\reducedmeasure}[2]{\Breve{#1}_{#2}}
\newcommand{\reducedmomentmeasure}[1]{\reducedmeasure{M}{#1}}
\newcommand{\reducedmomentdens}[1]{\reducedmeasure{m}{#1}}
\newcommand{\reducedcumulantmeasure}[1]{\reducedmeasure{C}{#1}}
\newcommand{\reducedcumulantdens}[1]{\reducedmeasure{c}{#1}}
\newcommand{\Kfunc}[1]{{K}_{#1}}
\newcommand{\Lfunc}[1]{{L}_{#1}}
\newcommand{\paircf}[1]{g_{#1}}
\newcommand{\Cfunc}[1]{C_{#1}}
\newcommand{\Cfuncest}[1]{\widehat{C}_{#1}}
\newcommand{\sdf}[1]{f_{#1}}
\newcommand{\intensitybase}{\lambda}
\newcommand{\intensity}[1]{\intensitybase_{#1}}
\newcommand{\intensityest}[1]{\hat{\intensitybase}_{#1}}
\newcommand{\region}{\mathcal{R}}
\newcommand{\conj}[1]{\overline{#1}}
\newcommand{\dft}[2]{J_{#1;#2}}
\newcommand{\pgram}[2]{I_{#1;#2}}
\newcommand{\mtpgram}[1]{\hat{f}_{#1}}
\newcommand{\borel}[1]{\mathcal{B}(#1)}
\newcommand{\sphere}[1]{S^{#1}}
\newcommand{\zerosphere}[1]{\sphere{#1}_0}
\newcommand{\UU}{\mathbb{U}}
\newcommand{\UUd}{{\UU^d}}
\newcommand{\totallyfinite}[1]{\mathcal{S}_{#1}}
\newcommand{\Poisson}{\mathrm{Poisson}}
\newcommand{\points}{\Phi}
\newcommand{\allprocesses}{V}
\newcommand{\pred}{\bullet}
\newcommand{\clusternumber}[1]{\Poisson(\clusteraverage{#1})}
\newcommand{\clusteraverage}[1]{\eta_{#1}}
\newcommand{\clusteroffset}[1]{P_{#1}}
\newcommand{\clusterdens}[1]{p_{#1}}
\newcommand{\clusterchar}[1]{\phi_{#1}}
\newcommand{\leb}[1]{\ell\left(#1\right)}
\newcommand{\sdftopartial}[1]{F_{#1}}
\newcommand{\Predprocess}[1]{\Lambda_{#1}}
\newcommand{\predprocess}[1]{\lambda_{#1}}
\newcommand{\predkernel}[1]{w_{#1}}
\newcommand{\predkernelft}[1]{\tilde{w}_{#1}}
\newcommand{\Predkernel}[1]{W_{#1}}
\newcommand{\predkernelest}[1]{\widehat{w}_{#1}}
\newcommand{\centeredresidualprocess}[1]{\varepsilon^0_{#1}}
\newcommand{\residualprocess}[1]{\varepsilon_{#1}}
\newcommand{\oneFtwo}[3]{{{}_1F_2} \left(#1;#2;#3\right)}
\newcommand{\wavenumberspacing}[1]{\freq^{\Delta}_{#1}}
\newcommand{\wavenumbermax}[1]{\freq^{\text{max}}_{#1}}
\newcommand{\regionlength}[1]{L_{#1}}
\newcommand{\waveradii}{\kappa}
\newcommand{\waveradiispacing}{\waveradii^{\Delta}}
\newcommand{\waveradiimax}{\waveradii^{\text{max}}}
\newcommand{\waveradiigrid}{\overline{\Omega}}
\newcommand{\shellwavenumbers}[1]{\Omega_{#1}}
\newcommand{\besselj}[1]{\mathcal{J}_{#1}}
\crefname{assumption}{Assumption}{assumptions}
\Crefname{assumption}{Assumption}{Assumptions}
\crefname{proposition}{Proposition}{propositions}
\Crefname{proposition}{Proposition}{Propositions}
\crefname{theorem}{Theorem}{theorems}
\Crefname{theorem}{Theorem}{Theorems}
\newcommand{\diagramkey}[1]{%
    \node[below=0.5cm of #1] (keycenter) {}; 

    \node[font=\scriptsize] at (keycenter) (labelequal) {Copying};
    \node[below=0.1cm of labelequal, xshift=-0.5cm] (keyequalstart) {};
    \node[right=0.8cm of keyequalstart] (keyequalend) {};
    \draw[equalsinteraction] (keyequalstart) -- (keyequalend);

    \node[font=\scriptsize, right=0.5cm of labelequal] (labelcluster) {Clustering};
    \node[below=0.1cm of labelcluster, xshift=-0.5cm] (keyclusterstart) {};
    \node[right=0.8cm of keyclusterstart] (keyclusterend) {};
    \draw[clusterinteraction] (keyclusterstart) -- (keyclusterend);

    \node[font=\scriptsize, right=0.5cm of labelcluster] (labelrepulsion) {Repulsion};
    \node[below=0.1cm of labelrepulsion, xshift=-0.5cm] (keyrepulsionstart) {};
    \node[right=0.8cm of keyrepulsionstart] (keyrepulsionend) {};
    \draw[repulsioninteraction] (keyrepulsionstart) -- (keyrepulsionend);

    \node[font=\scriptsize, left=0.5cm of labelequal] (labellatent) {Latent};
    \node[latentnode, minimum size = 0.3cm, below=0.2cm of labellatent] (keylatent) {};
    \node[font=\scriptsize, left=0.5cm of labellatent] (labelrecord) {Recorded};
    \node[recordednode, minimum size = 0.3cm, below=0.2cm of labelrecord] (keyrecord) {};
}
\title{The partial $K$ function}
\renewcommand{\shorttitle}{The partial $K$ function}
\newbox{\orcid}\sbox{\orcid}{\includegraphics[scale=0.06]{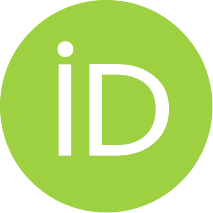}} 
\author[1]{\href{https://orcid.org/0000-0002-8808-4821}{\usebox{\orcid}}\hspace{1mm}{J. P. GRAINGER}\thanks{\texttt{jake.grainger@epfl.ch}}}
\author[2]{\href{https://orcid.org/0000-0002-1343-8058}{\usebox{\orcid}}\hspace{1mm}{T. A. RAJALA}\thanks{\texttt{tuomas.rajala@luke.fi}}}
\author[3]{\href{https://orcid.org/0000-0002-4830-8966}{\usebox{\orcid}}\hspace{1mm}{D. J. MURRELL}\thanks{\texttt{d.murrell@ucl.ac.uk}}}
\author[1]{\href{https://orcid.org/0000-0003-0061-227X}{\usebox{\orcid}}\hspace{1mm}{ S. C. OLHEDE}\thanks{\texttt{sofia.olhede@epfl.ch}}}
\affil[1]{Institute of Mathematics, \'Ecole Polytechnique F\'ed\'erale de Lausanne, Station 8, 1015 Lausanne, Switzerland }
\affil[2]{Natural Resources Institute Finland, 00790 Helsinki, Finland }
\affil[3]{Research Department of Genetics, Evolution and Environment, Centre for Biodiversity and Environment Research, University College London, UK }
\begin{document}

\maketitle
\begin{abstract}
The $K$ function and its related statistics have been an enduring tool in the analysis of spatial point processes, providing an easy to compute and interpret summary statistic for characterising the interactions between points of one type, or between two different types of points.
    In this paper, we introduce a partial $K$ function, enabling us to account for some of the effects of the other point types when analysing point-point interactions.
    The partial $K$ function we introduce reduces to the usual $K$ function when the other points are independent of the points of interest and has a similar interpretation.
    Using examples, we demonstrate how the partial $K$ function can unpick dependence between point types that would otherwise be hidden in the usual $K$ function.
    We also discuss important bias correction steps and hyperparameter selection.
    In addition, we introduce an extension to account for other spatial covariates, and demonstrate the methodology on the Lansing Woods dataset. 

\end{abstract}

\section{Introduction}

Point patterns often arise from systems with complicated interactions between different types of points. 
For example, we might record the locations of individuals of multiple species of tree in a forest, all of which may influence each other.
In such multivariate systems, apparent associations between two components may be due to the influence of other components in the system, and this can lead to spurious associations if not accounted for.
Partial statistics are a powerful tool for separating such complicated dependence structures, and have been widely used for multivariate random variables \citep{baba2004partial}, time series \citep{dahlhaus2000graphical} and random fields \citep{guinness2014multivariate}.
For spatial point processes, however, existing approaches have been purely Fourier domain constructions \citep[for example]{eckardt2019analysing}, which can be difficult to interpret and compare with the standard tools used in practice.

In spatial point process analysis, functional summary statistics, such as the long established $K$ function \citep{ripley1977modelling}, provide a useful way to explore interactions between points in a point pattern.
These existing approaches perform well when we are interested in one or two types of points, and have been modified to account for covariates and long scale inhomogeneity \citep{baddeley2000non}. 
However, if there are more than two types of points present we may want to also account for the influence of these additional types of points.
In particular, the influence of a third type of point may cause spurious clustering or repulsion between the two types of interest.
For example, if points of type $X$ and type $Y$ both cluster around points of type $Z$, then the usual $K$ function may suggest clustering between $X$ and $Y$ even if there is no direct interaction between them.
This is the limitation we address here.

In what follows, we introduce partial versions of the standard second-order statistics used in the point process literature, in particular, any which can be computed from Ripley's $K$ function. 
This complements the basic $K$ function and the intensity-reweighted $K$ function \citep{baddeley2000non} with the new \emph{partial $K$ function}.
The partial $K$ function is particularly useful if we suspect that interactions between or within types could be due to the influence of other types of points, but we do not have a good model \textit{a priori} to describe such dependence.
If we have two types of points of interest, $X$ and $Y$, Ripley's $K$ function is informally
\begin{equation*}
    K_{XY}(r) = \frac{1}{\intensity{X}}\EE{\begin{array}{l}\text{number of extra points of type $X$ within} \\\text{a distance $r$ of a typical point of type $Y$}\end{array}}, 
\end{equation*}
where $\intensity{X}$ is the average number of type $X$ points per unit area \citep{dixon2013ripley}.
Note that the term ``extra'' is only necessary when $X=Y$ to exclude the typical point itself from the count (provided the processes are jointly simple).
In essence, this relates to counting the number of points of one type within a distance $r$ of points of another type.
Such a statistic can be computed easily, has a straightforward interpretation, and is easy to visualise (being a function of distance).

If we have additional points of type(s) $Z=(Z_1,\ldots,Z_q)$, which we call covariate processes, the partial $K$ function is informally
\begin{equation*}
    K_{XY\pred Z}(r) = K_{XY}(r) - \frac{1}{\intensity{X}} \EE{\begin{array}{l}\text{excess intensity of points of type $X$}\\\text{linearly predicted from points of type $Z$}\\\text{within distance $r$ of a typical point of type $Y$}\end{array}}.
\end{equation*}
In other words, we take the usual $K$ function and remove from it the clustering/repulsion which could be explained by linear prediction from $Z$, the covariate processes.
The reason we do this linearly is because we can easily construct non-parametric estimators, without needing to specify a model \textit{a priori}.
This linear construction does have limitations: the resulting statistic is partial and not conditional \citep{baba2004partial}.
This means one cannot interpret the partial $K$ function as detecting conditional orthogonality.
However, the partial $K$ function still provides a powerful tool for unpicking complicated dependence structures which may be present in the observed point patterns.

The main contributions of this paper are as follows.
First, we introduce a family of partial functional statistics, which reduce to the classical point process equivalents when the covariate processes are independent of the processes of interest.
Second, we provide reliable estimates for these quantities, including developing important debiasing steps which we show can have significant impact on the quality of the estimators.
Third, we give detailed examples illustrating how these statistics work in a variety of cases designed to correspond to phenomena that are of interest in practice.
Finally, we provide a clear interpretation of our novel statistics that can be thought of as taking the existing statistic and making an adjustment to account for the covariate processes, thus making it easier for practitioners to adopt.
Therefore, this paper provides a new angle to analyse point pattern data, enabling us to extract new insights into the processes of interest.

\section{Background}
\label{sec:background}

\subsection{Point processes}
Point processes are a type of stochastic process that model the random locations of points in some space, e.g.\ $\RR$ for temporal point processes or $\RR^2$ for spatial point processes.
The material in this section can be found in standard textbooks on point processes, such as \cite{moller2003statistical}, \cite{daley2003introduction} and \cite{illian2008statistical}.
Formally, a \emph{point process} is a random measure $N$ on $\RR^d$ taking values in the non-negative integers $\NN_0$.
For a Borel set $B\in\borel{\RR^d}$, $N(B)$ is the random number of points in $B$.
The \emph{support} of $N$ is the set of point locations in $\RR^d$, which we denote by $\points$.
We say that a point process is \emph{simple} if no two points occur in the same location almost surely (for all $x\in\RRd$, $N(\set{x})\in\set{0,1}$ a.s.), which is usually a natural assumption when modelling distinct locations.
Such a framework provides a rigorous formalism to develop models for applications where point locations are the primary object of interest.

Multitype point processes generalize point processes to settings with multiple types of points. 
For example, say that we have three types, $X$, $Y$ and $Z$, then we can define multiple point processes $N_X$, $N_Y$ and $N_Z$ to represent them, with supports $\points_X$, $\points_Y$ and $\points_Z$, respectively.
The \emph{ground process} is the point process which contains all points of all types, and is defined as $N = N_X + N_Y + N_Z$.
We say the the processes are jointly simple if the ground process is simple, meaning that no two points of any type occur in the same location almost surely.
The labels $X$, $Y$ and $Z$ are arbitrary, but will often play the roles of processes of interest and covariate processes, as described in the introduction.
Therefore we will assume that we have a system of $P$ point processes, $N_1,\ldots,N_P$.
For convenience, we will refer to the collection of all indices as $\allprocesses=\set{1,\ldots,P}$.
Then when referring to specific processes, we tend to use the labels $X,Y\in \allprocesses$ and $Z=\allprocesses\setminus\set{X,Y}$.
Such multitype point processes enable us to model interactions between different types of points, which is important for understanding the underlying structure of the system.

Throughout this work, we assume the system of point processes is jointly \emph{homogeneous}, meaning the joint distribution across process types is invariant under spatial translations \citep[see][for a formal definition]{daley2007introduction}.
Say that we have three types of processes $X$, $Y$ and $Z$.
As previously noted, the $K$ function $K_{XY}$ may be misleading if there is dependence between $X$ and $Y$ induced by $Z$, and this means $K_{XY}$ not only requires that the spatial joint distribution points of types $X$ and $Y$ is homogeneous, but also conditionally on the points of type $Z$.
Our partial $K$ function relaxes this assumption to allow for certain forms of conditional mean inhomogeneity.
We require only that the system of all $P$ processes is jointly homogeneous, not that individual processes are conditionally homogeneous given others.
This is a weaker requirement because processes can exhibit apparent spatial trends due to dependence on other processes, even if the system as a whole is homogeneous.

\subsection{Summary statistics for point processes}
There are many different summary statistics for point processes, see e.g. \cite{illian2008statistical} for an overview.
For clarity of exposition, we will focus on point processes in planar space (i.e. $d=2$); the extension to general dimensions is straightforward, see \cref{app:generaldim}.
The first statistic of interest to us is the \emph{intensity} of a point process.
In general, the intensity measure is $\Lambda_X(B) = \EE{N_X(B)}$ for a Borel set $B\in\borel{\RR^2}$. 
When the process is homogeneous, the intensity measure has a constant density, $\intensity{X}$, which we will call the intensity, and it can be interpreted as the average number of points per unit area.

To introduce the remaining statistics, we first define the second-order moments of point processes. The \emph{second-order moment measure} between two processes $X$ and $Y$ is given by $\momentmeasure{XY}(A\times B) = \EE{N_X(A) N_Y(B)}$ for sets $A,B\in\borel{\RR^2}$.
For homogeneous processes, this measure depends only on the relative position of $A$ and $B$, similar to how the autocovariance of a stationary time series depends only on the lag. In this case, we can define the \emph{second-order reduced moment measure}, $\reducedmomentmeasure{XY}$, which for a bounded set $B\in\borel{\RR^2}$ is
\begin{align}
    \reducedmomentmeasure{XY}(B) & = \EE{\int_{\UU^2} N_X(B+y) N_Y(\de y)},
\end{align}
where $\UU^2=[0,1]^2$ is the unit square \citep{daley2003introduction}. This reduced measure captures the expected interaction between points of type $X$ and $Y$ at a given spatial lag.
In order to define the statistics in this paper, we require that the processes we consider satisfy the following assumption.
\begin{assumption}[Second-order homogeneous and simple]\label{assumption:secondorder}
    We require the following three conditions:
    \begin{enumerate}
        \item The processes are jointly homogeneous and jointly simple.
        \item For all $X \in \allprocesses$, $0<\intensity{X} < \infty$.
        \item For all $X,Y\in \allprocesses$, the second-order moment measure $\momentmeasure{XY}$ exists.
    \end{enumerate}
\end{assumption}
This guarantees a number of useful properties, such as translation-boundedness of the second-order reduced moment measure \citep[Proposition 8.3.I]{daley2003introduction}, which ensures the finiteness of the $K$ function and related statistics.

Given \cref{assumption:secondorder}, Ripley's \emph{$K$ function} is defined as
\begin{align}
    \Kfunc{XY}(r) & = \intensity{X}^{-1} \intensity{Y}^{-1}\reducedmomentmeasure{XY}\left(r\zerosphere{2}\right), & r \geq 0,
\end{align}
where $\zerosphere{2} = \sphere{2} \setminus \set{0}$ and $\sphere{2}$ is the unit ball in $\RR^2$.
If $X=Y$ then this is called the $K$ function of $X$, otherwise it is called the cross $K$ function between $X$ and $Y$.
The $K$ function can be interpreted as the expected number of points of type $X$ within a distance $r$ of a typical point of type $Y$, divided by the expected number of points of type $X$ in a unit area (provided the processes are homogeneous).

Typically, the $K$ function is transformed so that it is easier to interpret visually.
A common transformation is the \emph{$L$ function} \citep{besag1977contribution}, which is
\begin{align}
    \Lfunc{XY}(r) & = \sqrt{\frac{K_{XY}(r)}{\pi}}, & r \geq 0.\label{eq:k2l}
\end{align}
The benefits of this transformation are that $L_{XY}(r) = r$ if the points of type $X$ are independent of points of type $Y$ (or if the process is Poisson when $X=Y$), and the transformation is variance stabilising \citep{besag1977contribution}.
Another common related statistic is the \emph{pair correlation function}, which is defined as
\begin{align}
    \paircf{XY}(r) & = \frac{\Kfunc{XY}'(r)}{2\pi r}, & r > 0.
\end{align}
The pair correlation function is one when the points of type $X$ are independent of points of type $Y$ (or when $X=Y$ is a Poisson process).

A useful statistic for our purposes, which we will call the \emph{$C$ function}, is for $r \geq 0$, $\Cfunc{XY}(r) = \reducedcumulantmeasure{XY}(r\zerosphere{2})$, where $\reducedcumulantmeasure{XY}(B) = \reducedmomentmeasure{XY}(B) - \intensity{X} \intensity{Y} \leb{B}$ is the reduced covariance measure (for $B\in\borel{\RR^2}$).
The $C$ function is easily related to the $K$ function and pair correlation function by
\begin{align}
    \Kfunc{XY}(r) & = \frac{\Cfunc{XY}(r)}{\intensity{X} \intensity{Y}} + \pi r^2, & r \geq 0\label{eq:c2k}\\
    \paircf{XY}(r) & = \frac{\Cfunc{XY}'(r)}{2 \pi r \intensity{X} \intensity{Y}} + 1, & r > 0.\label{eq:c2g} 
\end{align}
When the points of type $X$ are independent of points of type $Y$, then $\Cfunc{XY}(r) = 0$ for all $r\geq 0$.
The spectral density function is given by
\begin{align}
    \sdf{XY}(\freq) & = \int_{\RR^2} e^{-2\pi i \ip{u}{\freq}} \reducedcumulantmeasure{XY}(\de u), & \freq \in \RR^2,
\end{align}
provided that $\reducedcumulantmeasure{XY}$ is totally finite \citep{daley2003introduction}.
For partial statistics, it is easiest to work with the spectral density function, but for inference spatial domain is more intuitive, so transforming from the spectral density function to spatial statistics is a useful step.
This requires the following additional integrability condition.
\begin{assumption}[Atom-corrected integrability]\label{assumption:sdfL1}
    For all $X,Y\in\allprocesses$, $\sdf{XY} - \reducedmomentmeasure{XY}(\set{0})$ is integrable.
\end{assumption}
Then we can relate the $C$ function and its derivative to the spectral density function with the following proposition.
\begin{proposition}[Spectral inversion for the $C$ function]\label[proposition]{prop:sdf2c}
    Under \cref{assumption:secondorder,assumption:sdfL1},
    \begin{align}
        \Cfunc{XY}(r)  & = \int_{\RR^2} \left[\sdf{XY}(\freq) - \reducedmomentmeasure{XY}(\set{0})\right] \frac{r}{\norm{\freq}} \besselj{1}(2\pi\norm{\freq}r)\de\freq, \label{eq:sdf2c}           \\
        \Cfunc{XY}'(r) & = \int_{\RR^2} \left[\sdf{XY}(\freq) - \reducedmomentmeasure{XY}(\set{0})\right] 2\pi r \besselj{0}(2\pi\norm{\freq}r) \de\freq,\label{eq:sdf2c:derivative}
    \end{align}
     for all $r\geq 0$, where $\besselj\nu$ is the Bessel function of the first kind of order $\nu$. 
    \begin{proof}
        See \cref{app:spectratocfunction}.
    \end{proof}
\end{proposition}

If the processes $X$ and $Y$ are jointly simple (under \cref{assumption:secondorder}), we have $\reducedmomentmeasure{XX}(\set{0}) = \intensity{X}$ and $\reducedmomentmeasure{XY}(\set{0}) = 0$ when $X\neq Y$.
In practice, \cref{eq:sdf2c:derivative} can be used to compute the pair correlation function, or we can use existing approaches to obtain an estimate from the estimated $K$ function using smoothing splines, either directly from the $K$ function or from some transformation, see \cite{baddeley2005spatstat} for details.

\section{Partial $K$ function}\label{sec:partial}

\subsection{Comparison to existing methods}
Partial statistics in the Fourier domain are computationally efficient linear methods used to account for the influence of covariate processes.
Such methods have been proposed for time series \citep{brillinger1974timeseries, dahlhaus2000graphical}, random fields \citep{guinness2014multivariate}, spatial point processes \citep{eckardt2019analysing,eckardt2019partial,eckardt2021graphical} and combinations of point processes and random fields \citep{grainger2025spectral}. 
In the spatial point process setting, these approaches have been purely Fourier domain constructions. However, in many applications where spatial point patterns are recorded, the wavenumber-domain spectrum is unfamiliar to practitioners, and can be fundamentally hard to interpret. 
For example, while ocean waves are naturally understood as combinations of different frequency components (waves), the same wavenumber decomposition is less intuitive for analysing, say, tree locations in a forest. 
In contrast, spatial-domain statistics such as Ripley's $K$ function and the pair correlation function can be seen as giving a ``plant's eye view'' of the community \citep{law2009ecological}. 
Our goal is to solve this problem by developing partial analogues of the standard spatial domain summaries derived from Ripley's $K$ function, with an interpretation on the same spatial scale as the ordinary $K$ and $L$ functions, whilst maintaining the ability of the partial spectra to account for the other point types in the system of processes.

For temporal point processes, \cite{eichler2003partial} proposed a time-domain statistic based on transforming partial spectra back from the wavenumber domain.
However, their approach has significant limitations for spatial applications.
Their statistic is equivalent to a one-dimensional scaled version of the derivative of what we call the $C$ function (see \cref{eq:sdf2c:derivative}), which is not standard in the spatial point process literature, making it difficult to compare with existing methods and hindering adoption by practitioners.

Our proposed statistic resolves these issues by introducing a partial equivalent of the widely-used $K$ function.
The partial $K$ function we develop is a direct generalization that reduces to exactly the usual $K$ function when there is no influence from the covariate processes, enabling practitioners to interpret partial and non-partial statistics on the same scale using familiar methodology.


\subsection{Definition of the partial statistics}\label{sec:partial:definition}

We will usually refer to the process ($X$) or pair of processes ($X$ and $Y$) which we are considering the interactions between as the \emph{processes of interest}.
The process(es) which we are accounting for ($Z$), we call the \emph{covariate processes}. Recall, the collection of all processes we denote by $V$, i.e. $X,Y\in V$ and $Z= V\setminus\{X,Y\}$.

In the time series setting, \cite{brillinger1974timeseries} proposed the partial spectra between time series $X$ and $Y$ accounting for other processes $Z=(Z_1,\ldots,Z_q)^\top$ as the spectra of residual processes obtained by linearly predicting $X$ and $Y$ from $Z$ and taking the difference between the original processes and their predictions.
For example, the prediction process takes the form
\begin{align}
    \hat{X}_{t\pred Z} & = \sum_{s=-\infty}^\infty h_{s,X\pred Z} Z_{t-s}, & t \in \ZZ,
    \nonumber
\end{align}
where $h_{s,X\pred Z}$ is a sequence of row vectors of filter coefficients, and the residuals are $\varepsilon_{t, X\pred Z} = X_t - \hat{X}_{t\pred Z}$ (assuming the time series are mean-zero and second-order stationary).

However, in the point process setting (or random measures more generally), this is much more difficult to define.
The first difficulty is that the corresponding notion of prediction process is more complicated, as the prediction process is no longer the convolution of two sequences, but rather should be the convolution of a point process with a signed measure \citep{daley2003introduction}, which is more difficult to work with. 
Secondly, the notion of residual process itself may not be well defined as a random signed measure as, for unbounded sets, both the point process and prediction process may be infinite at the same time \citep[][]{passeggeri2025random}.
Therefore, we will begin by giving definitions for the partial spectral density function and partial $C$ function (and implicitly the partial $K$ function), which exist under mild assumptions.
Then, we will give an interpretation of these partial statistics in terms of residual processes, which holds under stronger assumptions, but nonetheless provides useful motivation.

To define the partial statistics, we will need to make some additional assumptions.
Beginning with the partial spectral density function, we need the following assumption to ensure that it is well defined.
\begin{assumption}[Invertibility of the spectral density matrix]\label{assumption:invertiblespectra}
    The spectral density matrix function at wavenumber $\freq$, $\sdf{VV}(\freq)$, is invertible for all wavenumbers $\freq \in \RR^2$.
\end{assumption}
Provided that \cref{assumption:invertiblespectra} holds, we define the partial spectral density function between $X$ and $Y$ accounting for $Z$ as
\begin{equation}
    \sdf{XY\pred Z}(\freq) 
    = \sdf{XY}(\freq) - \sdf{XZ}(\freq) \sdf{ZZ}(\freq)^{-1} \sdf{ZY}(\freq),\label{eq:def:partialspectra}
\end{equation}
which mirrors the result in the time series case \citep[Theorem 8.3.1]{brillinger1974timeseries}.
Here, when $Z=(Z_1,\ldots,Z_q)^\top$ contains multiple processes, $\sdf{XZ}(\freq)$ is the $1\times q$ row vector of cross-spectra between $X$ and the components of $Z$, $\sdf{ZZ}(\freq)$ is the $q\times q$ spectral density matrix of $Z$, and $\sdf{ZY}(\freq)$ is the corresponding $q\times 1$ column vector.
Next, motivated by \cref{prop:sdf2c}, we make the following integrability assumption to ensure that the partial $C$ function is well defined.
\begin{assumption}[Partial atom-corrected integrability]\label{assumption:partialcL1}
    For all $X,Y\in\allprocesses$, $\sdf{XY\pred Z}(\freq)-\reducedmomentmeasure{XY}(\set{0})$ is integrable.
\end{assumption}
That $\reducedmomentmeasure{XY}(\set{0})$ plays the role of $\reducedmomentmeasure{XY\pred Z}(\set{0})$ is due to the joint simplicity part of \cref{assumption:secondorder} and \cref{assumption:sdfL1}.
If the latter assumption does not hold, then it is no longer true that $\reducedmomentmeasure{XY}(\set{0})=\reducedmomentmeasure{XY\pred Z}(\set{0})$, see e.g. the first example in \cref{sec:partial:examples}.

We define the partial $C$ function between $X$ and $Y$ accounting for $Z$ as
\begin{align}
    \Cfunc{XY\pred Z}(r)  & = \int_{\RR^2} \left[\sdf{XY\pred Z}(\freq) - \reducedmomentmeasure{XY}(\set{0})\right] \frac{r}{\norm{\freq}} \besselj{1}(2\pi\norm{\freq}r)\de\freq, & r \geq 0.\label{eq:def:partialc}
\end{align}
Of course, given our assumptions, the derivative of this partial $C$ function is
\begin{align}    
    \Cfunc{XY\pred Z}'(r) & = \int_{\RR^2} \left[\sdf{XY\pred Z}(\freq) - \reducedmomentmeasure{XY}(\set{0})\right] 2\pi r \besselj{0}(2\pi\norm{\freq}r) \de\freq, & r \geq 0.
\end{align}
The partial $K$ function can then be defined from the partial $C$ function using the same relationship as in \cref{eq:c2k}, replacing $C_{XY}$ with $C_{XY\pred Z}$, but retaining the same intensities $\intensity{X}$ and $\intensity{Y}$. Keeping these intensities corresponds to a certain choice of centring for the residual processes, but we will see that this provides a natural interpretation of the partial statistics in terms of residual processes in \cref{sec:partial:interpretation}.
Similarly, the partial pair correlation function using \cref{eq:c2g}, and partial $L$ function using \cref{eq:k2l}.
If $X$ and $Y$ are independent of $Z$, then $\Kfunc{XY\pred Z}(r) = \Kfunc{XY}(r)$, so the partial $K$ function reduces to the usual $K$ function, as we would expect (both when $X=Y$ and when $X\neq Y$).

The partial $K$ function is not guaranteed to be positive nor increasing.
If the partial $K$ function is negative, then the partial $L$ function is not real-valued, as the $L$ function square roots the $K$ function.
Therefore, we will replace the $L$ function with a signed version, so that instead
\begin{align}
    \Lfunc{XY\pred Z}(r) & = \mathrm{sgn}(K_{XY\pred Z}(r))\sqrt{\frac{\abs{K_{XY\pred Z}(r)}}{\pi}}, & r\geq 0,
\end{align}
where $\mathrm{sgn}$ is the sign function.
For the $K$ function, the signed $L$ function is the usual $L$ function, as the $K$ function is always non-negative.

\subsection{Residual process interpretation}\label{sec:partial:interpretation}

Now in order to develop an interpretation of the partial statistics in terms of residual processes, we need to make some stronger assumptions.
We will start with the prediction process.
For convenience, let $N_X^0(B) = N_X(B)-\intensity{X}\leb{B}$ be the centred count ``measure'' for bounded $B\in\borel{\RR^2}$.\footnote{Strictly speaking this is a set function not a measure.} Integrals with respect to $N_X^0$ will be taken to mean the difference between the integral with respect to $N_X$ and the integral with respect to $\intensity{X}\ell$.
We will now define prediction of $N_X^0$ from the collection of centred point processes $N_Z^0 = (N_{Z_1}^0,\ldots,N_{Z_q}^0)^\top$ as the sum of convolutions of $N_{Z_j}^0$ with a collection of signed measures $\Predkernel{}=(\Predkernel{1},\ldots,\Predkernel{q})$, in particular our prediction of $N_X^0(B)$ for bounded $B\in\borel{\RR^2}$ is of the form
\begin{align}
    \sum_{j=1}^q \int_{\RR^2} \Predkernel{j}(B-x) N_{Z_j}^0(\de x).\label{eq:prediction:general}
\end{align}
Now we need to ensure that \cref{eq:prediction:general} is well defined for bounded Borel sets.
A sufficient condition to ensure that each integral is well defined is that for each $j$, $\Predkernel{j}$ is a totally finite signed measure, and $\lambda_{Z_j}$ is finite, which ensures that the integral is finite almost surely for bounded sets.
In addition, this assumption also ensures that $\Predkernel{}$ also has a Fourier transform, $\predkernelft{}$, which exists, and is bounded and continuous.

To define the best linear prediction, we seek the row vector of signed measures $\Predkernel{}$ minimizing the mean squared prediction error on a bounded test set. Unlike in the time series setting, where prediction is evaluated at a single time point, here we use the reference set $\UU^2$. This choice is essentially arbitrary, since it suffices to identify the Fourier transform of the optimal prediction kernel almost everywhere, and the Fourier transform of $\indicator_{\UU^2}$ is non-zero almost everywhere.

\begin{assumption}[Existence of a totally finite prediction kernel]\label{assumption:predictionkernel}
    Assume that
    \begin{align*}
        \predkernelft{X\pred Z}=\sdf{XZ}\sdf{ZZ}^{-1}
    \end{align*}
    is the Fourier transform of a $1\times q$ row vector of totally finite signed measures
    \begin{align*}
        \Predkernel{X\pred Z}=(\Predkernel{X\pred Z,1},\ldots,\Predkernel{X\pred Z,q}).
    \end{align*}
\end{assumption}

\begin{proposition}[Best linear prediction]\label[proposition]{prop:bestprediction}
    Let $Z=(Z_1,\ldots,Z_q)^\top$ be a vector of covariate processes.
    Under \cref{assumption:secondorder,assumption:sdfL1,assumption:invertiblespectra,assumption:partialcL1,assumption:predictionkernel}, let $\Predkernel{X\pred Z}$ be as in \cref{assumption:predictionkernel}. Then
    \begin{align*}
        \Predkernel{X\pred Z} \in \argmin_{\Predkernel{} \in \left(\totallyfinite{\RR^2}\right)^{1\times q}} \EE{\left(N_X^0(\UU^2) - \sum_{j=1}^q \int_{\RR^2} \Predkernel{j}(\UU^2-x) N_{Z_j}^0(\de x) \right)^{\hspace{-0.25em}2}\;},
    \end{align*}
    where $\left(\totallyfinite{\RR^2}\right)^{1\times q}$ is the set of length $q$ row vectors of totally finite signed Borel measures on $\RR^2$.
     \begin{proof}
        See \cref{app:bestprediction}.
    \end{proof}
\end{proposition}
Intuitively, the formula $\predkernelft{X\pred Z}=\sdf{XZ}\sdf{ZZ}^{-1}$ is the spectral analogue of the ordinary regression coefficient $\hat\beta=\cov{X,Z}/\var{Z}$, applied wavenumber-by-wavenumber.
Therefore the best linear prediction of $N_X^0$ from $N_Z^0$ is given by $\Predprocess{X\pred Z}^0(B) = \sum_{j=1}^q \int_{\RR^2} \Predkernel{X\pred Z,j}(B-x) N_{Z_j}^0(\de x)$ for bounded $B\in\borel{\RR^2}$.
Notice here $\Predkernel{X\pred Z,j}$ is the $j$-th component of $\Predkernel{X\pred Z}$, which is a row vector of signed measures, and is not in general equal to $\Predkernel{X\pred {Z_j}}$, which would be the prediction kernel obtained by only using $Z_j$ as a covariate process.

The residual process is then defined for bounded $B\in\borel{\RR^2}$ as $\residualprocess{X\pred Z}(B) = N_X(B) - \Predprocess{X\pred Z}^0(B)$, so that $\EE{\residualprocess{X\pred Z}(B)}=\intensity{X}\leb{B}$.
Unlike in the time series setting, however, the uncentred residual process is the more natural object for the $K$ function interpretation, because it retains the baseline intensity $\intensity{X}$, which results in an interpretation of the partial $K$ function as the usual $K$ function minus an adjustment term.
So long as $B$ is bounded, $\residualprocess{X\pred Z}(B)$ is well defined, as we have already established that $N_X(B)$ and $\Predprocess{X\pred Z}^0(B)$ are finite almost surely.

All that remains is to show that the residual process is connected to the partial $C$ function as we would expect. Whilst the partial spectral density function is often defined through residual processes, since the predictors are the same for both residual processes it is equivalent to work with one residual process and the other original process. 
This yields a more useful interpretation in the point process setting. In particular, for $r\geq 0$ define

\begin{align}
    \Cfunc{\residualprocess{X\pred Z}, Y}(r) & = \EE{\int_{\UU^2} \residualprocess{X\pred Z}(r\zerosphere{2}+y) N_Y(\de y)} - \intensity{X}\intensity{Y}\leb{r\zerosphere{2}}.
\end{align}

Importantly, this $C$ function is only evaluated for finite $r$, so the integration is over bounded sets, ensuring the residual process is well defined for these sets. We do not require the reduced covariance measure of the residual process, which may or may not exist (see \cite{daley2003introduction}, Chapter 8).
In addition, the $C$ function of this residual process is equivalent to the partial $C$ function constructed in \cref{sec:partial:definition}.

\begin{proposition}[Equivalence of partial and residual $C$ functions]\label[proposition]{prop:partialcov:isresidualcov}
    Under \cref{assumption:secondorder,assumption:sdfL1,assumption:invertiblespectra,assumption:partialcL1,assumption:predictionkernel}, we have for all $r\geq 0$ that
         $\Cfunc{XY\pred Z}(r) = \Cfunc{\residualprocess{X\pred Z}, Y}(r).$
     \begin{proof}
        See \cref{sec:partialcov:isresidualcov}.
    \end{proof}
\end{proposition}

The usual $K$ function has a clean interpretation in terms of typical points, so that $\lambda_X\Kfunc{XY}(r)$ is the expected number of points of type $X$ within a distance $r$ of a typical point of type $Y$.
This interpretation arises from Palm theory \citep[see e.g.][Section C.2]{moller2003statistical}, which allows us to think about the distribution of a point process conditional on a point being at a certain location \citep[][Section 4.1]{illian2008statistical}.
In fact, we can construct a similar interpretation for the partial $K$ function.

In particular, writing $\residualprocess{X\pred Z}(B)=N_X(B)-\Predprocess{X\pred Z}^0(B)$ for the residual process, we have from \cref{prop:partialcov:isresidualcov} that
\begin{align}
    \intensity{X}\Kfunc{XY\pred Z}(r)
     & = \intensity{Y}^{-1} \left(\EE{\int_{\UU^2} \residualprocess{X\pred Z}(r\zerosphere{2}+y) N_Y(\de y)} - \intensity{X}\intensity{Y}\leb{r\zerosphere{2}}\right) \nonumber \\
     & = \mathbb{E}_0^Y\left[\residualprocess{X\pred Z}(y+r\zerosphere{2})\right] \nonumber \\
     & =\lambda_X K_{XY}(r)-  \mathbb{E}_0^Y\left[\Predprocess{X\pred Z}^0(r\zerosphere{2})\right]\label{eq:partialKpalm}
\end{align}
where $\mathbb{E}_0^Y$ denotes the expectation conditional on there being a point of type $Y$ at the origin.
Therefore, we can interpret $\intensity{X}\Kfunc{XY\pred Z}(r)$ as the expected number of points of type $X$ within a distance $r$ of a typical point of type $Y$ minus the $Z$-based linear prediction of the excess of $X$ within a distance $r$ of a typical point of type $Y$. Here the excess linear prediction, $\Predprocess{X\pred Z}^0$, is the linear prediction minus the intensity $\intensity{X}$, so that it has mean zero and in particular it can be negative. 
However, its conditional mean on a point of type $Y$ being at the origin, the last term in \cref{eq:partialKpalm}, is not necessarily zero.

\subsection{Some theoretical examples}\label{sec:partial:examples}
The main motivation for using signed measures, $\Predkernel{}$, for the prediction kernel is that in some situations we can predict point locations exactly.
For example, say that we have a process $Z$ and we make $X$ by taking all the points in $Z$ and applying the same deterministic shift, $s\neq 0$, to all points, so that $x=z+s$ (or $N_X(A) = N_Z(A-s)$).
Then we obtain $\reducedcumulantmeasure{XZ}(B) = \reducedcumulantmeasure{ZZ}(B-s)$, so $\sdf{XZ}(\freq) = e^{-2\pi i \ip{\freq}{s}}\sdf{ZZ}(\freq)$ and therefore $\predkernelft{X\pred Z}(\freq) = e^{-2\pi i \ip{\freq}{s}}$.
This is the Fourier transform of the Dirac measure centred at $s$, $\delta_s(A)=\indicator_A(s)$. Furthermore, for bounded $A\in\borel{\RR^2}$
\begin{align*}
    \Predprocess{X\pred Z}(A) 
    & = \intensity{X}\leb{A} + \int_{\RR^2} \delta_s(A-z) N_Z^0(\de z) \\
    & = \int_{\RR^2} \delta_z(A-s) N_Z(\de z) = N_Z(A-s) = N_X(A).
\end{align*}
So, as expected, we can perfectly predict the locations of type $X$ from type $Z$ in this setting.
Notice also that in this case (because $s\neq 0$), \cref{assumption:sdfL1} is violated, as $\sdf{XZ}(\freq) = e^{-2\pi i \ip{\freq}{s}}\sdf{ZZ}(\freq)$, and $\reducedmomentmeasure{XZ}(\set{0})=0$, so $\sdf{XZ}(\freq) - \reducedmomentmeasure{XZ}(\set{0})$ is not integrable.
In this case, because we have perfect prediction the centred residual process is almost surely the zero measure, meaning that its reduced moment measure is also the zero measure, so in particular $\reducedmomentmeasure{XX\pred Z}(\set{0})=0$, not $\lambda_X$.
However, such a setting is somewhat pathological, and the integrability assumption does hold for a wide variety of models.

Usually $\Predkernel{X\pred Z}$ will have a density, which we denote by $\predkernel{X\pred Z}$.
In this case, $\Predprocess{X \pred Z}$ also has a density 
\begin{align*}
    \predprocess{X\pred Z}(u)
     & = \intensity{X} + \int_{\RR^2} \predkernel{X\pred Z}(u-z) N_{Z}^0(\de z)                                                 \\
     & = \intensity{X} - \intensity{Z} \int_{\RR^2} \predkernel{X\pred Z}(u) \de u + \sum_{z \in \points_Z} \predkernel{X\pred Z}(u-z),
\end{align*}
for all $u\in\RR^2$.
So we can see $\predprocess{X\pred Z}(u)$ as being a linear prediction of the intensity of $X$ at a point $u$ in space made by putting a kernel on every point of type $Z$ plus some intercept term.
When $X$ arises from a Neyman-Scott process with clusters centred at $Z$, where the displacement of points within each cluster has a distribution $\clusteroffset{X}$, with the number of points per cluster being $\clusternumber{X}$, we have $\Predkernel{X\pred Z} = \clusteraverage{X} \clusteroffset{X}$ (see \cref{prop:pred:kernel}). When $\clusteroffset{X}$ admits a density, $\clusterdens{X}$, then we have $\predprocess{X\pred Z}(u) = \sum_{z \in \points_Z} \predkernel{X\pred Z}(u-z) = \clusteraverage{X} \sum_{z \in \points_Z}  \clusterdens{X}(u-z)$, for all $u\in\RR^2$.
When $X$ and $Z$ are independent, $\sdf{XZ}=0$, so $\predprocess{X\pred Z}(u)  = \intensity{X}$ for all $u\in\RR^2$.

\subsection{Limitations}
It is important to be aware of the limitations of any statistic.
For the statistics proposed here, the first limitation is that we are considering only second-order properties of the processes.
In the case of the usual $K$ function, \cite{baddeley1984cautionary} provided a nice example of a univariate process whose $K$ function is identical to that of a Poisson process, but which is certainly not a Poisson process.
This limitation extends also to the partial statistics proposed here.

For partial statistics, there is an additional, stronger, limitation.
For random vectors, \cite{baba2004partial} characterise the difference between partial and conditional correlation, which are not necessarily equal in the non-Gaussian setting.
Similarly, if the $\Kfunc{XY}(r) \neq \pi r^2$ then $X$ and $Y$ are dependent.
However, it is possible to have $\Kfunc{XY\pred Z}(r) \neq \pi r^2$, even when $X$ and $Y$ are conditionally independent given $Z$.
This is because the partial $K$ function only accounts for linear trends in conditional means.
It has been claimed in the literature that partial statistics can provide tests of conditional orthogonality/independence \citep{eckardt2019analysing}.
Unfortunately, this is not true, as can be seen from the counter example in \cref{app:counterexample}, and it is important that practitioners are aware of this limitation.
Nonetheless, partial statistics can still be useful to account for covariate processes, especially during exploratory analysis.

\section{Examples}
We illustrate the differences between the $K$ function and the partial $K$ function with simulated examples. We begin with bivariate systems and then consider trivariate systems. For illustration, we compare both the $L$ function and partial $L$ function to null confidence intervals for the $L$ function. Whilst analogous results for the pair correlation function are given in \cref{app:simulation:pcf}, estimates of the pair correlation function are less reliable than those for the $K$ function, as the latter is cumulative. The specific parameter choices for each model are given in \cref{app:simulation_details}.

\subsection{Examples for intraprocess statistics}

Consider a simple predator--prey-like system, where a prey species $Y$ is observed and a predator species $X$ may cluster around it. We consider three scenarios: predators that are conditionally independent given the prey, predators that additionally cluster with each other (packs), and predators that exhibit short-range repulsion (solitary). To generate such examples, we take the prey process $Y$ to be Poisson with intensity $\lambda_Y$, and first generate an intermediate process $X_0$ by placing Gaussian clusters around points of type $Y$, with Poisson($\eta_{X_0}$) offspring per parent. In the independent case we set $X=X_0$. In the packs case we generate a second clustered process around $X_0$. In the solitary case we thin $X_0$ using a marked hard-core-type interaction, so that points close to higher-marked neighbours are removed with high probability. 

\begin{figure}[h]
    \centering
    \begin{tikzpicture}[node distance=0.2cm and 0.5cm]
        \node[recordednode] (Y_1) {$Y$};
        \node[latentnode, right=of Y_1] (X_1_0) {$X_0$};
        \node[recordednode, right=of X_1_0] (X_1) {$X$};
        \draw[clusterinteraction] (Y_1) -- (X_1_0);
        \draw[equalsinteraction] (X_1_0) -- (X_1);
        \node[above=of X_1_0] {independent};

        \node[recordednode, right=4cm of Y_1] (Y_2) {$Y$};
        \node[latentnode, right=of Y_2] (X_2_0) {$X_0$};
        \node[recordednode, right=of X_2_0] (X_2) {$X$};
        \draw[clusterinteraction] (Y_2) -- (X_2_0);
        \draw[clusterinteraction] (X_2_0) -- (X_2);
        \node[above=of X_2_0] {packs};

        \node[recordednode, right=4cm of Y_2] (Y_3) {$Y$};
        \node[latentnode, right=of Y_3] (X_3_0) {$X_0$};
        \node[recordednode, right=of X_3_0] (X_3) {$X$};
        \draw[clusterinteraction] (Y_3) -- (X_3_0);
        \draw[repulsioninteraction] (X_3_0) -- (X_3);
        \node[above=of X_3_0] {solitary};

        \diagramkey{X_2_0}
    \end{tikzpicture}
    \label{fig:predator_prey_dag}
    \caption{Schematic of the predator prey system with three different interaction types.}
\end{figure}
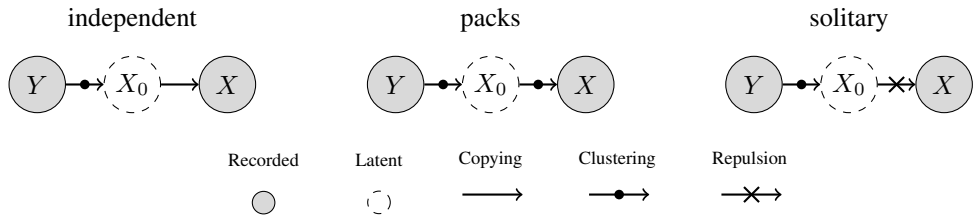

Examples of such processes are shown in \cref{fig:predator_prey_example}. In all three cases, the usual $L$ function indicates clustering. After accounting for the prey process, however, the partial $L$ function separates the three regimes: it is close to the independence line $L(r)=r$ in the independent case, above it in the packs case, and below it in the solitary case. Thus the partial $L$ function recovers the additional within-predator structure that is hidden in the marginal $L$ function.

\begin{figure}[]
    \centering
    \includegraphics[width=1\textwidth]{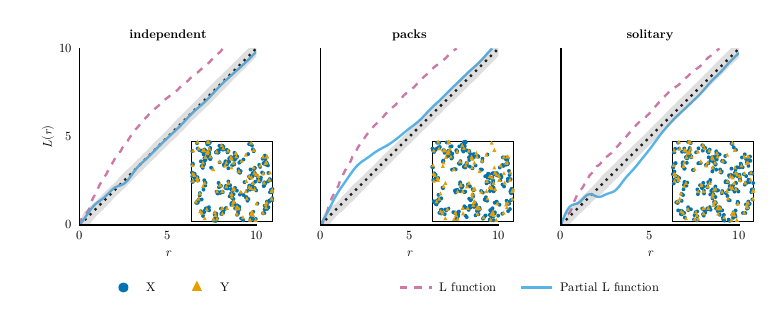}
    \caption{Example of a predator prey system with three different interaction types. The main plots are the $L$ function and partial $L$ function between the predator process ($X$) in the latter case accounting for the prey process ($Y$). The left column shows the first scenario, where the predators do not interact with each other. The middle column shows the second scenario, where the predators hunt in packs. The right column shows the third scenario, where the predators do not like to be near each other. The envelopes are 95\% confidence envelopes for the $L$ function under independence, using the MAD envelopes proposed by \cite{myllymaki2017global}.}
    \label{fig:predator_prey_example}
\end{figure}

\subsection{Examples for interprocess statistics}\label{sec:example:tri}

We next consider a three-process system with processes $X$, $Y$ and $Z$. As before, let $Z$ be Poisson, and generate $Y$ from $Z$ by clustering. We then construct $X$ in one of three ways: independent offspring, where $X$ and $Y$ independently cluster around $Z$; co-operative offspring, where $X$ clusters around $Y$; and antagonistic offspring, where $X$ is first generated from $Z$ and then thinned by $Y$.

Realisations and estimated $L$ and partial $L$ functions are shown in \cref{fig:trivariateexample}. The ordinary $L$ functions largely indicate cross-clustering throughout, whereas the partial $L$ functions recover the underlying structure. In particular, for $X$ versus $Y$ accounting for $Z$, the partial $L$ function shows no interaction, clustering and repulsion in the three scenarios, respectively. For $X$ versus $Z$ accounting for $Y$, it shows clustering, no remaining interaction, and clustering again, matching the way the models are constructed. Finally, for $Y$ versus $Z$ accounting for $X$, clustering remains in all three cases, since $Y$ is always generated from $Z$ in the same way. These examples illustrate how the partial $L$ function can distinguish direct interactions from dependence induced by other processes.
\begin{figure}[]
    \centering
    \begin{tikzpicture}[node distance=0.2cm and 1cm]
        \node[recordednode] (Z_1) {$Z$};
        \node[recordednode] (X_1) [below right=of Z_1] {$X$};
        \node[recordednode] (Y_1) [above right=of Z_1] {$Y$};
        \draw[clusterinteraction] (Z_1) -- (Y_1);
        \draw[clusterinteraction] (Z_1) -- (X_1);
        \node[above=of Y_1, xshift=-0.75cm] {independent offspring};

        \node[recordednode,right=3.5cm of Z_1] (Z_2) {$Z$};
        \node[recordednode] (X_2) [below right=of Z_2] {$X$};
        \node[recordednode] (Y_2) [above right=of Z_2] {$Y$};
        \draw[clusterinteraction] (Z_2) -- (Y_2);
        \draw[clusterinteraction] (Y_2) -- (X_2);
        \node[above=of Y_2, xshift=-0.75cm] {co-operative offspring};

        \node[recordednode,right=3.5cm of Z_2] (Z_3) {$Z$};
        \node[latentnode] (X_3_0) [below right=of Z_3] {$X_0$};
        \node[recordednode] (X_3) [right=of X_3_0] {$X$};
        \node[recordednode] (Y_3) [above right=of Z_3] {$Y$};
        \draw[clusterinteraction] (Z_3) -- (Y_3);
        \draw[clusterinteraction] (Z_3) -- (X_3_0);
        \draw[equalsinteraction] (X_3_0) -- (X_3);
        \draw[repulsioninteraction] (Y_3) -- ($(X_3_0)!0.5!(X_3)$);
        \node[above=of Y_3] {antagonistic offspring};

        \diagramkey{X_2}
    \end{tikzpicture}
    \label{fig:trivariate_dag}
    \caption{Schematic of the trivariate system with three different interaction types.}
\end{figure}
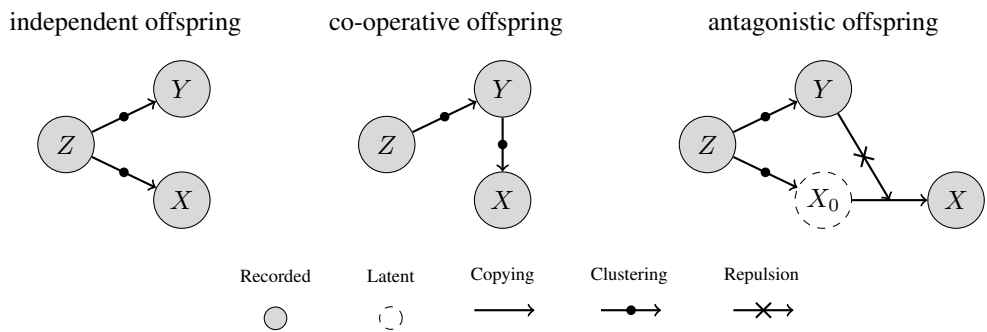
\begin{figure}[]
    \centering
    \includegraphics{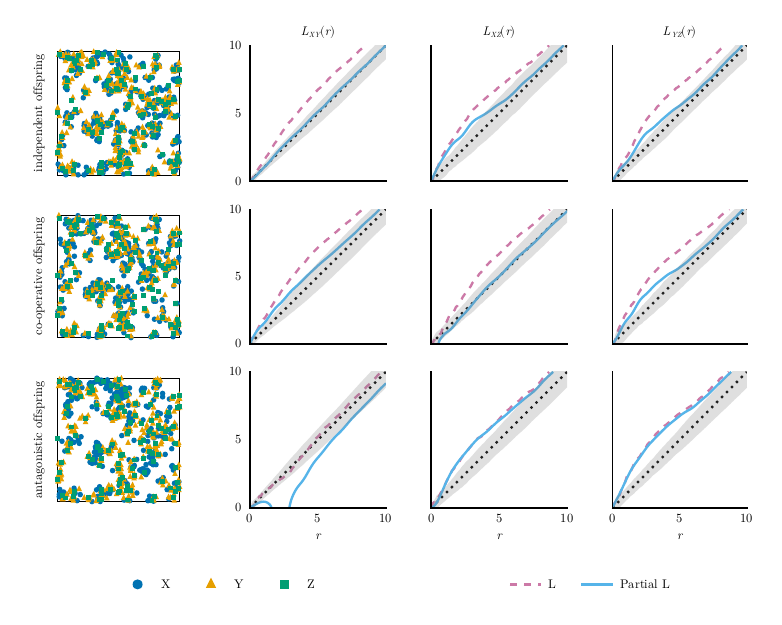}
    \caption{Example of a trivariate system with three different interaction types. The first column shows example processes, the second column shows the $L$ function and partial $L$ function between process $X$ and process $Y$ (possibly accounting for process $Z$), the third column shows interactions between $X$ and $Z$ (accounting for $Y$) and the final column shows interactions between $Y$ and $Z$ (accounting for $Z$).}
    \label{fig:trivariateexample}
\end{figure}



\section{Estimation}\label{sec:estimation}

\subsection{Spectral and partial spectral estimation}
Say we observe data on some bounded observational region $\region$.
To construct estimators for the partial $K$ function, we first need estimators of the spectral density function, and then the partial spectral density function.
We use the multitaper method \citep{thomson1982spectrum, grainger2025spectral} which provides reliable estimators of the spectral density function from observations on arbitrary domains.
Given a family of tapers, $h_1,\ldots,h_M$ supported within $\region$, with $M>P$ tapers, we let
\begin{align}
    \dft{X}{m}(\freq) & = \int_{\RR^2} h_m(x) e^{-2\pi i \ip{x}{\freq}} N_X(\de x) - \hat\lambda_X \int_{\RR^2} h_m(x) e^{-2\pi i \ip{x}{\freq}} \de x, & \freq \in \RR^2. \nonumber
\end{align}
The multitaper periodogram is subsequently defined as
\begin{align}
    \mtpgram{XY}(\freq) & = \frac{1}{M} \sum_{m=1}^M \dft{X}{m}(\freq) \conj{\dft{Y}{m}(\freq)}, & \freq \in \RR^2.
\end{align}
\subsection{Debiasing}
The basic plug-in estimator of the partial spectra is biased even when the spectral density estimates are unbiased. In particular, consider the function mapping spectral density matrices to their partial equivalent,
\begin{align}
    \sdftopartial{XY\pred Z}[f] = \sdf{XY} - \sdf{XZ} \sdf{ZZ}^{-1} \sdf{ZY}
\end{align}
where $\sdf{ZZ}^{-1}$ refers to pointwise matrix inversion (not the inverse of the function $\sdf{ZZ}$).
Then given some regularity conditions, for fixed $M$, $P$, and growing domain, the plug-in estimator satisfies
\begin{align}
    \EE{\sdftopartial{XY\pred Z}[\hat{f}]} = \left(1-\frac{P_Z}{M}\right) \sdftopartial{XY\pred Z}[f] + o(1)
\end{align}
where $P_Z$ is the number of processes in $Z$ (see \cref{theorem:bias:partial}).
Therefore, we use
\begin{align}
    \mtpgram{XY\pred Z} & = \left(\frac{M}{M-P_Z}\right) \sdftopartial{XY\pred Z}[\hat{f}].
\end{align}
This is similar to the bias corrections required for estimating partial coherence \citep{medkour2009graphical}, though it differs slightly because we are directly interested in the partial spectra.
Recall that we already require more tapers than processes, and therefore $M>P_Z$.
When we compute partial $L$ functions, we perform another non-linear transformation, which can warp this bias in unusual ways especially for short distances, which results in bias that is not just a percentage reduction, but which looks like a meaningful feature.
A specific example of this phenomenon is given in \cref{app:simulation:debiasing}.
The debiasing we propose resolves this problem, and removes such spurious features.

The estimated spectral density matrix may not be invertible at all wavenumbers, though this is typically resolved by using more tapers; if it remains non-invertible, a generalized inverse can be used and the results below remain valid. 

\subsection{Partial $K$ function estimation}

Given an estimate of the partial spectra, we estimate the partial $C$ function by first rotationally averaging in wavenumber and then evaluating the corresponding radial inverse transform. 
Rather than directly discretizing the two-dimensional inversion integral in \cref{eq:def:partialc}, we replace the partial spectra by a rotationally averaged approximation and treat this approximation as piecewise constant on radial bins. The resulting radial integral can then be computed exactly. This works well when the Bessel kernel varies more rapidly than the underlying spectra, which is typically the case in practice.

Let $\Omega\subset\RR^2$ denote the finite grid of wavenumbers on which the estimated partial spectra are evaluated. In practice, we rotationally average these estimates onto a one-dimensional radial grid and then evaluate the corresponding radial inverse transform. For some $\waveradiispacing{} > 0$ and $\waveradiimax > 0$, write $\waveradiigrid=(\waveradiispacing/2+\waveradiispacing\ZZ)\cap[0,\waveradiimax]$ for the radial wavenumber grid.
For each $\waveradii\in\waveradiigrid$, define the corresponding shell of wavenumbers by
$
    \shellwavenumbers{\waveradii} = \set{\freq\in\Omega:\norm{\freq}\in(\waveradii-\waveradiispacing/2,\waveradii+\waveradiispacing/2]}.
$
We then set
\begin{align*}
    \mtpgram{XY}^{(rot)}(\waveradii) = \frac{1}{\abs{\shellwavenumbers{\waveradii}}}
    \sum_{\freq\in\shellwavenumbers{\waveradii}} \mtpgram{XY}(\freq),
\end{align*}
the rotationally averaged estimate at $\waveradii\in\waveradiigrid$.
Our partial $C$ function estimate is then
\begin{align*}
    \Cfuncest{XY\pred Z}(r) &= \sum_{\waveradii\in\waveradiigrid}
    \left[\mtpgram{XY\pred Z}^{(\mathrm{rot})}(\waveradii) - \hat\lambda_X\delta_{X,Y}\right] Q_r(\kappa), & r\geq 0,
\end{align*}
where $Q_r(\kappa) = \besselj{0}\left(2\pi r\left[\waveradii-{\waveradiispacing}/{2}\right]\right) - \besselj{0}\left(2\pi r\left[\waveradii+{\waveradiispacing}/{2}\right]\right)$.
For a simple point process, $\reducedmomentmeasure{XY}(\set{0}) = \lambda_X\delta_{X,Y}$, and so $\hat\lambda_X\delta_{X,Y}$ is the natural plug-in estimate of the atom term. The partial $K$ function estimate is then obtained from \cref{eq:c2k} by replacing $\Cfunc{XY}$ with $\Cfuncest{XY\pred Z}$, and similarly the partial pair correlation and partial $L$ functions are obtained from \cref{eq:c2g} and the signed $L$ transformation.

\subsection{Practical implementation}

Additional implementation details, including the choice of hyperparameters, are given in \cref{app:estimation}.
The computational complexity of the estimation procedure is competitive with standard methods for estimating the $K$ function. If $n$ is the total number of points, $P$ the number of processes, $M$ the number of tapers and $R$ the number of spatial distances at which the $K$ function is evaluated, then computing the multitaper periodogram costs $O(PMn\log n)$, up to factors depending on the desired NUFFT tolerance when computing the non-uniform FFT \citep[see, for example,][]{dutt1993fast}. Computing the partial spectra costs $O(P^3|\Omega|)$, where $|\Omega|$ is the number of wavenumbers considered, and computing the $K$ function from the spectral estimate costs $O(R|\Omega|)$. If we use a fixed highest wavenumber and choose the wavenumber grid so that its spacing scales proportionally to the inverse side length of the bounding box of the observational window, then $|\Omega|$ scales as $O(n)$ as the region size grows. Therefore the overall complexity is $O(PMn\log n + P^3 n + Rn)$. Standard direct approaches for computing the $K$ function have complexity $O(P^2 n^2)$, and so for large $n$ and fixed $P$, our approach is faster.

\section{Accounting for covariates and long scale phenomena}\label{sec:covariates}

In many applications, it is also important to be able to account for covariates.
The partial $K$ function developed here can be readily extended to add covariate effects by simply including them as covariate processes. 
In particular, suppose that in addition to the point processes $N_1,\ldots,N_P$, we also observe random fields $Y_1,\ldots,Y_Q$ over the same region. 
For each field $Y_j$, define the corresponding random measure $\xi_j$ by $\xi_j(B) = \int_B Y_j(u) \de u$ for $ B \in \borel{\RR^2}$.
Then the joint system consisting of the point processes $N_1,\ldots,N_P$ and the random measures $\xi_1,\ldots,\xi_Q$ can be treated within the same spectral framework \citep[see][for example]{daley2003introduction,grainger2025spectral}.
In particular, the partial spectra are formed exactly as before from the joint spectral density matrix of this enlarged system, and the resulting partial $C$ and $K$ functions are obtained by the same inverse transformation as in the point process case.
The analogues of \cref{prop:sdf2c,prop:bestprediction,prop:partialcov:isresidualcov} hold for this enlarged system.

Estimation of the required spectra and cross-spectra may be carried out using the methodology of \cite{grainger2025spectral}. The main additional practical issue is that a random field observed on a grid only provides spectral information up to its Nyquist frequency. If relevant atom-corrected point-process spectral content, $\sdf{XY}(\freq)-\reducedmomentmeasure{XY}(\set{0})$, remains beyond this range, one may either interpolate the field onto a finer grid or ignore the unresolved field contribution when forming the predictor. In the latter case, the full spectral matrix is singular, but this is equivalent to treating the unresolved field contribution as a zero-variance (deterministic) component and forming the predictor from the reduced spectral matrix for the remaining components only. A simulated comparison with the inhomogeneous $K$ function is given in \cref{app:simulation:inhom}, where the partial $K$ function again recovers the underlying structure.

\section{Exploratory Analysis of Lansing Woods data}
We illustrate the partial analysis technique on the Lansing Woods data \citep{gerrard1969competition}.
This canonical dataset consists of the locations of trees present in a forest plot in Lansing Woods, Michigan, USA.
There are five named species of tree present, as well as a miscellaneous category.
We account for the miscellaneous category as a covariate process, but do not consider it for analysis as it is hard to interpret biologically (being an aggregate of multiple rare species).
The data has been widely studied in the spatial statistics literature, see \cite{baddeley2015spatstat} for references.
The data is available standardised to the unit square in the \texttt{spatstat} R package \citep{baddeley2015spatstat}.
The top row of \cref{fig:lansing_marginal_analysis} shows the locations of the six different tree species present in the Lansing woods dataset (with one of the species being miscellaneous).
Whilst the individual species seem to be slightly inhomogeneous, our assumptions of homogeneity are about the system as a whole, and looking at one species at a time would be misleading, as it would detect conditional inhomogeneity, and we are not claiming the processes are conditionally homogeneous given the other processes.
However, there could be dependence on unobserved covariates, and thus the results should be interpreted with this caveat in mind.

The bottom row of \cref{fig:lansing_marginal_analysis} shows the estimated marginal $L$ functions and partial $L$ functions for each species with itself, accounting for all other species.
The marginal $L$ functions indicate more clustering behaviour than the partial $L$ functions.
This is not surprising as we expect to see some of the clustering behaviour being explained by interactions with other species \citep{murrell2010does}.
\Cref{fig:lansing_cross_analysis} shows the cross $L$ functions and partial $L$ functions between different species, again accounting for all other species.
Here we again see that some of the clustering behaviour seen in the cross $L$ functions is explained away by the other species.
In some cases, the interaction essentially disappears, even slightly changing sign (e.g. Black Oak vs Hickory).
This general behaviour is expected, as the partial $L$ function has accounted for some of the other processes, meaning that some of the observed clustering/repulsion has been removed by accounting for the covariate processes.
The resulting ``sparser'' representation of the dependence structure between and within the processes can then be used to highlight the more important interactions.

\begin{figure}[!h]
    \centering
    \includegraphics[]{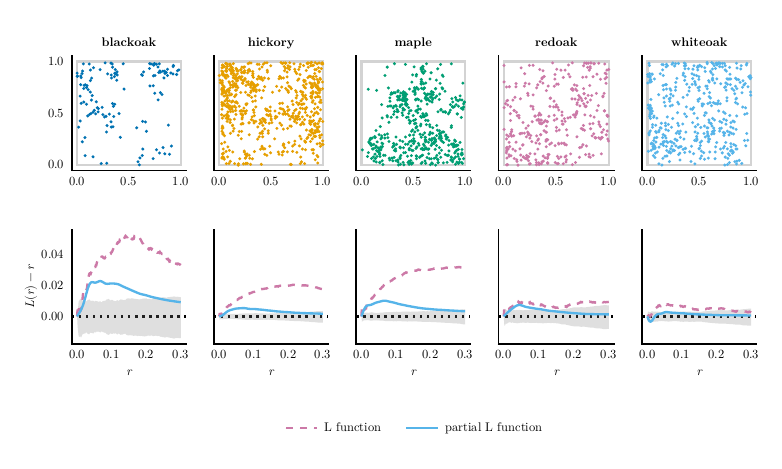}
    \caption{Lansing woods data (top row) and the estimated $L$ and partial $L$ functions within processes (bottom row).}
    \label{fig:lansing_marginal_analysis}
\end{figure}

\begin{figure}[!h]
    \centering
    \includegraphics[]{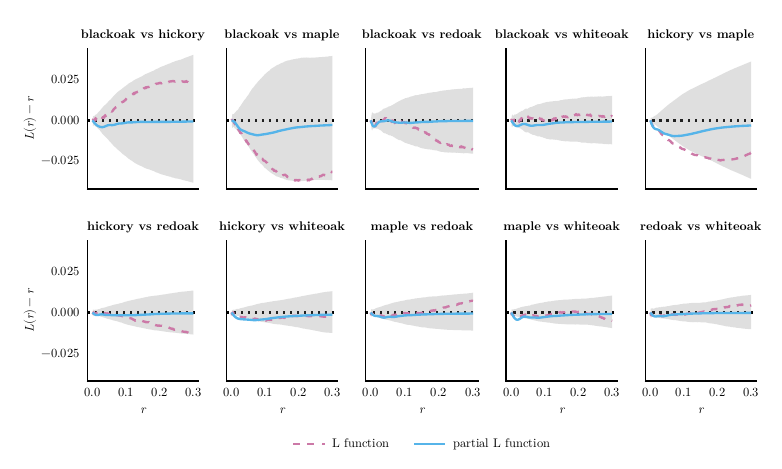}
    \caption{Estimated cross $L$ and partial $L$ functions for the Lansing woods data.}
    \label{fig:lansing_cross_analysis}
\end{figure}

\section{Discussion}

In this paper, we proposed the partial $K$ function, an extension of the usual $K$ function which accounts for some effects of the other processes involved in the system of interest.
The partial $K$ function is non-parametric, computationally efficient, and has a spatial interpretation analogous to that of the usual $K$ function.
It therefore provides a useful additional tool for exploring interactions in multivariate point pattern data.

There are nevertheless several important limitations.
First, the method is based on linear prediction and second-order structure, and so it is not a conditional notion of dependence.
In particular, partial $K$ functions cannot in general be interpreted as detecting conditional orthogonality or conditional independence.
Second, as the number of processes increases, obtaining stable estimates of the spectral matrix and its inverse becomes more challenging \citep{walden2000unified}.
Third, null resampling methods for these partial statistics remain an open problem.

We also assume that the processes are jointly homogeneous.
Care is therefore needed when interpreting the results, since unobserved large-scale covariates may still induce structure that is not accounted for.
At the same time, inhomogeneous methods are not without drawbacks: they require estimation of the intensity function, which can introduce additional variability and sensitivity to the choice of intensity estimator \citep[for example]{baddeley2000non,henrys2009inference}.
The partial $K$ function should therefore be viewed as complementary to inhomogeneous methods rather than as a replacement for them.
Covariate information can also be incorporated into the present framework by treating covariates as random fields, allowing some aspects of inhomogeneity to be addressed within the same methodology \citep{grainger2025spectral}.

There are several natural directions for further work.
One is an extension to marked point processes, where analogous partial summaries could be constructed using the corresponding marked spectral quantities.
Another is the development of anisotropic partial $K$ functions, obtained by replacing the isotropic ball with more general directional sets \citep[for example]{moller2016cylindrical} and using the corresponding Fourier inversion formula.
Both extensions fit naturally within the framework developed here and would broaden the range of applications of the method.

Overall, the partial $K$ function provides a practical way to account for other processes when exploring multivariate point pattern data, while retaining an interpretation on the same spatial scale as the usual $K$ and $L$ functions.

\section{Competing interests}
No competing interest is declared.

\section{Data availability}
The Lansing Woods data is available in the \texttt{R} package \texttt{spatstat} \citep{baddeley2015spatstat}.
The code to reproduce all the simulations and data analysis in this paper is available in the Zenodo repository \cite{jake_p_grainger_2025_17779726}.
Code implementing the methodology in \texttt{Julia} \citep{Julia-2017} is available in the \texttt{SpatialMultitaper.jl} package \citep{grainger_2025_17779550}, with an \texttt{R} \citep{R-2025} frontend package \citep{jake_p_grainger_2026_19594943}.



\section{Acknowledgments}
Sofia Olhede would like to thank the European Research Council under Grant CoG 2015-682172NETS, within the Seventh European Union Framework Program.


\clearpage
\setcounter{section}{0}
\renewcommand\thesection{S\arabic{section}}
\setcounter{equation}{0}%
\setcounter{theorem}{0}
\setcounter{lemma}{0}
\setcounter{corollary}{0}
\setcounter{proposition}{0}
\setcounter{definition}{0}
\setcounter{assumption}{0}
\setcounter{remark}{0}
\setcounter{step}{0}
\setcounter{condition}{0}
\setcounter{property}{0}
\setcounter{restrictions}{0}
\setcounter{example}{0}
\setcounter{algo}{0} 
\renewcommand\thetheorem{\text{S}\arabic{theorem}}
\renewcommand\thelemma{\text{S}\arabic{lemma}}
\renewcommand\thecorollary{\text{S}\arabic{corollary}}
\renewcommand\theproposition{\text{S}\arabic{proposition}}
\renewcommand\thedefinition{\text{S}\arabic{definition}}
\renewcommand\theassumption{\text{S}\arabic{assumption}}
\renewcommand\theremark{\text{S}\arabic{remark}}
\renewcommand\thestep{\text{S}\arabic{step}}
\renewcommand\thecondition{\text{S}\arabic{condition}}
\renewcommand\theproperty{\text{S}\arabic{property}}
\renewcommand\therestrictions{\text{S}\arabic{restrictions}}
\renewcommand\theexample{\text{S}\arabic{example}}
\renewcommand\thealgo{\text{S}\arabic{algo}}

\thispagestyle{empty}
\vbox{%
\hsize\textwidth
\linewidth\hsize
\vskip 0.1in
\makeatletter\@toptitlebar\makeatother
\centering
{\LARGE\sc Supplementary material for \makeatletter\@title\makeatother\par}
\makeatletter\@bottomtitlebar\makeatother
\vskip 0.1in
}
\renewcommand{\shorttitle}{Supplementary material}



\section{Table of notation}
\def\addsymbol #1: #2{$#1$ \> \parbox[t]{0.9\textwidth}{#2}\\}

\begin{tabbing}
    $d$~~~~~~~~~~~~~~~~\=\parbox[t]{0.9\textwidth}{Dimension of space.}\\
    \addsymbol P: {Number of point types.}
    \addsymbol X,Y,Z: {Point types.}
    \addsymbol N: {Ground process containing all point types.}
    \addsymbol N_X: {Point process of type $X$.}
    \addsymbol N_X^0: {Centred count measure, $N_X(A) - \intensity{X}\leb{A}$.}
    \addsymbol \intensity{X}: {Intensity of point process $N_X$.}
    \addsymbol r: {Distance parameter.}
    \addsymbol u: {Spatial lag variable.}
    \addsymbol s: {Spatial location variable.}
    \addsymbol \freq: {Frequency variable.}
    \addsymbol A,B: {Generic measurable sets.}
    \addsymbol \leb{B}: {Lebesgue measure of a Borel set $B$.}
    \addsymbol \RRd: {$d$-dimensional real space.}
    \addsymbol \NN_0: {Non-negative integers.}
    \addsymbol \borel{\RR^d}: {Borel sets on $\RR^d$.}
    \addsymbol \UU^2: {Unit square $[0,1]^2$.}
    \addsymbol \sphere{2}: {Unit ball in $\RR^2$.}
    \addsymbol \zerosphere{2}: {Unit ball minus the origin, $\sphere{2} \setminus \set{0}$.}
    \addsymbol \totallyfinite{\RR^2}: {Set of totally finite signed measures on $\RR^2$.}
    \addsymbol \reducedmomentmeasure{XY}: {Second-order reduced moment measure between point processes $N_X$ and $N_Y$.}
    \addsymbol \reducedcumulantmeasure{XY}: {Second-order reduced cumulant measure between point processes $N_X$ and $N_Y$.}
    \addsymbol \Kfunc{XY}: {Ripley's $K$ function between point processes $N_X$ and $N_Y$.}
    \addsymbol \Lfunc{XY}: {$L$ function between point processes $N_X$ and $N_Y$.}
    \addsymbol \paircf{XY}: {Pair correlation function between point processes $N_X$ and $N_Y$.}
    \addsymbol \Cfunc{XY}: {$C$ function between point processes $N_X$ and $N_Y$.}
    \addsymbol \sdf{XY}: {Spectral density function between point processes $N_X$ and $N_Y$.}
    \addsymbol \EE{\cdot}: {Expectation operator.}
    \addsymbol \var{\cdot}: {Variance operator.}
    \addsymbol \norm{\cdot}: {The $L^2$ norm.}
    \addsymbol \ip{x}{y}: {Dot product between $x$ and $y$.}
    \addsymbol \argmin: {Argument of minimum.}
    \addsymbol \besselj{0}(\cdot): {Bessel function of the first kind of order 0.}
    \addsymbol \besselj{1}(\cdot): {Bessel function of the first kind of order 1.}
    \addsymbol \Kfunc{XY\pred Z}: {Partial $K$ function between point processes $N_X$ and $N_Y$ accounting for point processes $N_Z$.}
    \addsymbol \Predprocess{X\pred Z}^0: {Centred prediction process.}
    \addsymbol \Predprocess{X\pred Z}: {Prediction process for points of type $X$ predicted linearly from point processes $N_Z$.}
    \addsymbol \predprocess{X\pred Z}: {Density of $\Predprocess{X\pred Z}$.}
    \addsymbol \residualprocess{X\pred Z}: {Residual process for points of type $X$ after linear prediction from point processes $N_Z$.}
    \addsymbol \centeredresidualprocess{X\pred Z}: {Centred residual process for points of type $X$ after linear prediction from point processes $N_Z$.}
    \addsymbol \Predkernel{X\pred Z}: {Prediction kernel for points of type $X$ predicted linearly from point processes $N_Z$.}
    \addsymbol \predkernel{X\pred Z}: {Density of $\Predkernel{X\pred Z}$.}
    \addsymbol \predkernelft{X\pred Z}: {Fourier transform of $\Predkernel{X\pred Z}$.}
    \addsymbol \predkernelest{X\pred Z}: {Estimate of $\predkernel{X\pred Z}$.}
    \addsymbol h_m: {The $m$th taper function.}
    \addsymbol \dft{X}{m}: {Tapered Fourier transform of point processes $N_X$ using taper $h_m$.}
    \addsymbol \pgram{XY}{m}: {Single taper periodogram between point processes $N_X$ and $N_Y$ using taper $h_m$.}
    \addsymbol \mtpgram{XY}: {Multitaper periodogram between point processes $N_X$ and $N_Y$.}
    \addsymbol M: {the number of tapers.}
    \addsymbol \wavenumberspacing{j}: {the number of wavenumbers used in the $j$th direction.}
    \addsymbol \wavenumbermax{j}: {the maximum wavenumber used in the $j$th direction.}
    \addsymbol \regionlength{j}: {the length of the bounding box of the observation window in the $j$th direction.}
\end{tabbing}

\section{Theory}
\subsection{Estimators and summary statistics in the $d$-dimensional case}\label{app:generaldim}
In the following subsections, we state the equivalent results in the $d$-dimensional case to those given in the main paper.

\subsubsection{Summary statistics}
Let $\sphere{d}$ be the unit sphere in $\RR^d$ and $\zerosphere{d} = \sphere{d} \setminus \set{0}$.
Write $A_{d-1}$ for the surface area of the unit sphere in $\RR^d$, then the volume and surface area are
\begin{align}
    \leb{\sphere{d}} & = \frac{\pi^{d/2}}{\Gamma(d/2+1)}, \qquad
    A_{d-1} = \frac{2 \pi^{d/2}}{\Gamma(d/2)},
\end{align}
respectively.

Best linear prediction is defined simply by replacing $\RR^2$ and $\UU^2$ with $\RR^d$ and $\UU^d$, respectively, so that
\begin{equation}
    \Predprocess{X\pred Z}^0(A) = \sum_{j=1}^q \int_{\RR^d} \Predkernel{X\pred Z, j}(A-x) N_{Z_j}^0(\de x),
\end{equation}
where
\begin{equation}
    \Predkernel{X\pred Z} \in \argmin_{\Predkernel{} \in \left(\totallyfinite{\RR^d}\right)^{1\times q}} \var{N_X^0(\UU^d) - \sum_{j=1}^q \int_{\RR^d} \Predkernel{j}(\UU^d-x) N_{Z_j}^0(\de x)}.
\end{equation}

The summary statistics are as follows:
\begin{align}
    \intensity{X}                & = \EE{N_X(\UU^d)}                                                                            \\
    \reducedmomentmeasure{XY}(B) & = \EE{\int_{\UU^d} N_X(B+x) N_Y(\de x)},                     \\
    \reducedcumulantmeasure{XY}(B) &= \reducedmomentmeasure{XY}(B) - \intensity{X}\intensity{Y}\leb{B} \\
    \Kfunc{XY}(r)                & = \intensity{X}^{-1} \intensity{Y}^{-1}\reducedmomentmeasure{XY}\left(r\zerosphere{d}\right) \\
    \Lfunc{XY}(r)                & = \mathrm{sgn}(K_{XY}(r))\left(\frac{\abs{K_{XY}(r)}}{\leb{\sphere{d}}}\right)^{1/d}         \\
    \paircf{XY}(r)               & = \frac{\Kfunc{XY}'(r)}{A_{d-1} r^{d-1}}                                                     \\
    \Cfunc{XY}(r)                & = \reducedcumulantmeasure{XY}(r\zerosphere{d})                                               \\
    \sdf{XY}(\freq)              & = \int_{\RR^d} e^{-2\pi i \ip{u}{\freq}} \reducedcumulantmeasure{XY}(\de u)
\end{align}

Some useful relations used in the main paper in the $d$-dimensional case are:
\begin{align}
    \Kfunc{XY}(r)  & = \frac{\Cfunc{XY}(r)}{\intensity{X} \intensity{Y}} + \leb{r\zerosphere{d}}                                                                         \\
    \paircf{XY}(r) & = \frac{\Cfunc{XY}'(r)}{A_{d-1} r^{d-1} \intensity{X} \intensity{Y}} + 1                                                                            \\
    \Cfunc{XY}(r)  & = \int_{\RR^d} \left(\frac{r}{\norm{\freq}}\right)^{d/2} \besselj{d/2}(2\pi\norm{\freq}r) [\sdf{XY}(\freq) - \reducedmomentmeasure{XY}(\set{0})] \de\freq \\
    \Cfunc{XY}'(r) & = 2\pi \int_{\RR^d} \norm{\freq} \left(\frac{r}{\norm{\freq}}\right)^{d/2} \besselj{d/2-1}(2\pi\norm{\freq}r) [\sdf{XY}(\freq) - \reducedmomentmeasure{XY}(\set{0})] \de\freq
\end{align}

\subsubsection{Estimators}\label{app:generaldim:estimator}
For the second estimator we construct, with an additional rotational averaging, we have in the general case
\begin{align}
    \Cfuncest{XY}(r)
        & = \sum_{\kappa \in K} \mtpgram{XY}^{(rot)}(\kappa) \int_{\RRd} \indicator_{(\kappa-\frac{s}{2},\kappa+\frac{s}{2}]}(\norm{\freq}) \left(\frac{r}{\norm{\freq}}\right)^{d/2} \besselj{d/2}(2\pi \norm{\freq}r) \de \freq \\
        & = \sum_{\kappa \in K} \mtpgram{XY}^{(rot)}(\kappa) A_{d-1} r^{d/2} \int_{\kappa-s/2}^{\kappa+s/2} {x^{d/2-1}} \besselj{d/2}(2\pi xr) \de x \\
        &= \sum_{\kappa \in K} \mtpgram{XY}^{(rot)}(\kappa) [w_d(r,\kappa+s/2) - w_d(r,\kappa-s/2)]
\end{align}
where
\begin{align}
    w_d(r,y) &= A_{d-1} r^{d/2} \int_0^y x^{d/2-1} \besselj{d/2}(2\pi xr) \de x \\
    & = \frac{2 (\pi r y)^d}{\Gamma(d/2+1)^2} \;\oneFtwo{d/2}{d/2+1,d/2+1}{-(\pi r y)^2}.
\end{align}
When $d=3$ because $\besselj{3/2}(x) = \sqrt{2/\pi x}(\sin(x)/x-\cos(x))$ (see \cite{abramowitz1948handbook} 10.1.11, for example),
\begin{align}
    w_3(r,x) &= 
    4 \pi r^{3/2} \int_0^y x^{1/2} \besselj{3/2}(2\pi r x) \de x \\
    &= 4 \pi r^{3/2} \int_0^y x^{1/2} \sqrt{\frac{2}{2\pi^2 r x}}\left(\frac{\sin(2\pi r x)}{2\pi r x} - \cos(2\pi r x)\right) \de x \\
    &= 4 r \int_0^y \frac{\sin(2\pi r x)}{2\pi r x} - \cos(2\pi r x) \de x \\
    &= \frac{2}{\pi} \int_0^{2\pi r y} \frac{\sin(z)}{z} - \cos(z) \de z \\
    &= \frac{2}{\pi} (\mathrm{Si}(2\pi r x) - \sin(2\pi r x))
\end{align}
where 
\begin{align}
    \mathrm{Si}(x) = \int_{0}^x \frac{\sin(y)}{y} \de y.
\end{align}
For completeness, when $d=1$,
\begin{align}
    w_1(r,x) = 2\mathrm{Si}(2\pi rx)/\pi.
\end{align}

\subsection{Proof of \cref{prop:sdf2c}}\label{app:spectratocfunction}
\begin{proof}
Following \cite{daley2003introduction}, write
\begin{equation}
    \tilde{e}_\lambda(k) = \prod_{j=1}^d \frac{\lambda^2}{\lambda^2+k_j^2}.
\end{equation}

Then we have the inverse relation
\begin{equation}
    \reducedcumulantmeasure{XY}(B) = \lim_{\lambda \rightarrow \infty} \int_{\RR^d} \tilde{e}_\lambda(\freq) \tilde\indicator_{B}(\freq) \sdf{XY}(\freq) \de\freq,
\end{equation}
for continuity sets of $\reducedcumulantmeasure{XY}$ \citep{daley2003introduction}.
Define $D_{XY}(B) = \reducedcumulantmeasure{XY}(B\cap \set{0})$, then
\begin{equation}
    D_{XY}(B) = \lim_{\lambda \rightarrow \infty} \int_{\RR^d} \tilde{e}_\lambda(\freq) \tilde\indicator_{B}(\freq) \reducedmomentmeasure{XY}(\set{0}) \de\freq,
\end{equation}
for continuity sets of $D_{XY}$ (which includes continuity sets of  $\reducedcumulantmeasure{XY}$).
Therefore, we have
\begin{align}
    \reducedcumulantmeasure{XY}(B\setminus \set{0}) &= \reducedcumulantmeasure{XY}(B) - D_{XY}(B) \\
    &= \lim_{\lambda \rightarrow \infty} \int_{\RR^d} \tilde{e}_\lambda(\freq) \tilde\indicator_{B}(\freq) \left[\sdf{XY}(\freq) - \reducedmomentmeasure{XY}(\set{0})\right] \de\freq \\
    &= \int_{\RR^d} \tilde\indicator_{B}(\freq) \left[\sdf{XY}(\freq) - \reducedmomentmeasure{XY}(\set{0})\right] \de\freq,
\end{align}
provided that $\sdf{XY}(\freq) - \reducedmomentmeasure{XY}(\set{0})$ is integrable.

As a result, we can recover the reduced cumulant measure from the spectral density function, meaning that we could also recover Ripley's $K$ function from the spectral density function and the intensities.
We have
\begin{align}
    \tilde\indicator_{r\sphere{d}}(\freq) & = 
    \begin{cases}
        \leb{r\sphere{d}} & \text{if } \freq=0, \\
        (r/\norm{\freq})^{d/2} \besselj{d/2}(2\pi\norm{\freq}r) & \text{otherwise.}
    \end{cases}
\end{align}
Furthermore
\begin{align}
    \frac{\partial}{\partial r} \tilde\indicator_{r\sphere{d}}(\freq) & = 
    \begin{cases}
        A_{d-1} r^{d-1} & \text{if } \freq=0,\\                                                      {2\pi}\norm{\freq} (r/\norm{\freq})^{d/2} \besselj{d/2-1}(2\pi\norm{\freq}r) & \text{otherwise.}
    \end{cases}
\end{align}

Therefore because $[\sdf{XY}(\freq) - \reducedmomentmeasure{XY}(\set{0})]$ is integrable, we can interchange limits by Leibniz integral rule and
\begin{align}
    \Cfunc{XY}(r)  & = \int_{\RR^d} \left(\frac{r}{\norm{\freq}}\right)^{d/2} \besselj{d/2}(2\pi\norm{\freq}r) [\sdf{XY}(\freq) - \reducedmomentmeasure{XY}(\set{0})] \de\freq, \\
    \Cfunc{XY}'(r) & = 2\pi r \int_{\RR^d} \left(\frac{r}{\norm{\freq}}\right)^{d/2-1} \besselj{d/2-1}(2\pi\norm{\freq}r) [\sdf{XY}(\freq) - \reducedmomentmeasure{XY}(\set{0})](\freq) \de\freq.
\end{align}
In particular, when $d=2$
\begin{align}
    \Cfunc{XY}(r)  & = \int_{\RR^2} \frac{r}{\norm{\freq}} \besselj{1}(2\pi\norm{\freq}r) [\sdf{XY}(\freq) - \reducedmomentmeasure{XY}(\set{0})] \de\freq, \\
    \Cfunc{XY}'(r) & = 2\pi r \int_{\RR^2} \besselj{0}(2\pi\norm{\freq}r) [\sdf{XY}(\freq) - \reducedmomentmeasure{XY}(\set{0})] \de\freq.
\end{align}
\end{proof}

\subsection{Proof of \cref{prop:bestprediction}}\label{app:bestprediction}
\begin{proof}
Let
\begin{align}
\Predkernel{}=(\Predkernel{1},\ldots,\Predkernel{q})\in (S_{\RR^d})^{1\times q},
\qquad
\predkernelft{}(\freq)=(\predkernelft{1}(\freq),\ldots,\predkernelft{q}(\freq)),
\end{align}
where
\begin{align}
\predkernelft{j}(\freq)=\int_{\RR^d} e^{-2\pi i \ip{x}{\freq}}\,\Predkernel{j}(\de x),
\qquad j=1,\ldots,q.
\end{align}
Consider the prediction error
\begin{align}
N_X^0(\UU^d)-\sum_{j=1}^q \int_{\RR^d} \Predkernel{j}(\UU^d-x)\,N_{Z_j}^0(\de x).
\end{align}
Since both $N_X^0(\UU^d)$ and $\int_{\RR^d} \Predkernel{j}(\UU^d-x)\,N_{Z_j}^0(\de x)$ are centred, the mean squared error of the prediction is equal to the variance of the prediction error.
It is easier to work with the variance of the prediction error in the Fourier domain, so we will calculate the variance of the prediction error by calculating the variance and covariance terms in the Fourier domain.
In particular, we have
\begin{align}
&\var{N_X^0(\UU^d)-\sum_{j=1}^q \int_{\RR^d} \Predkernel{j}(\UU^d-x)\,N_{Z_j}^0(\de x)} \nonumber\\
&\quad = \var{N_X^0(\UU^d)} +\var{\sum_{j=1}^q \int_{\RR^d} \Predkernel{j}(\UU^d-x)\,N_{Z_j}^0(\de x)} \nonumber\\
&\qquad -2\Re\cov{N_X^0(\UU^d),\sum_{j=1}^q \int_{\RR^d} \Predkernel{j}(\UU^d-x)\,N_{Z_j}^0(\de x)}.
\end{align}
Proceeding term by term, first
\begin{align}
\var{N_X^0(\UU^d)}
&= \int_{\RR^{2d}} \indicator_{\UUd}(x)\indicator_{\UUd}(y)\,\cumulantmeasure{XX}(\de x\times \de y) \\
&= \int_{\RR^d}\abs{\tilde\indicator_{\UUd}(\freq)}^2 \sdf{XX}(\freq)\,\de\freq.
\end{align}
Second, expanding the variance as a double sum over the components of $Z$, and applying the same covariance calculation componentwise, we obtain
\begin{align}
&\var{\sum_{j=1}^q \int_{\RR^d} \Predkernel{j}(\UU^d-x)\,N_{Z_j}^0(\de x)} \\
&\quad= 
\sum_{j=1}^q \sum_{l=1}^q \cov{\int_{\RR^d} \Predkernel{j}(\UU^d-x)\,N_{Z_j}^0(\de x),\int_{\RR^d} \Predkernel{l}(\UU^d-y)\,N_{Z_l}^0(\de y)}\\
&\quad=
\sum_{j=1}^q \sum_{l=1}^q \int_{\RR^d}\abs{\tilde\indicator_{\UUd}(\freq)}^2
\predkernelft{j}(\freq)\sdf{Z_jZ_l}(\freq) \conj{\predkernelft{l}(\freq)} \de\freq \\
&\quad=
\int_{\RR^d}\abs{\tilde\indicator_{\UUd}(\freq)}^2
\predkernelft{}(\freq)\sdf{ZZ}(\freq)\predkernelft{}(\freq)^H \de\freq.
\end{align}
Finally,
\begin{align}
&\cov{N_X^0(\UU^d),\sum_{j=1}^q \int_{\RR^d} \Predkernel{j}(\UU^d-x)\,N_{Z_j}^0(\de x)} \\
&\quad=
\int_{\RR^d}\abs{\tilde\indicator_{\UUd}(\freq)}^2
\sdf{XZ}(\freq)\predkernelft{}(\freq)^H\,\de\freq.
\end{align}
Therefore
\begin{align}
&\var{N_X^0(\UU^d)-\sum_{j=1}^q \int_{\RR^d} \Predkernel{j}(\UU^d-x)\,N_{Z_j}^0(\de x)} \nonumber\\
&\quad=
\int_{\RR^d}\abs{\tilde\indicator_{\UUd}(\freq)}^2
\Big(
\sdf{XX}(\freq)
-2\Re\{ \sdf{XZ}(\freq)\predkernelft{}(\freq)^H \}
+\predkernelft{}(\freq)\sdf{ZZ}(\freq)\predkernelft{}(\freq)^H
\Big)\,\de\freq.
\label{eq:varwavenumber:multivariate}
\end{align}
Since $\abs{\tilde\indicator_{\UUd}(\freq)}^2\ge 0$ and is non-zero for almost every
$\freq\in\RR^d$, the integral is minimized by minimizing the bracketed term in
\cref{eq:varwavenumber:multivariate} for almost every $\freq$. For each such $\freq$,
the minimizer is
\begin{align}
\predkernelft{}(\freq)=\sdf{XZ}(\freq)\sdf{ZZ}(\freq)^{-1}.
\end{align}
By \cref{assumption:predictionkernel}, this Fourier transform corresponds to
$\Predkernel{X\pred Z}$, and hence $\Predkernel{X\pred Z}$ belongs to the argmin.
\end{proof}

\subsection{Proof of \cref{prop:partialcov:isresidualcov}}\label{sec:partialcov:isresidualcov}

\begin{proof}
    We define the centred residual process by $\centeredresidualprocess{X\pred Z}(B)=\residualprocess{X\pred Z}(B)-\intensity{X}\leb{B}$. For all $r\geq 0$,
    \begin{align}
        \Cfunc{\residualprocess{X\pred Z}, Y}(r) 
        &= \EE{\int_{\UU^2} \centeredresidualprocess{X\pred Z}(r\zerosphere{2}+y) N_Y(\de y)} \\
        &= \Cfunc{\centeredresidualprocess{X\pred Z}, Y}(r).
    \end{align}
    Therefore, 
    \begin{align}
        \Cfunc{\residualprocess{X\pred Z}, Y}(r)
        &= \Cfunc{\centeredresidualprocess{X\pred Z}, Y}(r) \\
        &= \EE{\int_{\UU^2} \centeredresidualprocess{X\pred Z}(r\zerosphere{2}+y) N_Y(\de y)} \\
        &= \EE{\int_{\UU^2} N_X^0(r\zerosphere{2}+y) N_Y(\de y)}
        - \EE{\int_{\UU^2} \Predprocess{X\pred Z}^0(r\zerosphere{2}+y) N_Y(\de y)} \\
        &= \Cfunc{XY}(r) - \sum_{j=1}^q \EE{\int_{\UU^2}\int_{\RR^2} \Predkernel{X\pred Z,j}(r\zerosphere{2}-y+z) N_{Z_j}^0(\de z) N_Y(\de y)} \\
        &= \Cfunc{XY}(r) - \sum_{j=1}^q S_j(r)
    \end{align}
    Now for all $j=1,\ldots,q$,
    \begin{align}
        S_j(r)
        &= \EE{\int_{\UU^2}\int_{\RR^2} \Predkernel{X\pred Z,j}(r\zerosphere{2}-y+z) N_{Z_j}^0(\de z) N_Y(\de y)} \\
        &=  \EE{\int_{\UU^2}\int_{\RR^2}\int_{\RR^2} \indicator_{r\zerosphere{2}}(x-y+z) \Predkernel{X\pred Z, j}(\de x) N_{Z_j}^0(\de z) N_Y(\de y)} \\
        &= \int_{\RR^2}  \EE{\int_{\UU^2} N_{Z_j}^0(r\zerosphere{2}+y-x) N_Y(\de y)} \Predkernel{X\pred Z, j}(\de x) \label{eq:residual:exchange:1} \\
        &= \int_{\RR^2} \reducedcumulantmeasure{Z_jY}(r\zerosphere{2}-x) \Predkernel{X\pred Z, j}(\de x) \\
        &= \int_{\RR^2} \int_{\RR^2} \tilde\indicator_{r\zerosphere{2}-x}(-\freq)\left[f_{Z_jY}(\freq)-\reducedmomentmeasure{Z_jY}(\set{0})\right] \de\freq \Predkernel{X\pred Z, j}(\de x) \\
        &= \int_{\RR^2} \int_{\RR^2} \tilde\indicator_{r\zerosphere{2}}(\freq) e^{-2\pi i \ip{\freq}{x}} f_{Z_jY}(\freq) \de\freq \Predkernel{X\pred Z, j}(\de x) \\
        &= \int_{\RR^2} \tilde\indicator_{r\zerosphere{2}}(\freq) \predkernelft{X\pred Z, j}(\freq) f_{Z_jY}(\freq) \de\freq. \label{eq:residual:exchange:2}
    \end{align}
    Therefore
    \begin{align}
        \Cfunc{\centeredresidualprocess{X\pred Z}, Y}(r) 
        &= \Cfunc{XY}(r) - \sum_{j=1}^q \int_{\RR^2} \tilde\indicator_{r\zerosphere{2}}(\freq) \predkernelft{X\pred Z, j}(\freq) f_{Z_jY}(\freq) \de\freq \\
        &= \Cfunc{XY}(r) - \int_{\RR^2} \tilde\indicator_{r\zerosphere{2}}(\freq) \predkernelft{X\pred Z}(\freq) f_{ZY}(\freq) \de\freq \\
        &= \int_{\RR^2} [\sdf{XY}(\freq)-\reducedmomentmeasure{XY}(\set{0})] \frac{r}{\norm{\freq}}\besselj{1}(2\pi\norm{\freq}r) \de\freq \\
        &\qquad - \int_{\RR^2} \sdf{XZ}(\freq) \sdf{ZZ}(\freq)^{-1} \sdf{ZY}(\freq) \frac{r}{\norm{\freq}}\besselj{1}(2\pi\norm{\freq}r) \de\freq \\
        &= \int_{\RR^2} [\sdf{XY\pred Z}(\freq)-\reducedmomentmeasure{XY}(\set{0})] \frac{r}{\norm{\freq}}\besselj{1}(2\pi\norm{\freq}r) \de\freq \\
        &= \Cfunc{XY\pred Z}(r).
    \end{align}
    The interchange of limits in \cref{eq:residual:exchange:1} is justified provided
    \begin{align}
        T = \int_{\RR^2} \EE{\int_{\UU^2} \abs{N_{Z_j}^0}(r\zerosphere{2}+y-x) N_Y(\de y)} \abs{\Predkernel{X\pred Z, j}}(\de x) <\infty.
    \end{align}
    This holds because
    \begin{align}
        T &\leq \int_{\RR^2} \EE{\int_{\UU^2} N_{Z_j}(r\zerosphere{2}+y-x) N_Y(\de y)} \abs{\Predkernel{X\pred Z, j}}(\de x) \\ &\quad+ \EE{\int_{\UU^2} \lambda_{Z_j} \leb{r\zerosphere{2}+y-x} N_Y(\de y)} \abs{\Predkernel{X\pred Z, j}}(\de x) \\
        &= T_1 + T_2,
    \end{align}
    and we have
    \begin{align}
        T_1 & = \int_{\RR^2} \EE{\int_{\UU^2} N_{Z_j}(r\zerosphere{2}+y-x) N_Y(\de y)} \abs{\Predkernel{X\pred Z, j}}(\de x) \\
        & = \int_{\RR^2} \reducedmomentmeasure{Z_jY}(r\zerosphere{2}-x) \abs{\Predkernel{X\pred Z, j}}(\de x) \\
        & \leq C_{r\zerosphere{2}} \int_{\RR^2} \abs{\Predkernel{X\pred Z, j}}(\de x) \\
        & = C_{r\zerosphere{2}} \abs{\Predkernel{X\pred Z, j}}(\RR^2) \\
        & <\infty,
    \end{align}
    where $C_{r\zerosphere{2}}<\infty$ is a constant and comes from the translation boundedness of the reduced moment measure \citep[Proposition 8.3.I (iv)]{daley2003introduction}, and $\abs{\Predkernel{X\pred Z, j}}(\RR^2)<\infty$ by the assumption of total finiteness.
    The second term is also finite because
    \begin{align}
        T_2 & = \int_{\RR^2} \EE{\int_{\UU^2} \lambda_{Z_j} \leb{r\zerosphere{2}+y-x} N_Y(\de y)} \abs{\Predkernel{X\pred Z, j}}(\de x) \\
        & = \int_{\RR^2} \lambda_{Z_j} \leb{r\zerosphere{2}} \EE{N_Y(\UU^2)} \abs{\Predkernel{X\pred Z, j}}(\de x) \\
        &= \abs{\Predkernel{X\pred Z, j}}(\RR^2) \leb{r\zerosphere{2}} \lambda_{Z_j} \lambda_Y \\
        &<\infty.
    \end{align}

    The interchange of limits in \cref{eq:residual:exchange:2} is justified because
    \begin{align}
        &\hspace{-10em} \int_{\RR^2} \int_{\RR^2} \abs{\tilde\indicator_{r\zerosphere{2}}(\freq) e^{-2\pi i \ip{\freq}{x}} f_{ZY}(\freq)} \de\freq \abs{\Predkernel{X\pred Z, j}}(\de x) \nonumber \\
        &\leq \int_{\RR^2} \int_{\RR^2} \abs{f_{ZY}(\freq)} \de\freq \abs{\Predkernel{X\pred Z, j}}(\de x) \\
        &= \abs{\Predkernel{X\pred Z, j}}(\RR^2) \int_{\RR^2} \abs{f_{ZY}(\freq)} \de\freq \\
        &<\infty,
    \end{align}
    as $f_{ZY}(\freq) = f_{ZY}(\freq) - \reducedmomentmeasure{ZY}(\set{0})$ is integrable by assumption, and $\abs{\Predkernel{X\pred Z, j}}(\RR^2)<\infty$ by the assumption of total finiteness.
\end{proof}

\section{Estimation}\label{app:estimation}
\subsection{Spectral estimation}\label{app:estimation:spectral}
We assume that we observe data on some bounded observational region $\region$.
To construct estimators for the partial $K$ function, we first need estimators of the spectral density function, and then the partial spectral density function.
We will use the multitaper method \citep{thomson1982spectrum, grainger2025spectral} as this allows us to construct reliable estimators of the spectral density function from observations on arbitrary domains.
Briefly, given a family of tapers, $h_1,\ldots,h_M$ supported on a subset of $\region$, we construct tapered periodograms
\begin{align}
    \pgram{XY}{m}(\freq) & = \dft{X}{m}(\freq) \conj{\dft{Y}{m}(\freq)}, & \freq \in \RR^2,
\end{align}
where
\begin{align}
    \dft{X}{m}(\freq) & = \int_{\region} h_m(x) e^{-2\pi i \ip{x}{\freq}} N_X(\de x) - \hat\lambda_X H_m(\freq), & \freq \in \RR^2,
\end{align}
with $H_m(\freq)$ being the Fourier transform of the taper $h_m$ at wavenumber $\freq$.
Tapers are constructed either by interpolated Slepian tapers (on arbitrary domains) or by taking outer products of the minimum bias tapers (on rectangular domains), see \cite{grainger2025spectral} for details.
The multitaper periodogram is subsequently defined as
\begin{align}
    \mtpgram{XY}(\freq) & = \frac{1}{M} \sum_{m=1}^M \pgram{XY}{m}(\freq), & \freq \in \RR^2.
\end{align}
This provides a reliable estimator of the spectral density function, whose properties are well understood \citep[see e.g.,][]{thomson1982spectrum, walden2000unified, grainger2025spectral}.

\subsection{Basic estimator}\label{app:estimation:basic}
In order to estimate the (partial) $K$ function, we first estimate the (partial) $C$ function and then use a discretization of \cref{eq:sdf2c}.
In other words, we estimate the (partial) $C$ function with a truncated discretisation of the inversion integral, applied to the (partial) spectral density function estimator.
This technique can be used to estimate both the usual $K$ function and the partial $K$ function.
We therefore describe it in terms of the usual case, but the (simple) partial case is a plug in of the partial spectral density function estimate (see \cref{app:estimation:debias} for additional debiasing).

In order to use \cref{eq:sdf2c}, we need to discretize and truncate the integral.
We do this by considering a finite set of wavenumbers $\Omega\subset\RR^2$ and then approximating the integral by a Riemann sum.
In particular, let $\wavenumbermax{}\in(0,\infty)^2$ be the maximum wavenumber and $\wavenumberspacing{}\in(0,\infty)^2$ be the spacing between wavenumbers in each dimension.
Then $\Omega = (\wavenumbermax{} \circ [-1,1]^2) \cap (\wavenumberspacing{}\circ \ZZ^2)$.\footnote{In practice, the output wavenumber grid corresponds to an FFT, and thus takes a specific form, but this is not important for the exposition here.}
We construct the estimator for $C$ by
\begin{align}
    \Cfuncest{XY}(r) & = \wavenumberspacing{1} \wavenumberspacing{2} \sum_{\freq \in \Omega} \left[\mtpgram{XY}(\freq) -\intensityest{X}\delta_{X,Y}\right] \frac{r}{\norm{\freq}}\besselj{1}(2\pi\norm{\freq}r), & r\geq 0.
\end{align}
In general, the term $\intensityest{X}\delta_{X,Y}$ should be an estimator of $\reducedmomentmeasure{XY}(\set{0})$, which is $\intensity{X}\delta_{X,Y}$ for a simple point process, but takes a different form for a marked point process for example, i.e. it is not the first moment of a random measure in general \citep{daley2003introduction}.

\subsection{Debiasing}\label{app:estimation:debias}
To obtain an estimate of the partial $K$ function, we replace the estimated spectral density function with its partial counterpart.
The naive estimator is a plug in, so that
\begin{align}
    \mtpgram{XY\pred Z}(\freq) & = \mtpgram{XY}(\freq) - \mtpgram{XZ}(\freq) \mtpgram{ZZ}(\freq)^{-1} \mtpgram{ZY}(\freq), & \freq \in \RR^2.\label{eq:naive_partial_estimator}
\end{align}
The estimated spectral density matrix function may not be invertible at all wavenumbers. In practice, this issue is often resolved by using more tapers, which increases smoothing in wavenumber space and typically ensures invertibility. If the matrix remains non-invertible, a generalized inverse, such as the Moore-Penrose pseudoinverse, can be used. The results stated below remain valid when a generalized inverse is applied, but for simplicity, we present them as if the usual matrix inverse is used throughout.

It is important to note, however, that even if we have unbiased estimates of the spectral density function, \cref{eq:naive_partial_estimator} is still a biased estimator of the partial spectra. This bias can, in turn, lead to substantial bias in the estimated $K$ function.
In particular, consider the function which maps spectral density matrix functions to their partial equivalent
\begin{align}
    \sdftopartial{XY\pred Z}[f] = \sdf{XY} - \sdf{XZ} \sdf{ZZ}^{-1} \sdf{ZY}
\end{align}
where $\sdf{ZZ}^{-1}$ refers to pointwise matrix inversion (not the inverse of the function $\sdf{ZZ}$).
Then given some regularity conditions, for fixed $M$, $P$, and growing domain, the plug in estimator satisfies
\begin{align}
    \EE{\sdftopartial{XY\pred Z}[\hat{f}]} = \left(1-\frac{P_Z}{M}\right) \sdftopartial{XY\pred Z}[f] + o(1)
\end{align}
where $P_Z$ is the number of processes in $Z$ and $M$ is the number of tapers used to construct $\hat{f}$  (see \cref{theorem:bias:partial}).
Therefore, we can obtain an improved estimate by setting
\begin{align}
    \mtpgram{XY\pred Z} & = \left(\frac{M}{M-P_Z}\right) \sdftopartial{XY\pred Z}[\hat{f}].
\end{align}
This is similar to the bias corrections required for estimating partial coherence \citep{medkour2009graphical}, though here we are directly interested in the partial spectra, not the partial coherence as is typically the case in other applications.
Recall that we already require more tapers than processes, and therefore $M>P_Z$.
When we compute partial $L$ functions, we perform another non-linear transformation, which can warp this bias in unusual ways especially for short distances, which results in bias that is not just a percentage reduction, but which looks like a meaningful feature.
A specific example of this phenomenon is given in \cref{app:simulation:debiasing}.
The debiasing we propose resolves this problem, and removes such spurious features.

\subsection{Pre-rotational averaging}\label{app:estimation:preaverage}

So far, we have constructed our estimator by making direct use of \cref{eq:sdf2c} and plugging in our (debiased) estimate of the partial spectra.
However, an alternative approach would be to first construct a rotationally averaged spectral estimate, and then compute a one-dimensional transform.
The rationale for this approach is, first, that we can reduce the number of evaluations of the Bessel functions, which are expensive.
Second, because Bessel functions oscillate, the quality of the integral approximation can be poor if we cannot discretize enough.
This can be tricky, because for small radii, we need to use many wavenumbers before the Bessel function gets small, but for large radii, we need a fine grid to capture their oscillatory behaviour.
This results not so much in a new estimator, but rather a more numerically stable way to discretise the integral in \cref{eq:sdf2c}.

Consider then a simple rotationally averaged estimate $\mtpgram{XY}^{(rot)}$ of the rotationally averaged spectral density function, such as that in \cref{eq:app:rotational_average_estimator}, which is a function of the radial wavenumber $\waveradii = \norm{\freq}$ only.
Notice that we are not assuming the spectral density function is isotropic, but rather that we are just performing a rotational average to obtain a one-dimensional function of the radial wavenumber, which would estimate the rotationally averaged spectral density function.
When applied to the partial spectra, we simply rotationally average $\mtpgram{XY\pred Z}$.
In other words, we do not perform the rotational averaging on the original spectral density function and then compute the partial spectra, but rather we compute the partial spectra first and then perform the rotational averaging.
Performing the rotational averaging first and then computing the partial spectra would only be valid under the additional assumption that the spectral density function is isotropic, which is not necessarily the case, and thus we do not make this assumption here.

Finally then, a useful, faster and more stable alternative may be derived by considering
\begin{align}
    \Cfuncest{XY}(r)
    & = \sum_{\waveradii \in \waveradiigrid} \left[\mtpgram{XY}^{(rot)}(\waveradii)-\intensityest{X}\delta_{X,Y}\right] \int_{\RR^2} \indicator_{(\waveradii-\frac{\waveradiispacing}{2},\waveradii+\frac{\waveradiispacing}{2}]}(\norm{\freq}) \frac{r}{\norm{\freq}} \besselj{1}(2\pi \norm{\freq}r)\de \freq \nonumber \\
    & = \sum_{\waveradii \in \waveradiigrid} \left[\mtpgram{XY}^{(rot)}(\waveradii)-\intensityest{X}\delta_{X,Y}\right] 2\pi r \int_{\waveradii-\waveradiispacing/2}^{\waveradii+\waveradiispacing/2} \besselj{1}(2\pi xr)\de x \nonumber \\
    & = \sum_{\waveradii \in \waveradiigrid} \left[\mtpgram{XY}^{(rot)}(\waveradii)-\intensityest{X}\delta_{X,Y}\right] \left[\besselj{0}(2\pi r (\waveradii-\waveradiispacing/2)) - \besselj{0}(2\pi r (\waveradii+\waveradiispacing/2))\right]\label{eq:improved_estimator}
\end{align}
where $\waveradiigrid = (\waveradiispacing/2+\waveradiispacing \ZZ) \cap [0, \waveradiimax]$ is a one dimensional vector of radial wavenumbers spaced $\waveradiispacing>0$ apart so that the first radial wavenumber considered is bigger than $\waveradiispacing/2$.
The equivalent form for the $d$-dimensional case is given in \cref{app:generaldim:estimator}.
Notice that in \cref{eq:improved_estimator}, we are approximating the integral with the analytical integral assuming that $\mtpgram{XY}^{(rot)}$ is piecewise constant, and not with a Riemann sum.
This results in greater stability, as the underlying spectra is often reasonably smooth relative to the weighted Bessel function it is integrated against.

\subsection{Hyperparameter selection}
In order to estimate the partial $K$ function, we need to select appropriate hyperparameters, namely, the number of tapers $M$, the highest wavenumber $\wavenumbermax{}$, and the spacing $\wavenumberspacing{}$.
The number of tapers, $M$, needs to be at least the number of processes, $P$, i.e. $M \geq P$, because otherwise we will not be able to invert the spectral matrix \citep{walden2000unified}.
Generally, since we aggregate to compute the partial $K$ function, we do not need the tapers for variance reduction as much as we would if we were interested in estimating the partial spectra.
Increasing the number of tapers corresponds to smoothing over a larger bandwidth in wavenumber, which will introduce bias.
One simple option for selecting $\wavenumberspacing{}$ is to use $1 \oslash \regionlength{}$ where $\regionlength{}$ is the vector of side lengths of the bounding box of $\region$ and $\oslash$ denotes elementwise division.
To select $\wavenumbermax{}$, we can either look at pilot estimates of the spectral density function, or use an iterative scheme where we repeatedly increase $\wavenumbermax{}$ until we see a small change in the resultant $K$ function.
Another useful diagnostic check is to compare the $K$ function estimated using standard methods to the $K$ function (not partial) estimated from the spectral density function, which we would expect to be similar.
In \cref{app:simulation:l_function}, we show in simulations that this spectral approach to estimating the $L$ function is competitive with the standard border correction methods typically used when computing the usual $L$ function.

The additional parameters in the pre rotational averaging approach are the radial wavenumber spacing $\waveradiispacing$ and maximum radial wavenumber $\waveradiimax$.
One choice is to relate them to the previous parameters by $\waveradiimax = \min_j{\wavenumbermax{j}}$ and $\waveradiispacing = \min_j{\wavenumberspacing{j}}$.

The final choice is the form of rotational averaging to use. Since throughout this paper we use shell averaging, we define the rotational average directly by averaging over radial shells rather than introducing a more general kernel smoother. For each $\waveradii\in\waveradiigrid$, define the corresponding shell of wavenumbers by
\begin{align}
    \shellwavenumbers{\waveradii}
    =
    \set{\freq\in\Omega:\norm{\freq}\in(\waveradii-\waveradiispacing/2,\waveradii+\waveradiispacing/2]}.
\end{align}
We then set
\begin{align}
    \mtpgram{XY}^{(rot)}(\waveradii)
    =
    \frac{1}{\abs{\shellwavenumbers{\waveradii}}}
    \sum_{\freq\in\shellwavenumbers{\waveradii}} \sdf{XY}(\freq).
    \label{eq:app:rotational_average_estimator}
\end{align}

This is exactly the rotational averaging used in the simulations and applications. It corresponds to replacing the spectra by a piecewise constant function on radial shells of width $\waveradiispacing$, which is then integrated exactly in \cref{eq:improved_estimator}.

\subsection{Computational complexity}\label{app:estimation:complexity}
The computational complexity of the estimation procedure is competitive with standard methods for estimating the $K$ function.
In particular, say that $n$ is the total number of points, $P$ is the number of processes, $M$ is the number of tapers and $R$ is the number of spatial distances at which we want to evaluate the $K$ function.
Then the complexity of computing the multitaper periodogram is $O(PMn\log n)$ (up to factors depending on the desired NUFFT tolerance when computing the non-uniform FFT, see \cite{dutt1993fast} for example).
The complexity of computing the partial spectral density function is $O(P^3 \abs{\Omega})$ where $\abs{\Omega}$ is the number of wavenumbers considered, and the complexity of computing the $K$ function from the spectral density function is $O(R\abs{\Omega})$.
If we use a fixed highest wavenumber and the spacing rule proposed above, then $\abs{\Omega}$ scales as $O(n)$ (as $n$ scales like the region size).
Therefore, the overall complexity is $O(PMn\log n + P^3 n + R n)$.
The standard approaches for computing the $K$ function have complexity $O(P^2 n^2)$, and so for large $n$ and fixed $P$, our approach is faster.

\section{Additional methodological details}

\subsection{Bias correction}

Say that we have $P$ point processes.
In addition, we use a multitaper spectral estimate with a fixed $M > P$ tapers.
We will be interested in a growing domain framework, as in \cite{grainger2025spectral}.
In order to do this, we will write $\mtpgram{n}(\freq)$ for the spectral density matrix function estimate at wavenumber $\freq$ from the $n$th observational window in a sequence of growing windows.

We need some preliminary results before obtaining our required bias results.
This first Lemma is Theorem 3.6 of \cite{andersen1995complex}.

\begin{lemma}
    Consider a complex Wishart distributed random variable $X\sim \mathcal{W}^C_m(n, \Sigma)$.
    Say that $X$ is partitioned so that
    \begin{align}
        X= \begin{bmatrix}
            X_{11} & X_{12} \\ X_{21} & X_{22}
        \end{bmatrix}
    \end{align}
    with $X_{11}$ being $s\times s$ (so $X_{22}$ is $m-s \times m-s$).    
    Then if $X_{11\pred 2} = X_{11}-X_{12}X_{22}^{-1}X_{21}$ we have
    \begin{align}
        X_{11\pred 2} \sim \mathcal{W}^C_s(n-(m-s), \Sigma_{11\pred 2}).
    \end{align}
\end{lemma}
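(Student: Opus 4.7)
My plan is to prove this by the standard Gaussian representation of the complex Wishart, adapted from the real-case argument. First, I would write $X = YY^H$ where $Y$ is an $m\times n$ matrix whose $n$ columns are i.i.d.\ $\mathcal{CN}_m(0,\Sigma)$. Conformably with the partition of $X$, I would split $Y$ into a top block $Y_1$ of size $s\times n$ and a bottom block $Y_2$ of size $(m-s)\times n$. The blocks of $X$ then read $X_{11}=Y_1Y_1^H$, $X_{12}=Y_1Y_2^H$, $X_{22}=Y_2Y_2^H$, so the Schur complement factors nicely as
\begin{align}
X_{11\pred 2} &= Y_1\bigl(I_n - Y_2^H(Y_2Y_2^H)^{-1}Y_2\bigr)Y_1^H = Y_1 P\, Y_1^H,
\end{align}
where $P$ is the orthogonal projection onto the kernel of $Y_2$, an $n\times n$ Hermitian idempotent matrix of rank $n-(m-s)$ almost surely (assuming $n\geq m$, which is necessary for the Wishart to be non-singular).

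Next I would use the standard conditional Gaussian decomposition: write the columns of $Y_1$ as $Y_{1,\cdot j} = \Sigma_{12}\Sigma_{22}^{-1} Y_{2,\cdot j} + E_{\cdot j}$, where the columns of $E$ are i.i.d.\ $\mathcal{CN}_s(0,\Sigma_{11\pred 2})$ and $E$ is independent of $Y_2$. Because $PY_2^H=0$, the regression term is annihilated by $P$, and
\begin{align}
X_{11\pred 2} = E P E^H.
\end{align}
Now I would condition on $Y_2$, so $P$ is a fixed rank-$(n-(m-s))$ Hermitian projection. Diagonalising $P = U\, \mathrm{diag}(I_{n-(m-s)},0)\, U^H$ with $U$ unitary, I would set $\tilde E = EU$. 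By unitary invariance of the law of i.i.d.\ circularly symmetric complex Gaussians acting from the right, the columns of $\tilde E$ are again i.i.d.\ $\mathcal{CN}_s(0,\Sigma_{11\pred 2})$, and independent of $Y_2$.

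Then
\begin{align}
X_{11\pred 2} = \tilde E\, \mathrm{diag}(I_{n-(m-s)},0)\, \tilde E^H = \sum_{j=1}^{n-(m-s)} \tilde E_{\cdot j}\tilde E_{\cdot j}^H,
\end{align}
which by definition is $\mathcal{W}^C_s(n-(m-s),\Sigma_{11\pred 2})$. Since this conditional law does not depend on $Y_2$, it is also the unconditional law, completing the proof. The one step requiring care is the right-unitary invariance of the complex Gaussian matrix $E$: while real-Gaussian rotational invariance is standard, for complex Gaussians it is essential that the entries be \emph{circularly symmetric} (pseudo-covariance zero), which is part of the definition of $\mathcal{CN}$; I expect this, together with verifying $n\geq m$ so that $P$ has the claimed rank almost surely, to be the main technical point, whereas the algebraic manipulations of the Schur complement are routine.
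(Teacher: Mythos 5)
Your proof is correct, but it is worth noting that the paper does not prove this lemma at all: it is quoted verbatim as Theorem~3.6 of Andersen et al.\ (1995) on the complex normal and complex Wishart distributions, and used as a black box in the bias argument. What you have written is essentially the standard textbook proof that underlies that citation, carried out in the complex case: the Gaussian representation $X=YY^{H}$, the identity $X_{11\pred 2}=Y_1PY_1^{H}$ with $P$ the rank-$(n-(m-s))$ projection annihilating $Y_2$, the conditional regression decomposition $Y_1=\Sigma_{12}\Sigma_{22}^{-1}Y_2+E$ with $E$ independent of $Y_2$ and columns i.i.d.\ $\mathcal{CN}_s(0,\Sigma_{11\pred 2})$, and right-unitary invariance to reduce $EPE^{H}$ to a sum of $n-(m-s)$ i.i.d.\ rank-one terms, so the conditional law is free of $Y_2$ and hence unconditional. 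The details you flag are indeed the only delicate points: the invariance step genuinely requires the circular symmetry (zero pseudo-covariance) built into the complex Wishart definition, and the rank statement for $P$ needs $Y_2$ of full row rank almost surely, which holds since $\Sigma_{22}\succ 0$ and $n\geq m-s$ (your stronger assumption $n\geq m$ is the usual nonsingularity condition and is what the paper uses anyway, since there $n=M>P$). So your argument buys a self-contained proof where the paper merely defers to the literature; it would serve as a valid replacement for the citation.
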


\begin{lemma}\label{lemma:mpt_wishart}
    Under assumptions 1-5 and 7 of \cite{grainger2025spectral}, we have that for a fixed number of processes $P$ and a fixed number of tapers $M$
    \begin{align}
        \mtpgram{n}(\freq) \xrightarrow{d} \mathcal{W}^C_P \left(M, \sdf{}(\freq)/M \right)
    \end{align}
    \begin{proof}
        This follows from Theorem 3 of \cite{grainger2025spectral} and the continuous mapping theorem.
    \end{proof}
\end{lemma}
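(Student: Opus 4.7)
The plan is to reduce this to the joint asymptotic complex Gaussianity of the tapered DFTs that \cite{grainger2025spectral} establish, and then push forward through the outer-product map defining the Wishart distribution.

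First, I would unpack the multitaper periodogram matrix: stacking the tapered DFTs across the $P$ processes into a vector $\dft{}{m;n}(\freq) \in \CC^P$ for each taper $m$, we have
\begin{equation}
    \mtpgram{n}(\freq) = \frac{1}{M} \sum_{m=1}^M \dft{}{m;n}(\freq) \dft{}{m;n}(\freq)^H.
\end{equation}
Theorem 3 of \cite{grainger2025spectral}, under their Assumptions 1--5 and 7, delivers a joint central limit theorem for the stacked vector $(\dft{}{1;n}(\freq)^\top,\ldots,\dft{}{M;n}(\freq)^\top)^\top$, showing it converges in distribution to a mean-zero proper complex Gaussian with block-diagonal covariance whose diagonal blocks are each $\sdf{}(\freq)$. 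Thus asymptotically the $\dft{}{m;n}(\freq)$ are i.i.d.\ $\mathcal{CN}_P(0,\sdf{}(\freq))$ across tapers, which is exactly the ingredient needed for a Wishart.

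Second, I would invoke the definitional characterization of the complex Wishart: if $Z_1,\ldots,Z_M$ are i.i.d.\ $\mathcal{CN}_P(0,\Sigma)$ then $\sum_{m=1}^M Z_m Z_m^H \sim \mathcal{W}^C_P(M,\Sigma)$, and by rescaling $Z_m \mapsto Z_m/\sqrt{M}$ one obtains $\frac{1}{M}\sum_{m=1}^M Z_m Z_m^H \sim \mathcal{W}^C_P(M,\Sigma/M)$, consistent with the scaling convention used in the statement. The continuous mapping theorem applied to the map $(v_1,\ldots,v_M)\mapsto \frac{1}{M}\sum_m v_m v_m^H$, which is continuous from $(\CC^P)^M$ into the cone of Hermitian positive semidefinite $P\times P$ matrices, combines these two pieces to yield the claimed convergence.

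The main subtlety, and the only place where I would need to be careful, is that we need \emph{joint} convergence of $(\dft{}{1;n}(\freq),\ldots,\dft{}{M;n}(\freq))$ together with asymptotic independence across tapers; marginal Gaussianity of each DFT would not suffice. This is precisely what Theorem 3 of \cite{grainger2025spectral} supplies (the tapers $h_1,\ldots,h_M$ are chosen with the required orthogonality so the cross-taper limiting covariances vanish), and once that is in hand the remainder is purely algebraic.
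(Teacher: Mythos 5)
Your proposal is correct and follows essentially the same route as the paper: the paper's one-line proof invokes Theorem 3 of \cite{grainger2025spectral} (the joint asymptotic complex Gaussianity of the tapered DFTs across tapers) together with the continuous mapping theorem applied to the outer-product average defining $\mtpgram{n}(\freq)$, which is exactly the argument you spell out. Your additional care about needing \emph{joint} convergence with asymptotic independence across tapers is a fair and accurate elaboration of what the cited theorem must supply, not a departure from the paper's reasoning.
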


Now in order to use this result for the expectation, we need to establish uniform integrability conditions.
Strictly speaking, we do not have guarantees that the multitaper periodogram is invertible almost surely.
Therefore, since it is a Gram matrix, and thus positive-semi definite, we can consider instead the generalised schur complement \citep{carlson1974generalization}, defined by
\begin{align}
    \mtpgram{11\pred 2;n}^\dag = \mtpgram{11;n} - \mtpgram{12;n}\mtpgram{22;n}^+\mtpgram{21;n}
\end{align}
where $\mtpgram{22;n}^{\dag}$ is the Moore-Penrose pseudoinverse of $\mtpgram{22;n}$.
This is well-defined almost surely.

\begin{lemma}
    Writing $\norm{\cdot}_F$, say that there exists $\delta>0$ such that $\sup_{n\in\NN} \EE{\norm{\mtpgram{n}}_F^{1+\delta}}<\infty$.
    Then we have
    \begin{align}
        \sup_{n\in\NN} \EE{\norm{\mtpgram{11\pred 2;n}^\dag}_F^{1+\delta}}<\infty.
    \end{align}
    \begin{proof}
        We have
        \begin{align}
            \mtpgram{11;n} - \mtpgram{11\pred 2;n}^\dag
            &= \mtpgram{12;n}\mtpgram{22;n}^{\dag}\mtpgram{21;n} \\
            &= \mtpgram{12;n}\mtpgram{22;n}^{\dag}\mtpgram{12;n}^H\\
            & \succeq 0,
        \end{align}
        because $\mtpgram{22;n}$ is Hermitian and positive semi-definite and therefore so is $\mtpgram{22;n}^{\dag}$ by Theorem 29.6 of \cite{ben2003generalized}.
        Therefore, in Loewner order, $\mtpgram{11;n} \succeq\mtpgram{11\pred 2;n}^\dag$.
        As a result, writing $\norm{\cdot}_F$ for the Frobenius norm (which is monotone on the space of positive semi-definite matrices \citep{ciarlet1989introduction}), we have
        \begin{align}
            \norm{\mtpgram{11\pred 2;n}^\dag}_F \leq \norm{\mtpgram{11;n}}_F \leq \norm{\mtpgram{n}}_F
        \end{align}
        almost surely, and therefore
        \begin{align}
            \sup_{n\in\NN} \EE{\norm{\mtpgram{11\pred 2;n}^\dag}_F^{1+\delta}}\leq \sup_{n\in\NN} \EE{\norm{\mtpgram{n}}_F^{1+\delta}}<\infty,
        \end{align}
        as required.
    \end{proof}
\end{lemma}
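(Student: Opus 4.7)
The plan is to sandwich $\mtpgram{11\pred 2;n}^{\dag}$ between zero and $\mtpgram{11;n}$ in the Loewner order, and then exploit Frobenius-norm monotonicity on positive semidefinite matrices so that the integrability of the generalised Schur complement reduces pathwise to the hypothesised integrability of $\mtpgram{n}$. The entire argument is pathwise; raising to the $(1+\delta)$ power and taking suprema is deferred to the very end.

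First I would note that $\mtpgram{n}$ is Hermitian positive semidefinite almost surely, being a Gram-type construction from tapered Fourier transforms; therefore every principal submatrix is PSD, and in particular the Moore-Penrose pseudoinverse $\mtpgram{22;n}^{+}$ is well-defined and itself PSD. The upper Loewner bound is then immediate from $\mtpgram{21;n} = \mtpgram{12;n}^{H}$ via
\begin{align*}
\mtpgram{11;n} - \mtpgram{11\pred 2;n}^{\dag} = \mtpgram{12;n}\, \mtpgram{22;n}^{+}\, \mtpgram{12;n}^{H} \succeq 0,
\end{align*}
since it has the form $X Y X^{H}$ with $Y \succeq 0$. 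The lower bound, $\mtpgram{11\pred 2;n}^{\dag} \succeq 0$, is the generalised-inverse version of classical Schur-complement positivity: because the full block matrix is PSD, the range inclusion $\mathrm{range}(\mtpgram{12;n}) \subseteq \mathrm{range}(\mtpgram{22;n})$ holds automatically and the generalised Schur complement is PSD. Combined with the elementary fact that the Frobenius norm of a principal submatrix is dominated by that of the full matrix, these give $\norm{\mtpgram{11\pred 2;n}^{\dag}}_{F} \leq \norm{\mtpgram{11;n}}_{F} \leq \norm{\mtpgram{n}}_{F}$ almost surely. Raising to the $(1+\delta)$ power and taking the supremum over $n$ then closes the argument.

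The step I expect to be the main obstacle is the lower Loewner bound in the singular-block regime: the classical Schur-complement identity presupposes invertibility of $\mtpgram{22;n}$, so one must invoke a generalised-inverse characterisation (e.g.\ Albert's theorem on Moore-Penrose Schur complements) to justify positivity when $\mtpgram{22;n}$ is only PSD rather than positive definite. By contrast, Frobenius-norm monotonicity on PSD matrices is routine, following from Weyl's eigenvalue monotonicity $\lambda_{i}(A) \leq \lambda_{i}(B)$ whenever $0 \preceq A \preceq B$, which immediately yields $\sum_{i} \lambda_{i}(A)^{2} \leq \sum_{i} \lambda_{i}(B)^{2}$.
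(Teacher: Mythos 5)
Your proof is correct and follows essentially the same route as the paper: the identity $\mtpgram{11;n}-\mtpgram{11\pred 2;n}^{\dag}=\mtpgram{12;n}\mtpgram{22;n}^{+}\mtpgram{12;n}^{H}\succeq 0$ giving the Loewner bound, Frobenius-norm monotonicity on PSD matrices, and then raising to the $(1+\delta)$ power and taking the supremum. Your one addition — explicitly establishing $\mtpgram{11\pred 2;n}^{\dag}\succeq 0$ via generalised Schur-complement positivity — is a point the paper leaves implicit but which is genuinely needed for the monotonicity step, so it is a welcome (if minor) tightening rather than a different argument.
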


\begin{lemma}
    Assuming fourth-order moment conditions already required for \cref{lemma:mpt_wishart}, we have
    \begin{align}
        \sup_{n\in\NN} \EE{\norm{\mtpgram{n}}_F^{2}} < \infty.
    \end{align}
    \begin{proof}
        This follows from the convergence of the mean and variance of the multitaper periodogram to finite values established in \cite{grainger2025spectral} in the proof of Theorem 3.
    \end{proof}
\end{lemma}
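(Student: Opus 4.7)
The plan is to reduce the bound on the squared Frobenius norm of the random matrix $\mtpgram{n}(\freq)$ (at a fixed $\freq$) to a finite sum of second moments of its scalar entries $\mtpgram{XY;n}(\freq)$, and then bound each entry's second moment by splitting it into a variance and a squared mean. Concretely, since there are only $P$ process types,
\begin{align}
\EE{\norm{\mtpgram{n}(\freq)}_F^{2}} = \sum_{X,Y} \EE{\abs{\mtpgram{XY;n}(\freq)}^{2}} = \sum_{X,Y}\left(\var{\mtpgram{XY;n}(\freq)} + \abs{\EE{\mtpgram{XY;n}(\freq)}}^{2}\right),
\end{align}
so it suffices to show that each of the $P^{2}$ mean and variance sequences is bounded uniformly in $n$.

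The means are handled directly by the consistency part of Theorem~3 of \cite{grainger2025spectral}: under the standing assumptions, $\EE{\mtpgram{XY;n}(\freq)} \to \sdf{XY}(\freq)$, which is finite, so the sequence of means is bounded. The variances are where the fourth-order moment hypothesis is used. The variance of a tapered cross-periodogram at a fixed $\freq$ can be expressed as a finite linear combination of integrals against the second- and fourth-order cumulant measures of $(N_{X},N_{Y})$, weighted by products of taper Fourier transforms. Under the standing fourth-order moment assumptions imported via \cref{lemma:mpt_wishart} (assumptions~1--5 and 7 of \cite{grainger2025spectral}), the proof of their Theorem~3 already establishes that these variance expressions converge to the variance of the limiting complex Wishart matrix, which is again finite. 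A convergent real sequence is bounded, so each variance sequence has $\sup_{n} \var{\mtpgram{XY;n}(\freq)} < \infty$.

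Combining these two bounds over the finite index set $(X,Y)$ yields $\sup_{n\in\NN} \EE{\norm{\mtpgram{n}(\freq)}_F^{2}} < \infty$, as required. The only substantive step is the variance bound, since the mean bound is immediate from first-order asymptotic unbiasedness. The main obstacle, and the only part that genuinely uses the fourth-order moment condition, is verifying that the cumulant integrals defining $\var{\mtpgram{XY;n}(\freq)}$ are uniformly bounded in $n$; however, because this is exactly the content of the variance calculation already carried out in the proof of Theorem~3 of \cite{grainger2025spectral}, the argument amounts to citing that calculation rather than redoing it, and the rest of the proof is a one-line triangle-type decomposition.
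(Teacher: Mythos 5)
Your proposal is correct and follows essentially the same route as the paper: the paper's proof is a one-line appeal to the convergence of the mean and variance of the multitaper periodogram to finite limits in the proof of Theorem~3 of \cite{grainger2025spectral}, which is exactly the content of your entrywise mean-plus-variance decomposition of $\EE{\norm{\mtpgram{n}(\freq)}_F^{2}}$. Your write-up merely makes explicit the finite sum over the $P^{2}$ entries and the fact that convergent sequences are bounded, which the paper leaves implicit.
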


\begin{theorem}\label{theorem:bias:partial}
    Given the conditions of the other results in this section, we have

    \begin{align}
        \EE{\mtpgram{11\pred 2; n}^\dag(\freq)}
        &\rightarrow \frac{M+s-P}{M} \sdf{11\pred 2}(\freq)
    \end{align}
    as $n\rightarrow \infty$.
    Note that $P-s=P_Z$ in the notation of the main paper.

    \begin{proof}
        As established, $\mtpgram{n} \xrightarrow{d} \mathcal{W}^C_P \left(M, \sdf{}(\freq)/M \right)$.
        Now we again need to use the continuous mapping theorem for the generalised Schur complement.
        Though the generalised Schur complement is not continuous everywhere, it is continuous almost surely with respect to the Wishart distribution with positive definite covariance (which we have by assumption). Therefore we have $\mtpgram{11\pred 2;n} \xrightarrow{d} \mathcal{W}^C_s(M-(P-s), \Sigma_{11\pred 2}/M)$.
        Furthermore, we also have uniform integrability, and so from standard results \citep[Theorem 25.12]{billingsley2012probability}
        \begin{align}
            \EE{\mtpgram{11\pred 2; n}(\freq)}
            &\rightarrow \EE{A}
        \end{align}
        where $A\sim \mathcal{W}^C_s \left(M-(P-s), \sdf{11\pred 2}(\freq)/M \right)$.
        Finally, by properties of the complex Wishart distribution
        \begin{align}
            \EE{A} &=  (M+s-P) \sdf{11\pred 2}(\freq)/M
        \end{align}
        yielding the desired result.
    \end{proof}
\end{theorem}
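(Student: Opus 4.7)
The plan is to combine the three preceding lemmas in the obvious order: first get distributional convergence of the full multitaper periodogram to a complex Wishart, then push this through the generalised Schur complement to obtain distributional convergence of $\mtpgram{11\pred 2;n}$, and finally upgrade distributional convergence to convergence of expectations by showing uniform integrability and then invoking the known first moment of the complex Wishart distribution.

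Concretely, I would start from \cref{lemma:mpt_wishart}, which gives $\mtpgram{n}(\freq) \xrightarrow{d} \mathcal{W}^C_P(M, \sdf{}(\freq)/M)$. The map $X \mapsto X_{11} - X_{12} X_{22}^{+} X_{21}$ is continuous on the open set of Hermitian positive definite matrices (for positive definite $X_{22}$, $X_{22}^{+} = X_{22}^{-1}$ and matrix inversion is continuous there). Since $\sdf{}(\freq)$ is positive definite by assumption, the limiting complex Wishart is almost surely positive definite, so the set of discontinuities of the generalised Schur complement has probability zero under the limit law. The continuous mapping theorem then yields $\mtpgram{11\pred 2;n}^{\dag} \xrightarrow{d} \mathcal{W}^C_s(M-(P-s), \sdf{11\pred 2}(\freq)/M)$, where the parameters of the limit follow from the Wishart Schur-complement result (Theorem 3.6 of \cite{andersen1995complex}) cited above.

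The next step is to turn this distributional convergence into convergence in mean. For this I would appeal to the two lemmas preceding the theorem: the uniform second-moment bound $\sup_n \EE{\norm{\mtpgram{n}}_F^{2}} < \infty$ transfers (via Loewner monotonicity and $\norm{\mtpgram{11\pred 2;n}^{\dag}}_F \leq \norm{\mtpgram{n}}_F$ a.s.) to $\sup_n \EE{\norm{\mtpgram{11\pred 2;n}^{\dag}}_F^{2}} < \infty$. A uniformly bounded $(1+\delta)$-moment for some $\delta>0$ (here $\delta=1$) implies uniform integrability of each matrix entry, which together with the distributional convergence gives convergence of the entrywise expectation by the standard result \citep[Theorem 25.12]{billingsley2012probability}.

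Finally, I would compute the limit explicitly using the mean of the complex Wishart: if $A \sim \mathcal{W}^C_s(\nu, \Sigma)$ then $\EE{A} = \nu \Sigma$, so with $\nu = M - (P-s)$ and $\Sigma = \sdf{11\pred 2}(\freq)/M$ we obtain $\EE{A} = \tfrac{M+s-P}{M}\sdf{11\pred 2}(\freq)$, matching the claim. The main obstacle I foresee is the continuous mapping step: one has to be careful that the generalised Schur complement is indeed continuous on a set of full measure under the limit law, and that the sequence of pseudoinverses behaves well enough to transfer expectations; the Loewner-order argument in the preceding lemma is what makes this go through cleanly, avoiding any need to control $\mtpgram{22;n}^{+}$ directly.
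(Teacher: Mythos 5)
Your proposal is correct and follows essentially the same route as the paper's proof: Wishart convergence of the multitaper estimate, the continuous mapping theorem applied to the generalised Schur complement (continuous almost surely under the positive definite limit law), uniform integrability obtained from the Loewner-order bound and the second-moment lemma, and finally the complex Wishart mean with degrees of freedom $M-(P-s)$. The only difference is that you make explicit that $\delta = 1$ suffices and spell out the continuity set, which the paper leaves implicit.
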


\subsection{Fast computation}
The formulae for obtaining partial spectra given above are not the most efficient way to construct them \citep{dahlhaus2000graphical}.
The alternate formulae, from \cite{dahlhaus2000graphical} and \cite{eichler2003partial} continue to hold in this setting, as they are just related to matrix manipulations.
In particular omitting the $\freq$ argument, let $g=f^{-1}$ be the function which is the inverse of $f$ at each wavenumber $\freq$ (not the inverse function), and write $g_{XY}$ to mean the $XY$ element of $g$. Then from \cite{eichler2003partial} we have
\begin{align}
    \sdf{XX\pred (YZ)} & = \frac{1}{g_{XX}},                                                                                    \\
    R_{XY \pred Z}     & = \frac{\sdf{XY\pred Z}}{\sqrt{\sdf{XX\pred Z}\sdf{YY\pred Z}}} = \frac{-g_{XY}}{\sqrt{g_{XX}g_{YY}}}, \\
    \sdf{XY\pred Z}    & = \frac{R_{XY}}{1-\abs{R_{XY}}^2}\sqrt{\sdf{XX\pred (YZ)}\sdf{YY\pred (XZ)}}.
\end{align}
Note there is a difference between $\sdf{XX\pred (YZ)}$ and $\sdf{XX\pred Z}$.
We are interested in $\sdf{XX\pred (YZ)}$ and $\sdf{XY\pred Z}$.
Writing these entirely in terms of $g$, we have
\begin{align}
    \sdf{XY\pred Z}
        & = \frac{R_{XY}}{1-\abs{R_{XY}}^2}\sqrt{\sdf{XX\pred (YZ)}\sdf{YY\pred (XZ)}}                     \\
        & = \frac{-g_{XY}/\sqrt{g_{XX}g_{YY}}}{1-\abs{g_{XY}}^2/g_{XX}g_{YY}}\frac{1}{\sqrt{g_{XX}g_{YY}}} \\
        & = \frac{-g_{XY}}{g_{XX}g_{YY}-\abs{g_{XY}}^2}.
\end{align}
So in summary
\begin{align}
    \sdf{XX\pred (YZ)} & = \frac{1}{g_{XX}},                            \\
    \sdf{XY\pred Z}    & = \frac{-g_{XY}}{g_{XX}g_{YY}-\abs{g_{XY}}^2}.
\end{align}

\subsection{Counterexample to conditional orthogonality} \label{app:counterexample}

Let $N_Z$ be a homogeneous Poisson process with intensity $\intensity{Z}>0$ on $\RR^d$.
Now let $\Lambda(u) = [\int_\RRd g(u-x) N_Z(\de x)]^2$ for some integrable kernel $g:\RR^d\to[0,\infty)$.
Then let $N_X$ and $N_Y$ be Cox processes driven by $\Lambda$ but conditionally independent of each other given $\Lambda$.
Then the partial spectra between $X$ and $Y$ accounting for $Z$ is
\begin{align}
    \sdf{XY\pred Z}(\freq) &= 2\intensity{Z}^2 [\abs{G}^2 \ast \abs{G}^2](\freq), & \freq\in\RRd,\label{eq:counterexample:partialspectra}
\end{align}
where $G$ is the Fourier transform of $g$, $\abs{G}^2$ is its elementwise magnitude square and $\ast$ denotes convolution.
The partial spectra $\sdf{XY\pred Z}(\freq)$ is therefore not identically zero.
This contradicts the claim that tests for zero partial spectra (or partial coherence) can be used as tests of conditional orthogonality, as here the partial spectra is not zero, despite the processes being conditionally independent (and thus conditionally orthogonal).

\begin{proof}[Proof of \cref{eq:counterexample:partialspectra}]
We have
\begin{align}
    \EE{\Lambda(u)\Lambda(0)} 
    &= \EE{\left[\int_\RRd g(u-w) N_Z(\de w)\right]^2 \left[\int_\RRd g(-y) N_Z(\de y)\right]^2} \\
    &= \EE{\int_\RRd \int_\RRd \int_\RRd \int_\RRd g(u-w)g(u-x)  g(-y)g(-z) N_Z(\de w)N_Z(\de x)N_Z(\de y) N_Z(\de z)} \\
    &= \int_\RRd \int_\RRd \int_\RRd \int_\RRd g(u-w)g(u-x)  g(-y)g(-z) \momentmeasure{ZZZZ}(\de w \times \de x \times \de y \times \de z) \\
    %
    %
    &= \lambda^4 \int_\RRd \int_\RRd \int_\RRd \int_\RRd g(u-w)g(u-x)  g(-y)g(-z) \de w \de x \de y \de z  \\
    & \quad + \lambda^3 \int_\RRd \int_\RRd \int_\RRd g(u-w)^2  g(-y)g(-z) \de w \de y \de z \\
    & \quad + 4\lambda^3 \int_\RRd \int_\RRd \int_\RRd g(u-w)g(u-x)  g(-w)g(-z) \de w \de x \de z \\
    & \quad + \lambda^3 \int_\RRd \int_\RRd \int_\RRd g(u-w)g(u-x)  g(-y)^2 \de w \de x \de y \\
    & \quad + \lambda^2 \int_\RRd \int_\RRd g(u-w)^2  g(-y)^2 \de w \de y \\
    & \quad + 2\lambda^2 \int_\RRd \int_\RRd g(u-w)g(u-x)  g(-x)g(-w) \de w \de x \\
    & \quad + 2\lambda^2 \int_\RRd \int_\RRd g(u-w)^2  g(-w)g(-z) \de w \de z \\
    & \quad + 2\lambda^2 \int_\RRd \int_\RRd g(u-w)g(u-x)  g(-w)^2 \de w \de x \\
    & \quad + \lambda \int_\RRd \int_\RRd g(u-w)g(u-w)  g(-w)g(-w) \de w \\
    &= \lambda^4 \norm{g}_1^4
    + 2\lambda^3 \norm{g}_2^2 \norm{g}_1^2
    + 4\lambda^3 \norm{g}_1^2 [g \ast g^\ast](u)
    + \lambda^2 \norm{g}_2^4
    + 2\lambda^2 [g \ast g^\ast](u)^2 \\
    & \quad + 2\lambda^2 \norm{g}_1 [g^2 \ast g^\ast](u)
    + 2\lambda^2 \norm{g}_1 [g^2 \ast g^\ast](-u)
    + \lambda [g^2 \ast (g^\ast)^2](u)
\end{align}
Additionally, we have
\begin{align}
    \EE{\Lambda(u)} & = \EE{\left[\int_\RRd g(u-x) N_Z(\de x)\right]^2} \\
                    & = \lambda^2 \norm{g}_1^2 + \lambda \norm{g}_2^2.
\end{align}
Therefore
\begin{align}
    \reducedcumulantdens{\Lambda\Lambda}(u)
    &= \lambda^4 \norm{g}_1^4
    + 2\lambda^3 \norm{g}_2^2 \norm{g}_1^2
    + 4\lambda^3 \norm{g}_1^2 [g \ast g^\ast](u)
    + \lambda^2 \norm{g}_2^4
    + 2\lambda^2 [g \ast g^\ast](u)^2 \\
    & \quad + 2\lambda^2 \norm{g}_1 [g^2 \ast g^\ast](u)
    + 2\lambda^2 \norm{g}_1 [g^2 \ast g^\ast](-u)
    + \lambda [g^2 \ast (g^\ast)^2](u) \\
    & \quad - \left(\lambda^2 \norm{g}_1^2 + \lambda \norm{g}_2^2\right)^2 \\
    & = 4\lambda^3 \norm{g}_1^2 [g \ast g^\ast](u)
    + 2\lambda^2 \left\{[g \ast g^\ast](u)^2 + \norm{g}_1 [g^2 \ast g^\ast](u) + \norm{g}_1 [g^2 \ast g^\ast](-u)\right\} \\
    &\quad + \lambda [g^2 \ast (g^\ast)^2](u).
\end{align}
As a result, we have
\begin{align}
    \sdf{\Lambda\Lambda}(\freq) & = 4\lambda^3 \norm{g}_1^2 \abs{G(\freq)}^2
    + 2\lambda^2 \left\{[\abs{G}^2 \ast \abs{G}^2](\freq) + 2\norm{g}_1 \Re ([G\ast G](\freq) \conj{G(\freq)}) \right\}
    + \lambda \abs{[G\ast G](\freq)}^2
\end{align}
Now we have
\begin{align}
    \sdf{XX}(\freq) & = \sdf{YY}(\freq) = \sdf{\Lambda\Lambda}(\freq) + \lambda_\Lambda \\
    \sdf{XY}(\freq) & = \sdf{\Lambda\Lambda}(\freq).
\end{align}
Finally,
\begin{align}
    \reducedmomentmeasure{Z\Lambda}(B)
    & = \EE{\int_{\UUd} \Lambda(B+z) N_Z(\de z)} \\
    & = \EE{\int_{\UUd} \int_B \left[\int_{\RRd} g(u-x)N_Z(\de x) \right]^2 \de u N_Z(\de z)} \\
    & = \int_B \int_{\RRd} \int_{\RRd} \int_{\RRd} \indicator_{\UUd}(z) g(u+z-x) g(u+z-y) M_{ZZZ}(\de x\times \de y \times \de z) \\
    &= \int_B \lambda^3 \norm{g}_1^2 + \lambda^2\norm{g}_2^2 + 2\lambda^2 \int_{\RRd} \int_{\RRd} \indicator_{\UUd}(z) g(u) g(u+z-y) \de y \de z \\ 
    & \quad + \lambda \int g(u)^2 \indicator_{\UUd}(z) \de z \de u \\
    &= \int_B \lambda^3 \norm{g}_1^2 + \lambda^2\norm{g}_2^2 + 2\lambda^2 \norm{g}_1 g(u) + \lambda g(u)^2 \de u
\end{align}
Therefore, see $\reducedmomentmeasure{Z\Lambda}(B)$ has density
\begin{align}
    \reducedmomentdens{Z\Lambda}(u) & = \lambda^3 \norm{g}_1^2 + \lambda^2\norm{g}_2^2 + 2\lambda^2 \norm{g}_1 g(u) + \lambda g(u)^2.
\end{align}
So
\begin{align}
    \reducedcumulantdens{Z\Lambda}(u) & = \lambda^3 \norm{g}_1^2 + \lambda^2\norm{g}_2^2 + 2\lambda^2 \norm{g}_1 g(u) + \lambda g(u)^2 - \lambda (\lambda^2 \norm{g}_1^2 + \lambda \norm{g}_2^2) \\
    & = 2\lambda^2 \norm{g}_1 g(u) + \lambda g(u)^2
\end{align}
Therefore
\begin{align}
    \sdf{Z\Lambda}(\freq)
    &= 2\lambda^2 \norm{g}_1 G(\freq) + \lambda [G\ast G](\freq).
\end{align}
Furthermore, $\sdf{ZX}=\sdf{ZY}=\sdf{Z\Lambda}$.

Now we can compute the partial spectra
\begin{align}
    \sdf{XY\pred Z}
    & = \sdf{XY} - \sdf{XZ}\sdf{ZZ}^{-1}\sdf{ZY} \\
    &= \sdf{XY} - \lambda^{-1} \conj{\sdf{ZX}}\sdf{ZY} \\
    &= \sdf{XY} - \lambda^{-1} \abs{\sdf{Z\Lambda}}
\end{align}
so for all $\freq\in\RRd$
\begin{align}
    \sdf{XY\pred Z}(\freq) 
    &= \sdf{XY} - \abs{2\lambda^2 \norm{g}_1 G(\freq) + \lambda [G\ast G](\freq)}^2/\lambda \\
    &= \sdf{XY} - 4\lambda^3 \norm{g}_1^2 \abs{G(\freq)}^2 + 2\lambda^2 \Re ([G\ast G](\freq) \conj{G(\freq)}) + \lambda \abs{[G\ast G](\freq)}^2 \\
    &= 2\lambda^2 [\abs{G}^2 \ast \abs{G}^2](\freq)
\end{align}

This is not equal to zero everywhere (unless $g=0$ almost everywhere, which corresponds to having no points of type $X$ or $Y$ almost surely).

\end{proof}

\section{Resampling}\label{sec:resampling}

One common approach to inference for point processes is to compute null envelopes for the statistic of interest, see \citep{myllymaki2017global} for example.
This typically requires simulation under the null of choice.
In the univariate case, the null is usually that the process is a Poisson process, and so we simply simulate from a Poisson process with an intensity equal to the estimated intensity from the observed point pattern.
In the bivariate case, \cite{mrkvivcka2021revisiting} review some common methods for performing null resampling, when the null hypothesis is that the two processes are independent.
The essence of the resampling is to shift one of the two patterns randomly relative to the other, breaking their cross dependence, but retaining marginal properties.
Various schemes are designed to deal with boundary problems which arise from this shifting \citep{mrkvivcka2021revisiting}.

The null in our case is more complicated as it is not a pair of processes which we claim are uncorrelated, but rather the residual processes.
Ideally, if we have realisations of the residuals, we could shift those using a similar strategy to those described by \cite{mrkvivcka2021revisiting}.
However, we do not have direct access to these residuals.
Such resampling in our case is tricky, and remains an open problem, however, we will briefly discuss some of the approaches which we have considered.

Firstly let us consider the intraprocess case.
In this case, in fact, the solution is theoretically straightforward, but practically difficult.
An analogous null to the standard case if we are interested in the partial $K$ function between $X$ and itself accounting for $Z$, is to simulate from an inhomogeneous Poisson process with intensity given by $\Predprocess{X\pred Z}$.
In practice then, we could replace $\Predprocess{X\pred Z}$ with some form of estimate.
However, this approach is sensitive in practice.

In the interprocess case, the problem is even more difficult.
Say that we are interested in the partial $K$ function between $X$ and $Y$ accounting for $Z$.
Then the null hypothesis is that the residual process of $X$ with $Z$ and the residual process of $Y$ with $Z$ are uncorrelated.
There are three approaches we could immediately consider here.
The first is to apply the standard shifting shifting approaches \citep{mrkvivcka2021revisiting} to all of the processes involved, and then compute the partial $K$ function between the shifted processes.
This has the advantage of being simple to implement, but has the disadvantage that we make all of the different processes independent, rather than just the residuals.
The second approach would be to shift only the $X$ process say, breaking the dependence between the residual processes, but also breaking some of the other dependences.
The third approach would be to again simulate from inhomogeneous Poisson processes with intensities given by $\Predprocess{X\pred Z}$ and $\Predprocess{Y\pred Z}$ respectively.

Say that we are in the cross-process case, and we have processes of interest $X$ and $Y$, and covariate processes $Z$.
Then there are essentially (due to symmetries) 6 quantities of interest: $\sdf{XX}, \; \sdf{YY}, \; \sdf{ZZ}, \; \sdf{XY}, \; \sdf{XZ}, \; \sdf{YZ}$.
Now, the partial spectra of interest is $\sdf{XY\pred Z} = \sdf{XY} - \sdf{XZ}\sdf{ZZ}^{-1}\sdf{ZY}$.
We want the resampling method to set $\sdf{XY\pred Z}$ to zero, whilst maintaining $\sdf{XX}$, $\sdf{YY}$, $\sdf{XZ}$, $\sdf{YZ}$ and $\sdf{ZZ}$.
In other words, by setting $\sdf{XY} = \sdf{XZ}\sdf{ZZ}^{-1}\sdf{ZY}$.
However, none of the methods we have previously discussed achieve this exactly.
\Cref{tab:resampling_methods} summarises the effects of the different resampling methods on the different components of the spectral density function.
In pilot simulation studies, we have found that these approaches can perform well in some scenarios, but when there is strong dependence between the processes, they can fail to achieve the desired Type I error rate, in some cases catastrophically (we saw rejection rates around 40\% when the nominal level was 5\%).
Therefore, this aspect requires further research.

\begin{table}[h]
\centering
\begin{tabular}{l|cccc}
\toprule
\textbf{Component} & \textbf{Ideal Method} & \textbf{Shift all} & \textbf{Shift $X$} & \textbf{Cox Generation} \\
\midrule
$\sdf{XX}$ & --- & --- & --- & $\sdf{XZ}\sdf{ZZ}^{-1}\sdf{ZX} + \lambda_X$ \\
$\sdf{YY}$ & --- & --- & --- & $\sdf{YZ}\sdf{ZZ}^{-1}\sdf{ZY} + \lambda_Y$ \\
$\sdf{ZZ}$ & --- & --- & --- & --- \\
$\sdf{XY}$ & $\sdf{XZ}\sdf{ZZ}^{-1}\sdf{ZY}$ & $0$ & $0$ & $\sdf{XZ}\sdf{ZZ}^{-1}\sdf{ZY}$ \\
$\sdf{XZ}$ & --- & $0$ & $0$ & --- \\
$\sdf{YZ}$ & --- & $0$ & --- & --- \\
\bottomrule
\end{tabular}
\caption{Effects of different resampling methods on spectral density components. A --- indicates that a component is preserved.}
\label{tab:resampling_methods}
\end{table}

\section{Example models}
\label{app:example}
\subsection{Model definition}
Consider the following simple model.
Say that $Z$ is a homogeneous Poisson process with intensity $\lambda_Z$.
Say that $X$ and $Y$ are generated by cluster processes, which cluster independently around $Z$ points, with $\clusternumber{X}$ points per $X$ cluster and $\clusternumber{Y}$ points per $Y$ cluster.
Say that the distribution of the difference of the point from its parent are given by $\clusternumber{X}$ and $\clusternumber{Y}$ for points of type $X$ and $Y$ respectively.
If they have densities, we will write $\clusterdens{X}$ and $\clusterdens{Y}$.

\subsection{Model properties}
Then one can rewrite the cluster process as a Cox process with the driving intensity measure
\begin{equation}
    \Lambda_X(B) = \clusteraverage{X} \int_\RRd \clusteroffset{X}(B-u) N_Z(\de u),
\end{equation}
and similarly for $N_Y$ \citep{daley2003introduction}.

\begin{lemma}\label{lemma:example:covariance}
    We have the following reduced moment measures
    \begin{align}
        \reducedmomentmeasure{XZ}(B) & = \clusteraverage{X}\int_\RRd \clusteroffset{X}(B-u) \reducedmomentmeasure{ZZ}(\de u),                                                                                       \\
        \reducedmomentmeasure{XY}(B) & = \clusteraverage{X} \clusteraverage{Y} \int_{\RRd} \int_\RRd \clusteroffset{X}(B+y'-u) \clusteroffset{Y}(\de y') \reducedmomentmeasure{ZZ}(\de u),                          \\
        \reducedmomentmeasure{XX}(B) & =\clusteraverage{X}\intensity{Z}\delta(B) + \clusteraverage{X}^2 \int_{\RRd} \int_\RRd \clusteroffset{X}(B+y'-u) \clusteroffset{X}(\de y') \reducedmomentmeasure{ZZ}(\de u),
    \end{align}
    where $\delta$ is the Dirac measure.
\end{lemma}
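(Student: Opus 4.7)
The plan is to exploit the Cox process representation noted immediately before the lemma statement: conditional on $N_Z$, the processes $N_X$ and $N_Y$ are independent Poisson processes with driving intensity measures $\Lambda_X$ and $\Lambda_Y$. All three formulas then follow by the tower property together with the Poisson conditional moment identities, with the reduction from the full second moment measure $M_{ZZ}$ to its reduced counterpart $\reducedmomentmeasure{ZZ}$ coming from the translation invariance of $N_Z$. The key technical identity I would invoke repeatedly is that, for any measurable $f \geq 0$, $\int f(u,v)\, M_{ZZ}(du \times dv) = \iint f(w+v, v)\, \reducedmomentmeasure{ZZ}(dw)\, dv$, which is equivalent to homogeneity of $N_Z$.

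For $\reducedmomentmeasure{XZ}$, I would condition on $N_Z$ to replace $N_X$ by its conditional mean: $\reducedmomentmeasure{XZ}(B) = \eta_X\, \mathbb{E}\bigl[\iint \indicator_{\UU^d}(y)\, P_X(B+y-u)\, N_Z(du)\, N_Z(dy)\bigr]$. Taking expectation turns the double $N_Z$ integral into an integral against $M_{ZZ}(du \times dv)$; applying the homogeneity identity above with $w = u - v$ collapses the $\UU^d$ integration to a factor of $1$, leaving exactly the claimed formula.

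For $\reducedmomentmeasure{XY}$ with $X \neq Y$, the conditional independence of $N_X$ and $N_Y$ given $N_Z$ combined with Campbell's theorem gives $\reducedmomentmeasure{XY}(B) = \mathbb{E}\bigl[\int_{\UU^d} \Lambda_X(B+y)\, \Lambda_Y(dy)\bigr]$. Expanding both $\Lambda$'s yields a double sum over $N_Z$-points; the substitution $y' = y - v$, where $v$ indexes the $N_Z$-atom producing the $Y$-cluster point $y$, plus the same homogeneity reduction as before absorbs the $\UU^d$ indicator and gives the stated double integral against $P_X(B+y'-u)\, P_Y(dy')\, \reducedmomentmeasure{ZZ}(du)$.

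The $\reducedmomentmeasure{XX}$ formula requires one additional ingredient: the Poisson self-moment identity, which conditional on $N_Z$ gives $\mathbb{E}\bigl[\iint g(x,y)\, N_X(dx)\, N_X(dy) \mid N_Z\bigr] = \iint g\, d\Lambda_X \otimes d\Lambda_X + \int g(x,x)\, \Lambda_X(dx)$. The off-diagonal piece contributes exactly the $\reducedmomentmeasure{XY}$ integral with $Y$ replaced by $X$, while the diagonal piece produces $\indicator_B(0) \cdot \mathbb{E}[\Lambda_X(\UU^d)] = \eta_X \lambda_Z\, \delta(B)$, the Dirac atom at the origin (consistent with $\reducedmomentmeasure{XX}(\{0\}) = \intensity{X} = \eta_X \lambda_Z$ for a simple point process). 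The main obstacle is not depth but bookkeeping: tracking the successive changes of variable, invoking the homogeneity identity correctly to eliminate the $\UU^d$ integration, and, in the $X=X$ case, carefully separating the Poisson diagonal from the product part so that the atom at the origin is neither lost nor double-counted.
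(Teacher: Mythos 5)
Your proposal is correct and follows essentially the same route as the paper: the paper likewise reduces everything to the moment measures of the driving intensities via the Cox representation (citing Daley and Vere-Jones, Chapter 6, for $\reducedmomentmeasure{XZ}=\reducedmomentmeasure{\Lambda_X,Z}$, $\reducedmomentmeasure{XY}=\reducedmomentmeasure{\Lambda_X,\Lambda_Y}$ and $\reducedmomentmeasure{XX}=\reducedmomentmeasure{\Lambda_X,\Lambda_X}+\intensity{X}\delta$, which you simply re-derive from the conditional Poisson moment identities) and then applies exactly your disintegration/homogeneity identity to collapse the $\UU^d$ integration and obtain the stated formulas, including the atom $\clusteraverage{X}\intensity{Z}\delta(B)$. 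No gaps.
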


\begin{lemma}
    If $\clusteroffset{X}$ and $\clusteroffset{Y}$ admit densities $\clusterdens{X}$ and $\clusterdens{Y}$, then the reduced moment measures admit densities (except for the atom at zero), so that
    \begin{align}
        \reducedmomentdens{XZ}(x) & = \clusteraverage{X} \int_\RRd \clusterdens{X}(x-u) \reducedmomentmeasure{ZZ}(\de u),                                                                                            \\
        \reducedmomentdens{XY}(x) & = \clusteraverage{X} \clusteraverage{Y} \int_{\RRd} \int_\RRd \clusterdens{X}(x+y'-u) \clusterdens{Y}(y') \leb{\de y'} \reducedmomentmeasure{ZZ}(\de u),                         \\
        \reducedmomentdens{XX}(x) & =\clusteraverage{X}\intensity{Z}\delta(x) + \clusteraverage{X}^2 \int_{\RRd} \int_\RRd \clusterdens{X}(x+y'-u) \clusterdens{X}(y') \leb{\de y'} \reducedmomentmeasure{ZZ}(\de u)
    \end{align}
    where here $\delta$ means the Dirac delta function.
\end{lemma}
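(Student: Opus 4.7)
The plan is to derive each density formula by substituting the density representations of $\clusteroffset{X}$ and $\clusteroffset{Y}$ into the expressions from the preceding lemma and then applying Tonelli's theorem to pull the Lebesgue integration outside. Since all integrands are non-negative (reduced moment measures are positive measures, and $\clusterdens{X}, \clusterdens{Y}$ are non-negative densities), Tonelli applies unconditionally and no additional integrability hypotheses are needed.

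First I would handle the cross case $\reducedmomentmeasure{XZ}$. Starting from the previous lemma and using $\clusteroffset{X}(B-u) = \int_B \clusterdens{X}(x-u)\,\leb{\de x}$, substitute to obtain
\begin{equation}
    \reducedmomentmeasure{XZ}(B) = \clusteraverage{X}\int_\RRd \int_B \clusterdens{X}(x-u)\,\leb{\de x}\,\reducedmomentmeasure{ZZ}(\de u),
\end{equation}
and then swap the order of integration by Tonelli. The integrand inside the Lebesgue integral is then identified as $\reducedmomentdens{XZ}(x)$, as claimed.

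Next I would do $\reducedmomentmeasure{XY}$ in exactly the same way, this time replacing both $\clusteroffset{X}(B+y'-u)$ by $\int_B \clusterdens{X}(x+y'-u)\,\leb{\de x}$ and $\clusteroffset{Y}(\de y')$ by $\clusterdens{Y}(y')\,\leb{\de y'}$. A single application of Tonelli on the triple non-negative integral yields the stated density. The case $\reducedmomentmeasure{XX}$ is handled analogously by splitting off the atom at $0$: the term $\clusteraverage{X}\intensity{Z}\delta(B)$ is kept as a Dirac (it is singular w.r.t.\ Lebesgue measure and has no density in the classical sense, explaining the caveat in the statement), while the remaining diffuse part is treated exactly as in the $\reducedmomentmeasure{XY}$ case with $\clusterdens{Y}$ replaced by $\clusterdens{X}$.

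There is no real obstacle here: the only subtlety is that $\reducedmomentmeasure{ZZ}$ itself contains an atom at $0$ (namely $\intensity{Z}\delta_0$ plus a diffuse part), but this does not interfere with the Tonelli swap because the inner integrand $\clusterdens{X}(x-u)$ (or its analogue) is a well-defined measurable function of $u$ for each fixed $x$, and integrating against any $\sigma$-finite positive measure is permitted. The formulas written with $\reducedmomentmeasure{ZZ}(\de u)$ on the outside automatically absorb both the atomic and diffuse contributions of $\reducedmomentmeasure{ZZ}$, and the resulting $\reducedmomentdens{XZ}, \reducedmomentdens{XY}, \reducedmomentdens{XX}$ are genuine Lebesgue densities of the diffuse parts of the corresponding reduced moment measures.
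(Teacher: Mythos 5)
Your proposal is correct and is exactly the intended argument: the paper states this lemma without an explicit proof, treating it as an immediate consequence of the preceding lemma, and your substitution of $\clusteroffset{X}(B-u)=\int_B \clusterdens{X}(x-u)\,\leb{\de x}$ (and similarly for $\clusteroffset{Y}$) followed by a Tonelli swap of the non-negative integrals is precisely how that consequence is obtained. Your handling of the atom at zero in $\reducedmomentmeasure{XX}$ and of the atom inside $\reducedmomentmeasure{ZZ}$ is also consistent with the paper's statement.
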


For the distribution $\clusteroffset{X}$, we will write
\begin{equation}
    \clusterchar{X}(\freq) = \int_\RRd e^{2\pi i \ip{x}{\freq}} \clusteroffset{X}(\de x)
\end{equation}
to be the characteristic function (the inverse Fourier transform of the distribution).
Note we are including the factor of $2\pi$ in our convention.

\begin{lemma}\label{lemma:example:spectra}
    We have the following spectral density functions (with the remaining coming from relabelling and symmetry)
    \begin{align}
        \sdf{XZ}(\freq) & = \clusteraverage{X}\conj{\clusterchar{X}(\freq)} \sdf{ZZ}(\freq),                                            \\
        \sdf{XY}(\freq) & = \clusteraverage{X} \clusteraverage{Y} \conj{\clusterchar{X}(\freq)} \clusterchar{Y}(\freq) \sdf{ZZ}(\freq), \\
        \sdf{XX}(\freq) & = \clusteraverage{X}\intensity{Z} + \clusteraverage{X}^2 \abs{\clusterchar{X}(\freq)}^2 \sdf{ZZ}(\freq).
    \end{align}
\end{lemma}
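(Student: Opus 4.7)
The plan is to compute each reduced cumulant measure from the reduced moment measure given in the preceding lemma, and then take its Fourier transform to obtain the spectral density. Since $Z$ is a homogeneous Poisson process with intensity $\intensity{Z}$, we have $\reducedmomentmeasure{ZZ}(B)=\intensity{Z}\delta_0(B)+\intensity{Z}^2\leb{B}$, and hence $\reducedcumulantmeasure{ZZ}(B)=\intensity{Z}\delta_0(B)$ and $\sdf{ZZ}(\freq)=\intensity{Z}$. The offspring intensities are $\intensity{X}=\clusteraverage{X}\intensity{Z}$ and $\intensity{Y}=\clusteraverage{Y}\intensity{Z}$.

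For $\sdf{XZ}$, split the $\reducedmomentmeasure{ZZ}(\de u)$ integral appearing in the preceding lemma into its atom and Lebesgue parts. The Lebesgue part contributes $\clusteraverage{X}\intensity{Z}^2\int\clusteroffset{X}(B-u)\,\de u = \clusteraverage{X}\intensity{Z}^2\leb{B} = \intensity{X}\intensity{Z}\leb{B}$ (using that $\clusteroffset{X}$ is a probability measure), which exactly cancels the $-\intensity{X}\intensity{Z}\leb{B}$ subtracted in forming $\reducedcumulantmeasure{XZ}$. The atom leaves $\reducedcumulantmeasure{XZ}=\clusteraverage{X}\intensity{Z}\clusteroffset{X}$. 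Taking the Fourier transform with the paper's convention $\sdf{XY}(\freq)=\int e^{-2\pi i\ip{u}{\freq}}\reducedcumulantmeasure{XY}(\de u)$, and noting that $\clusterchar{X}(\freq)=\int e^{+2\pi i\ip{x}{\freq}}\clusteroffset{X}(\de x)$ carries the opposite sign, gives $\sdf{XZ}(\freq)=\clusteraverage{X}\intensity{Z}\conj{\clusterchar{X}(\freq)}=\clusteraverage{X}\conj{\clusterchar{X}(\freq)}\sdf{ZZ}(\freq)$, as claimed.

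The $\sdf{XY}$ calculation proceeds in parallel on the double integral from the preceding lemma: the Lebesgue-in-$u$ contribution again cancels the $-\intensity{X}\intensity{Y}\leb{B}$ subtraction, leaving $\reducedcumulantmeasure{XY}(B)=\clusteraverage{X}\clusteraverage{Y}\intensity{Z}\int\clusteroffset{X}(B+y')\clusteroffset{Y}(\de y')$, which is $\clusteraverage{X}\clusteraverage{Y}\intensity{Z}$ times the distribution of $X'-Y'$ for independent $X'\sim\clusteroffset{X}$, $Y'\sim\clusteroffset{Y}$, and whose Fourier transform factorises as $\clusteraverage{X}\clusteraverage{Y}\intensity{Z}\conj{\clusterchar{X}(\freq)}\clusterchar{Y}(\freq)$. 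For $\sdf{XX}$, the preceding lemma carries an extra diagonal atom $\clusteraverage{X}\intensity{Z}\delta_0(B)$ reflecting the simpleness of the ground process; this survives the subtraction intact and contributes the constant $\intensity{X}=\clusteraverage{X}\intensity{Z}$ to the spectrum, while the convolution term gives $\clusteraverage{X}^2\abs{\clusterchar{X}(\freq)}^2\sdf{ZZ}(\freq)$ by the same argument as above. The main (modest) obstacle is keeping the bookkeeping straight between the atom and Lebesgue parts of $\reducedmomentmeasure{ZZ}$, and being consistent with sign conventions between the forward Fourier transform used for spectra and the opposite-sign convention used for the characteristic function $\clusterchar{X}$; once this is done, everything is the convolution theorem applied to reduced cumulants.
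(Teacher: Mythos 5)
Your proof is correct, and its skeleton matches the paper's: start from the reduced moment measures of the preceding lemma, pass to reduced cumulant measures by observing that the Lebesgue contribution cancels the subtracted product-of-intensities term, then apply the Fourier/convolution identities with the sign convention $\clusterchar{X}(\freq)=\int e^{+2\pi i\ip{x}{\freq}}\clusteroffset{X}(\de x)$ producing the conjugate factors. The one genuine difference is where the Poisson assumption enters. You specialise immediately, writing $\reducedmomentmeasure{ZZ}=\intensity{Z}\delta_0+\intensity{Z}^2\ell$ and hence $\sdf{ZZ}\equiv\intensity{Z}$, so your cancellation and the final formulas rest on the explicit Poisson second-order structure. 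The paper instead never uses Poisson-ness: it shows that $\intensity{X}\leb{B}$ can itself be written as $\clusteraverage{X}\intensity{Z}\int\clusteroffset{X}(B-u)\,\ell(\de u)$, so that each reduced cumulant measure has the same convolved form as the moment measure but with $\reducedcumulantmeasure{ZZ}$ in place of $\reducedmomentmeasure{ZZ}$, e.g.\ $\reducedcumulantmeasure{XZ}(B)=\clusteraverage{X}\int\clusteroffset{X}(B-u)\,\reducedcumulantmeasure{ZZ}(\de u)$, and then Fourier transforms, leaving $\sdf{ZZ}(\freq)$ symbolic. That is why the lemma is stated in terms of $\sdf{ZZ}(\freq)$ rather than $\intensity{Z}$: the identities hold for any stationary parent process, which the paper exploits elsewhere (the second model example has a non-Poisson parent). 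Your argument yields the same displayed formulas only because $\sdf{ZZ}=\intensity{Z}$ under the stated model, so it is valid here but strictly less general; one minor quibble is that the diagonal atom $\clusteraverage{X}\intensity{Z}\delta_0$ in $\reducedmomentmeasure{XX}$ comes from points paired with themselves in the Cox/cluster representation rather than from simpleness of the ground process, but this does not affect the computation.
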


\begin{proposition}\label[proposition]{prop:pred:kernel}
    The prediction kernel for predicting $X$ from $Z$ is given by
    \begin{equation}
        \Predkernel{X\pred Z} = \clusteraverage{X}\clusteroffset{X}.
    \end{equation}
\end{proposition}

\begin{proposition}\label[proposition]{prop:example:partialspectra}
    The partial spectra for the cluster model are
    \begin{align}
        \sdf{XY\pred Z}(\freq)     & = 0,                                                                                                                                                                                    \\
        \sdf{X,Z \pred Y}(\freq)   & = \frac{\clusteraverage{X} \intensity{Z} \conj{\clusterchar{X}(\freq)} \sdf{ZZ}(\freq)}{\intensity{Z} + \clusteraverage{Y} \abs{\clusterchar{Y}(\freq)}^2 \sdf{ZZ}(\freq)},             \\
        \sdf{X,X\pred Y, Z}(\freq) & = \intensity{X},                                                                                                                                                                        \\
        \sdf{Z,Z\pred X, Y}(\freq) & = \frac{\intensity{Z}\sdf{ZZ}(\freq)}{\clusteraverage{Y}\abs{\clusterchar{Y}(\freq)}^2\sdf{ZZ}(\freq) + \clusteraverage{X}\abs{\clusterchar{X}(\freq)}^2\sdf{ZZ}(\freq)+\intensity{Z}}.
    \end{align}
\end{proposition}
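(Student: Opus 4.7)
The plan is to substitute the spectra from \cref{lemma:example:spectra} into the definition of partial spectra given by \cref{prop:partialspectra} and simplify. For readability in what follows, write $\alpha=\clusteraverage{X}$, $\beta=\clusteraverage{Y}$, $\phi=\clusterchar{X}(\freq)$, $\psi=\clusterchar{Y}(\freq)$, $g=\sdf{ZZ}(\freq)$, $\lambda=\intensity{Z}$, so that
\begin{align}
    \sdf{XX}&=\alpha\lambda+\alpha^{2}|\phi|^{2}g,\qquad \sdf{YY}=\beta\lambda+\beta^{2}|\psi|^{2}g,\qquad \sdf{ZZ}=g,\\
    \sdf{XY}&=\alpha\beta\conj{\phi}\psi g,\qquad \sdf{XZ}=\alpha\conj{\phi}g,\qquad \sdf{YZ}=\beta\conj{\psi}g.
\end{align}

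First I would handle the two ``single-covariate'' identities by direct substitution into $\sdf{AB\pred C}=\sdf{AB}-\sdf{AC}\sdf{CC}^{-1}\sdf{CB}$. For $\sdf{XY\pred Z}$, we get $\sdf{XZ}\sdf{ZZ}^{-1}\sdf{ZY}=\alpha\conj{\phi}g\cdot g^{-1}\cdot\beta\psi g=\alpha\beta\conj{\phi}\psi g=\sdf{XY}$, so $\sdf{XY\pred Z}=0$. For $\sdf{XZ\pred Y}$, substitute and clear denominators; the cross term $\sdf{XY}\sdf{YY}^{-1}\sdf{YZ}$ combines with $\sdf{XZ}$ over the common denominator $\intensity{Z}+\beta|\psi|^{2}g$ to give the stated expression. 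Both are routine algebra.

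For the two ``double-covariate'' identities I would exploit the structure already uncovered. Since we have just shown $\sdf{XY\pred Z}=0$, the recursive identity for partial spectra (which follows from \cref{prop:partialspectra} applied to the residual processes $\residualprocess{X\pred Z}$, $\residualprocess{Y\pred Z}$) gives
\begin{equation}
\sdf{XX\pred(YZ)}=\sdf{XX\pred Z}-\sdf{XY\pred Z}\sdf{YY\pred Z}^{-1}\sdf{YX\pred Z}=\sdf{XX\pred Z},
\end{equation}
and an immediate calculation yields $\sdf{XX\pred Z}=\sdf{XX}-|\sdf{XZ}|^{2}/\sdf{ZZ}=\alpha\lambda=\intensity{X}$. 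This reflects that conditionally on $Z$ the process $X$ is Poisson with intensity $\Lambda_{X}$ that is itself a linear functional of $N_{Z}$, so after linearly removing $Z$ only the conditional Poisson noise of constant spectrum $\intensity{X}$ remains.

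The last identity, $\sdf{ZZ\pred(XY)}$, is the one I expect to be the main obstacle, because now the covariate block is $2\times 2$ and the formula
\begin{equation}
\sdf{ZZ\pred(XY)}=\sdf{ZZ}-\bigl(\sdf{ZX},\sdf{ZY}\bigr)\begin{pmatrix}\sdf{XX}&\sdf{XY}\\\sdf{YX}&\sdf{YY}\end{pmatrix}^{-1}\begin{pmatrix}\sdf{XZ}\\\sdf{YZ}\end{pmatrix}
\end{equation}
expands into several terms that must be combined over a common denominator. The cleanest route, which I would use, is the determinant/Schur identity $\sdf{ZZ\pred(XY)}=\det\sdf{}/\det\sdf{\{X,Y\}}$ together with the rank-one structure
\begin{equation}
\sdf{}=\mathrm{diag}(\intensity{X},\intensity{Y},0)+g\,vv^{H},\qquad v=(\alpha\conj{\phi},\beta\conj{\psi},1)^{H}.
\end{equation}
The matrix determinant lemma (applied after a vanishing perturbation $0\mapsto\epsilon$ of the singular diagonal entry and taking $\epsilon\downarrow0$) then gives $\det\sdf{}=\alpha\beta\lambda^{2}g$, while a direct $2\times 2$ expansion yields $\det\sdf{\{X,Y\}}=\alpha\beta\lambda\bigl(\lambda+\alpha|\phi|^{2}g+\beta|\psi|^{2}g\bigr)$. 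Dividing the two recovers the claimed ratio $\intensity{Z}\sdf{ZZ}/\bigl(\intensity{Z}+\alpha|\phi|^{2}\sdf{ZZ}+\beta|\psi|^{2}\sdf{ZZ}\bigr)$. Apart from verifying that the limiting determinant argument is legitimate (which only requires that $\sdf{ZZ}$ is bounded away from zero, guaranteed by the Poisson baseline), the remaining work is bookkeeping.
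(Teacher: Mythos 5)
Your proposal is correct, and for the first two identities ($\sdf{XY\pred Z}=0$ and $\sdf{X,Z\pred Y}$) it coincides with the paper's proof: direct substitution of the spectra from \cref{lemma:example:spectra} into $\sdf{AB\pred C}=\sdf{AB}-\sdf{AC}\sdf{CC}^{-1}\sdf{CB}$. For the two double-covariate identities you take a genuinely different route. The paper writes out the $2\times 2$ inverse of the covariate block explicitly and grinds through the resulting fractions, leaning on the algebraic relations $\sdf{XY}=\sdf{XZ}\sdf{ZY}/\sdf{ZZ}$ and $\abs{\sdf{XY}}^2=(\sdf{XX}-\intensity{X})(\sdf{YY}-\intensity{Y})$ to collapse the numerator. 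You instead obtain $\sdf{X,X\pred Y,Z}$ from the iterated-partialization (Schur quotient) identity $\sdf{XX\pred (YZ)}=\sdf{XX\pred Z}-\sdf{XY\pred Z}\sdf{YY\pred Z}^{-1}\sdf{YX\pred Z}$, which trivializes once $\sdf{XY\pred Z}=0$ is known, and you get $\sdf{Z,Z\pred X,Y}$ from the determinant ratio $\det \sdf{}/\det \sdf{\{X,Y\}}$ exploiting the rank-one structure $\sdf{}=\mathrm{diag}(\intensity{X},\intensity{Y},0)+\sdf{ZZ}\,vv^H$; both determinants check out ($\clusteraverage{X}\clusteraverage{Y}\intensity{Z}^2\sdf{ZZ}$ and $\clusteraverage{X}\clusteraverage{Y}\intensity{Z}(\intensity{Z}+\clusteraverage{X}\abs{\clusterchar{X}}^2\sdf{ZZ}+\clusteraverage{Y}\abs{\clusterchar{Y}}^2\sdf{ZZ})$), giving the stated ratio. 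Your route is shorter and more structural — it makes visible why the answers are clean (the spectral matrix is a nonnegative diagonal plus a rank-one term), whereas the paper's brute-force expansion requires no auxiliary facts beyond \cref{prop:partialspectra}. Two small points to tidy: the quotient identity you invoke is not literally \cref{prop:partialspectra} applied to residual processes and deserves a one-line justification (it is the Crabtree--Haynsworth quotient property of Schur complements, valid here because $\sdf{ZZ}>0$ and $\sdf{YY\pred Z}=\intensity{Y}>0$); and your $v$ should be the column vector with entries $(\clusteraverage{X}\conj{\clusterchar{X}},\clusteraverage{Y}\conj{\clusterchar{Y}},1)$ rather than the conjugate transpose of that row (as written, the off-diagonal entries come out conjugated), though this slip does not affect the determinants or the conclusion, and the $\epsilon$-perturbation limit is harmless since the determinant is polynomial in the perturbed entry.
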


\subsection{Proofs of model properties}
Let $\momentmeasure{XY}$ be the joint moment measure of $N_X$ and $N_Y$ so that $\momentmeasure{XY}(A\times B) = \EE{N_X(A) N_Y(B)}$ for $A,B\in\borel{\RR^d}$.
This relates to the reduced moment measure by
\begin{align}
    \int_{\RR^{2d}} g(x, y) \momentmeasure{XY}(\de x\times \de y) = \int_\RRd \int_\RRd g(z+u, z) \reducedmomentmeasure{XY}(\de u) \leb{\de z}
\end{align}
for bounded measurable functions $g:\RR^{2d}\to\RR$ of bounded support.

\begin{proof}[Proof of \cref{lemma:example:covariance}]
    From \cite{daley2003introduction} Chapter 6, we have
    \begin{align}
        \intensity{X}                & = \clusteraverage{X} \intensity{Z},                                    \\
        \reducedmomentmeasure{XZ}(B) & = \reducedmomentmeasure{\Lambda_X,Z}(B),                               \\
        \reducedmomentmeasure{XY}(B) & = \reducedmomentmeasure{\Lambda_X,\Lambda_Y}(B),                       \\
        \reducedmomentmeasure{XX}(B) & = \reducedmomentmeasure{\Lambda_X,\Lambda_X}(B) + \lambda_X \delta(B).
    \end{align}
    The remaining can be obtained by relabelling and symmetry.
    Proceeding in turn, firstly
    \begin{align}
        \reducedmomentmeasure{XZ}(B)
            & = \reducedmomentmeasure{\Lambda_X,Z}(B)                                                                                          \\
            & = \EE{\int_\UUd \Lambda_X(B+y) N_Z(\de y)}                                                                                       \\
            & = \clusteraverage{X} \EE{\int_\RRd \int_\RRd \clusteroffset{X}(B+y-x) \indicator_{\UUd}(y) N_Z(\de x) N_Z(\de y)}                \\
            & = \clusteraverage{X}\int_{\RR^{2d}} \clusteroffset{X}(B+y-x) \indicator_{\UUd}(y) \momentmeasure{ZZ}(\de x \times \de y)         \\
            & = \clusteraverage{X}\int_\RRd \int_\RRd \clusteroffset{X}(B-u) \indicator_{\UUd}(z) \reducedmomentmeasure{ZZ}(\de u) \leb{\de z} \\
            & = \clusteraverage{X}\int_\RRd \clusteroffset{X}(B-u) \reducedmomentmeasure{ZZ}(\de u).
    \end{align}
    Secondly
    \begin{align}
        \reducedmomentmeasure{XY}(B)
            & = \reducedmomentmeasure{\Lambda_X,\Lambda_Y}(B)                                                                                                                                                   \\
            & = \EE{\int_\UUd \Lambda_X(B+y) \Lambda_Y(\de y)}                                                                                                                                                  \\
            & = \clusteraverage{X} \clusteraverage{Y}\EE{\int_\RRd \int_\RRd \int_\RRd \clusteroffset{X}(B+y-x) \indicator_{\UUd}(y) N_Z(\de x) \clusteroffset{Y}(\de y-z) N_Z(\de z)}                          \\
            & = \clusteraverage{X} \clusteraverage{Y}\EE{\int_\RRd \int_\RRd \int_\RRd \clusteroffset{X}(B+y'+z-x) \indicator_{\UUd}(y'+z) N_Z(\de x) \clusteroffset{Y}(\de y') N_Z(\de z)}                     \\
            & = \clusteraverage{X} \clusteraverage{Y} \int_{\RR^{2d}} \int_\RRd \clusteroffset{X}(B+y'+z-x) \indicator_{\UUd}(y'+z)  \clusteroffset{Y}(\de y') \momentmeasure{ZZ}(\de x \times \de z)           \\
            & = \clusteraverage{X} \clusteraverage{Y} \int_{\RRd} \int_\RRd \int_\RRd \clusteroffset{X}(B+y'-u) \indicator_{\UUd}(y'+z)  \clusteroffset{Y}(\de y') \reducedmomentmeasure{ZZ}(\de u) \leb{\de z} \\
            & = \clusteraverage{X} \clusteraverage{Y} \int_{\RRd} \int_\RRd \int_\RRd \clusteroffset{X}(B+y'-u) \indicator_{\UUd}(z')  \leb{\de z'} \clusteroffset{Y}(\de y') \reducedmomentmeasure{ZZ}(\de u)  \\
            & = \clusteraverage{X} \clusteraverage{Y} \int_{\RRd} \int_\RRd \clusteroffset{X}(B+y'-u) \clusteroffset{Y}(\de y') \reducedmomentmeasure{ZZ}(\de u).
    \end{align}
    Finally by a similar argument
    \begin{align}
        \reducedmomentmeasure{XX}(B) & =\clusteraverage{X}\intensity{Z}\delta(B) + \clusteraverage{X}^2 \int_{\RRd} \int_\RRd \clusteroffset{X}(B+y'-u) \clusteroffset{X}(\de y') \reducedmomentmeasure{ZZ}(\de u).
    \end{align}
\end{proof}

\begin{proof}[Proof of \cref{lemma:example:spectra}]
    Note that we have immediately that
    \begin{align}
        \lambda_X \ell(B) &= \clusteraverage{X}\intensity{Z}\ell(B) \\
        &= \clusteraverage{X}\intensity{Z}\int_{\RRd} \indicator_B(u) \ell(\de u) \int_\RRd \clusteroffset{X}(\de x) \\
        &= \clusteraverage{X}\intensity{Z}\int_{\RRd} \int_\RRd \indicator_B(u+x)   \clusteroffset{X}(\de x) \ell(\de u) \\
        &= \clusteraverage{X}\intensity{Z}\int_{\RRd} \clusteroffset{X}(B-u) \ell(\de u), \\
    \end{align}
    and therefore
    \begin{align}
        \reducedcumulantmeasure{XZ}(B) & = 
        \reducedmomentmeasure{XZ}(B) - \intensity{X}\intensity{Z}\ell(B)\\
        & = \clusteraverage{X}\int_\RRd \clusteroffset{X}(B-u) \reducedmomentmeasure{ZZ}(\de u) -  \clusteraverage{X}\intensity{Z}^2\int_{\RRd} \clusteroffset{X}(B-u) \ell(\de u) \\
        & = \clusteraverage{X}\int_\RRd \clusteroffset{X}(B-u) \reducedcumulantmeasure{ZZ}(\de u),  \\
    \end{align}

    Similarly we have
    \begin{align}
        \reducedcumulantmeasure{XY}(B) & = \clusteraverage{X} \clusteraverage{Y} \int_{\RRd} \int_\RRd \clusteroffset{X}(B+y'-u) \clusteroffset{Y}(\de y') \reducedcumulantmeasure{ZZ}(\de u),                          \\
        \reducedcumulantmeasure{XX}(B) & =\clusteraverage{X}\intensity{Z}\delta(B) + \clusteraverage{X}^2 \int_{\RRd} \int_\RRd \clusteroffset{X}(B+y'-u) \clusteroffset{X}(\de y') \reducedcumulantmeasure{ZZ}(\de u).
    \end{align}

    Therefore, firstly
    \begin{align}
        \sdf{XZ}(\freq)
            & = \int_\RRd e^{-2\pi i \ip{x}{\freq}} \reducedcumulantmeasure{XZ}(\de x)                                                             \\
            & = \int_\RRd e^{-2\pi i \ip{x}{\freq}} \clusteraverage{X}\int_\RRd \clusteroffset{X}(\de x-u) \reducedcumulantmeasure{ZZ}(\de u)      \\
            & = \clusteraverage{X} \int_\RRd \int_\RRd e^{-2\pi i \ip{(x'+u)}{\freq}} \clusteroffset{X}(\de x') \reducedcumulantmeasure{ZZ}(\de u) \\
            & = \clusteraverage{X}\clusterchar{X}(-\freq) \sdf{ZZ}(\freq)                                                                        \\
            & = \clusteraverage{X}\conj{\clusterchar{X}(\freq)} \sdf{ZZ}(\freq).
    \end{align}
    Secondly,
    \begin{align}
        \sdf{XY}(\freq)
            & = \int_\RRd e^{-2\pi i \ip{x}{\freq}} \reducedcumulantmeasure{XY}(\de x)                                                                                                                     \\
            & = \int_\RRd e^{-2\pi i \ip{x}{\freq}} \clusteraverage{X} \clusteraverage{Y} \int_{\RRd} \int_\RRd \clusteroffset{X}(\de x+y'-u) \clusteroffset{Y}(\de y') \reducedcumulantmeasure{ZZ}(\de u) \\
            & = \clusteraverage{X} \clusteraverage{Y} \int_\RRd \int_\RRd e^{-2\pi i \ip{(x'-y'+u)}{\freq}} \clusteroffset{X}(\de x') \clusteroffset{Y}(\de y') \reducedcumulantmeasure{ZZ}(\de u)         \\
            & = \clusteraverage{X} \clusteraverage{Y} \conj{\clusterchar{X}(\freq)} \clusterchar{Y}(\freq) \sdf{ZZ}(\freq).
    \end{align}
    Finally, again by the same logic
    \begin{align}
        \sdf{XX}(\freq)
            & = \clusteraverage{X}\intensity{Z} + \clusteraverage{X}^2 \abs{\clusterchar{X}(\freq)}^2 \sdf{ZZ}(\freq)
    \end{align}
\end{proof}

\begin{proof}[Proof of \cref{prop:pred:kernel}]
    Firstly, we note that
    \begin{align*}
        \predkernelft{X\pred Z}(\freq) &= \sdf{XZ}(\freq)\sdf{ZZ}(\freq)^{-1} \\
        & = \clusteraverage{X}\conj{\clusterchar{X}(\freq)} \sdf{ZZ}(\freq) \sdf{ZZ}(\freq)^{-1} \\
        & = \clusteraverage{X}\conj{\clusterchar{X}(\freq)}.
    \end{align*}
    Recalling that characteristic functions are the inverse Fourier transform of the distribution, we see that $\conj{\clusterchar{X}(\freq)}$ is the usual Fourier transform of $\clusteroffset{X}$ (by conjugate symmetry), and therefore $\predkernel{X\pred Z} = \clusteraverage{X}\clusteroffset{X}$.
\end{proof}

\begin{proof}[Proof of \cref{prop:example:partialspectra}]
    Using \cref{lemma:example:spectra}, note a couple of useful relationships
    \begin{align}
        \sdf{XX}(\freq)         & = \intensity{X} + \abs{\sdf{XZ}(\freq)}^2/\sdf{ZZ}(\freq)        \\
        \sdf{XY}(\freq)         & = \sdf{XZ}(\freq) \sdf{ZY}(\freq)/\sdf{ZZ}(\freq)                \\
        \abs{\sdf{XY}(\freq)}^2 & = (\sdf{XX}(\freq)-\intensity{X})(\sdf{YY}(\freq)-\intensity{Y})
    \end{align}

    Therefore
    \begin{align}
        \sdf{XY\pred Z}(\freq)
            & = \sdf{XY}(\freq) - \sdf{XZ}(\freq) \sdf{ZZ}(\freq)^{-1} \sdf{ZY}(\freq) \\
            & = \sdf{XY}(\freq) - \sdf{XY}(\freq)                                      \\
            & = 0.
    \end{align}

    Next, consider
    \begin{align}
        \sdf{X,Z \pred Y}(\freq)
            & = \sdf{XZ}(\freq) - \sdf{XY}(\freq) \sdf{YY}(\freq)^{-1} \sdf{Y,Z}(\freq)                                                                                                                                                                                                                                                                       \\
            & = \clusteraverage{X}\conj{\clusterchar{X}(\freq)} \sdf{ZZ}(\freq) - \frac{\clusteraverage{X} \clusteraverage{Y} \conj{\clusterchar{X}(\freq)} \clusterchar{Y}(\freq) \clusteraverage{Y} \conj{\clusterchar{Y}(\freq)} \sdf{ZZ}(\freq)^2}{\clusteraverage{Y}\intensity{Z} + \clusteraverage{Y}^2 \abs{\clusterchar{Y}(\freq)}^2 \sdf{ZZ}(\freq)} \\
            & = \clusteraverage{X}\conj{\clusterchar{X}(\freq)} \sdf{ZZ}(\freq) - \frac{\clusteraverage{X} \clusteraverage{Y} \conj{\clusterchar{X}(\freq)} \abs{\clusterchar{Y}(\freq)}^2 \sdf{ZZ}(\freq)^2}{\intensity{Z} + \clusteraverage{Y} \abs{\clusterchar{Y}(\freq)}^2 \sdf{ZZ}(\freq)}                                                              \\
            & = \frac{\clusteraverage{X} \intensity{Z} \conj{\clusterchar{X}(\freq)} \sdf{ZZ}(\freq)}{\intensity{Z} + \clusteraverage{Y} \abs{\clusterchar{Y}(\freq)}^2 \sdf{ZZ}(\freq)}
    \end{align}

    Now letting $W=(Y,Z)^\top$, we have omitting the ``$(\freq)$'' for clarity
    \begin{align}
        \sdf{X,X\pred W}
            & = \sdf{XX} - \sdf{X,W} \sdf{W,W}^{-1} \sdf{W,X}                                                                                                                                                    \\
            & = \sdf{XX} - \begin{bmatrix} \sdf{XY} & \sdf{XZ} \end{bmatrix} \begin{bmatrix} \sdf{YY} & \sdf{Y,Z} \\ \sdf{ZY} & \sdf{ZZ} \end{bmatrix}^{-1} \begin{bmatrix} \sdf{Y,X} \\ \sdf{Z,X} \end{bmatrix} \\
            & = \sdf{XX} - \frac{\abs{\sdf{XY}}^2\sdf{ZZ} + \abs{\sdf{XZ}}^2\sdf{YY} - 2\Re{\sdf{XZ}\sdf{ZY}\sdf{Y,X}}}{\sdf{YY}\sdf{ZZ} - \abs{\sdf{Y,Z}}^2}                                                    \\
            & = \sdf{XX} - \frac{\abs{\sdf{XY}}^2\sdf{ZZ} + \abs{\sdf{XZ}}^2\sdf{YY} - 2\abs{\sdf{XY}}^2\sdf{ZZ}}{\intensity{Y}\sdf{ZZ}}.
    \end{align}
    Now
    \begin{align}
        \abs{\sdf{XZ}}^2\sdf{YY} & = \abs{\sdf{XZ}}^2\intensity{Y} + \abs{\sdf{XZ}}^2\abs{\sdf{Y,Z}}^2 / \sdf{ZZ} \\
                                    & = \abs{\sdf{XZ}}^2\intensity{Y} + \abs{\sdf{XY}}^2\sdf{ZZ}
    \end{align}
    and therefore
    \begin{align}
        \sdf{X,X\pred W}
            & = \sdf{XX} - \frac{\abs{\sdf{XZ}}^2}{\sdf{ZZ}} \\
            & = \intensity{X}.
    \end{align}
    Finally, we have letting $U = (Y,X)^\top$
    \begin{align}
        \sdf{Z,Z\pred U}
            & = \sdf{ZZ} - \frac{\abs{\sdf{ZY}}^2\sdf{XX} + \abs{\sdf{Z,X}}^2\sdf{YY} - 2\Re{\sdf{Z,X}\sdf{XY}\sdf{Y,Z}}}{\sdf{YY}\sdf{XX} - \abs{\sdf{Y,X}}^2}                                                                                                                                                                                    \\
            & = \sdf{ZZ} - \frac{2\sdf{XX}\sdf{YY}\sdf{ZZ} - \intensity{X}\sdf{YY}\sdf{ZZ}-\intensity{Y}\sdf{XX}\sdf{ZZ}- 2\abs{\sdf{XY}}^2\sdf{ZZ}}{\sdf{YY}\sdf{XX} - \abs{\sdf{Y,X}}^2}                                                                                                                                                         \\
            & = \frac{\intensity{X}\sdf{YY}\sdf{ZZ}+\intensity{Y}\sdf{XX}\sdf{ZZ}}{\sdf{YY}\sdf{XX} - \abs{\sdf{Y,X}}^2} - \sdf{ZZ}                                                                                                                                                                                                                \\
            & = \sdf{ZZ}\frac{\intensity{X}\sdf{YY}+\intensity{Y}\sdf{XX}-\sdf{YY}\sdf{XX} + \abs{\sdf{Y,X}}^2}{\sdf{YY}\sdf{XX} - \abs{\sdf{Y,X}}^2}                                                                                                                                                                                              \\
            & = \sdf{ZZ}\frac{\intensity{X}\sdf{YY}+\intensity{Y}\sdf{XX}-\sdf{YY}\sdf{XX} + (\sdf{XX}-\intensity{X})(\sdf{YY}-\intensity{Y})}{\sdf{YY}\sdf{XX} - \abs{\sdf{Y,X}}^2}                                                                                                                                                               \\
            & = \sdf{ZZ}\frac{\intensity{X}\intensity{Y}}{\sdf{YY}\sdf{XX} - \abs{\sdf{Y,X}}^2}                                                                                                                                                                                                                                                    \\
            & = \sdf{ZZ}\frac{\intensity{X}\intensity{Y}}{\intensity{X}\sdf{YY}+\intensity{Y}\sdf{XX} - \intensity{X}\intensity{Y}}                                                                                                                                                                                                                \\
            & = \frac{\sdf{ZZ}\intensity{Z}^2\clusteraverage{X}\clusteraverage{Y}}{\intensity{Z}\clusteraverage{X}\clusteraverage{Y}^2\abs{\clusterchar{Y}}^2\sdf{ZZ} + \intensity{X}\intensity{Y} + \intensity{Z}\clusteraverage{Y}\clusteraverage{X}^2\abs{\clusterchar{X}}^2\sdf{ZZ} + \intensity{X}\intensity{Y} - \intensity{X}\intensity{Y}} \\
            & = \frac{\sdf{ZZ}\intensity{Z}}{\clusteraverage{Y}\abs{\clusterchar{Y}}^2\sdf{ZZ} + \clusteraverage{X}\abs{\clusterchar{X}}^2\sdf{ZZ}+\intensity{Z}}
    \end{align}
\end{proof}

\section{Simulation study details}\label{app:simulation_details}

\subsection{Bivariate examples}
The first model was constructed by $Y$ points being a Poisson process, with a child processes of type $X_0$ having an average of $\clusternumber{X_0}$ with the offset between children and the parent, say $D_{X_0}$, following a multivariate Gaussian distribution.
In particular, $\intensity{Y}=0.01$, $\clusteraverage{X_0} = 3$, and $D_{X_0} \sim \mathcal{N}(0, 1.5^2 I)$.
We then label $X=X_0$.

To generate the second model, we start with the same setup as the previous model (for $Y$ and $X_0$).
We then generate the new $X$ points be again taking clusters, now centred around points of type $X_0$, with $\clusteraverage{X} = 1$, and the additional offset $D_{X} \sim \mathcal{N}(0, I)$. 
The parameters for $Y$ and $X_0$ are kept the same as the first example here.

Finally, for the third scenario, we again start with a similar initial $Y,X_0$ setup, and then $X$ is set to be the survivors of a thinning process.
In particular, each point of type $X_0$ is assigned a mark uniformly on $[0,1]$.
Then if two points are within a distance $r_X$ of each other, the point with the smaller mark is removed with probability $1-p_X$.
For the specific example, we use $\clusteraverage{X_0} = 15$, and $D_{X_0} \sim \mathcal{N}(0, 1.5^2 I)$, $r_X = 3$ and $p_X = 0.1$.
The parameters for $Y$ are the same as for the first example.
The resulting $X$ process is marginally a generalisation of a Mat\'ern hard-core process, though not quite the same as those in \cite{teichmann2013generalizations}.

\subsection{Trivariate examples}

In the trivariate case, we start with a homogeneous Poisson process $Z$.
In all three cases, we set $\intensity{Z}=0.01$.

For the first setting, we generate points of type $X$ as described in the previous section, with $\clusteraverage{X} = 3$ and $D_{X} \sim \mathcal{N}(0, 2^2 I)$. Points of type $Y$ are generated in the same way, but independently of points of type $X$.

For the second setting, we generate points of type $Z$ and $Y$ as just described.
Points of type $X$ were then generated by placing clusters around points of type $Y$ with with $\clusteraverage{X} = 1$ and $D_{X} \sim \mathcal{N}(0, 2^2 I)$.

In the final case, points of type $Z$ and $Y$ are again generated as in the first example.
We then generate points of type $X_0$ centred at points of type $Z$ with $\clusteraverage{X_0} = 10$ and $D_{X_0} \sim \mathcal{N}(0, 2^2 I)$.
We then conditionally thin them based on their distance to points of type $Y$, with survival probability $p_X=0.1$ and inhibition radius $r_X=3$.

\section{Additional simulation results}
\subsection{Example accounting for covariates}\label{app:simulation:inhom}

\Cref{fig:trivariateexampleinhom} shows an example of a trivariate system where the $Z$ process is inhomogeneous, with intensity $\lambda_Z([z_1,z_2]) = 0.02z_1/100$ over the region $[0,100]^2$.
We then consider the partial $L$ functions between each pair of processes, accounting for the other process and the intensity (which we treat as known for both the $L$ and partial $L$ functions.
For example, when considering $X$ vs $Y$, we account for the point pattern $Z$ and the covariate $\lambda_Z$.\footnote{Strictly speaking, this violates the assumption that the processes are jointly homogeneous. However, the partial L function does recover the structure in this case, and the relationships between processes are still the invariant to shifts, even though the intensity is deterministic.}
Again we plot the three different interaction scenarios.
We see that even in this inhomogeneous setting, we are still able to recover the underlying structure using the partial $L$ function.
In particular, the inhomogeneous $L$ function discovers clustering in almost all cases (except between $Y$ and $Z$ in the agnostic offspring case, where there should be clustering).
In contrast, the partial $L$ function correctly recovers the same structure as that seen in the corresponding example in the main manuscript.

\begin{figure}[hpt!]
    \centering
    \includegraphics{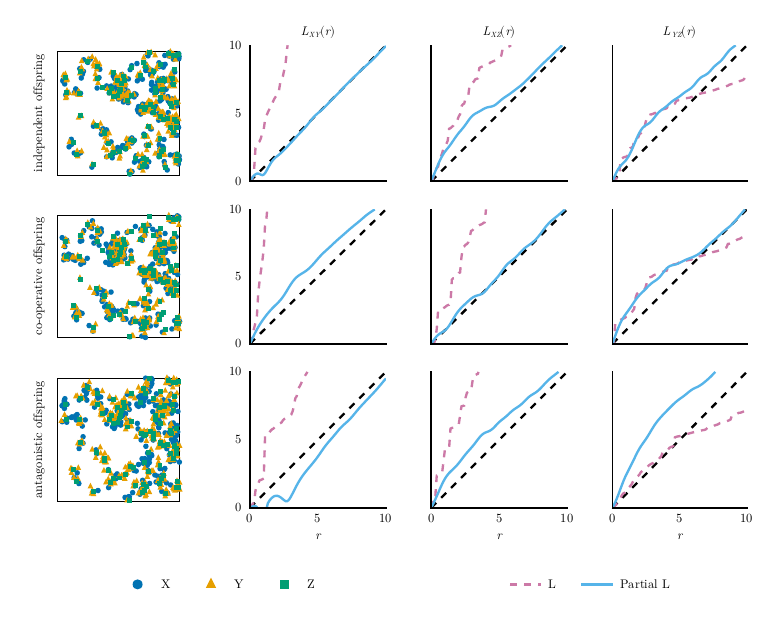}
    \caption{An inhomogeneous version of the trivariate examples. We again show the same trivariate system with three different interaction types, but now the intensity of the base process $Z$ is not constant. The first column shows example processes, the second column shows the $L$ function and partial $L$ function between process $X$ and process $Y$ (accounting for $Z$ and $\lambda_Z$), the third column shows interactions between $X$ and $Z$ (accounting for $Y$ and $\lambda_Z$) and the final column shows interactions between $Y$ and $Z$ (accounting for $X$ and $\lambda_Z$).}
    \label{fig:trivariateexampleinhom}
\end{figure}

\subsection{Pair correlation functions from examples}\label{app:simulation:pcf}

\Cref{fig:predator_prey_example:pcf,fig:trivariateexample:pcf} show the pair correlation functions corresponding the the example processes in the main manuscript.

\begin{figure}[h]
    \centering
    \includegraphics[width=1\textwidth]{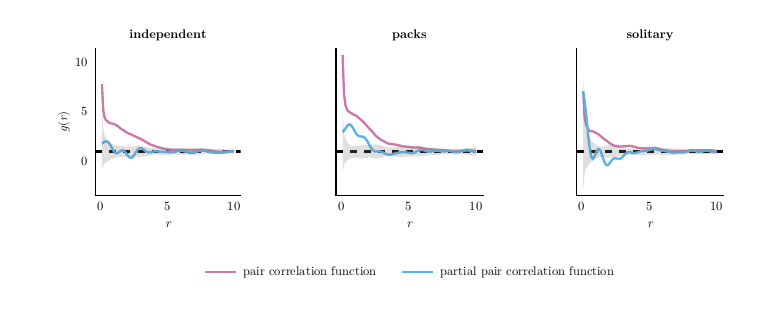}
    \caption{The pair correlation function and partial pair correlation function between the predator process ($X$) in the latter case accounting for the prey process ($Y$). The left column shows the first scenario, where the predators do not interact with each other. The middle column shows the second scenario, where the predators hunt in packs. The right column shows the third scenario, where the predators do not like to be near each other. The envelopes are 95\% confidence envelopes using the MAD envelopes proposed by \cite{myllymaki2017global}.}
    \label{fig:predator_prey_example:pcf}
\end{figure}

\begin{figure}[hpt!]
    \centering
    \includegraphics{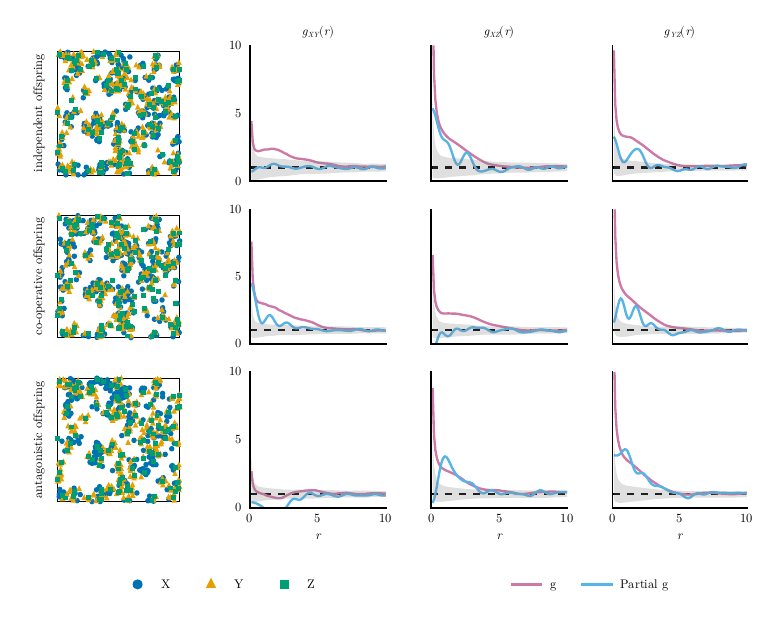}
    \caption{Example of a trivariate system with three different interaction types. The first column shows example processes, the second column shows the pair correlation function and partial pair correlation function between process $X$ and process $Y$ (possibly accounting for process $Z$), the third column shows interactions between $X$ and $Z$ (accounting for $Y$) and the final column shows interactions between $Y$ and $Z$ (accounting for $Z$).}
    \label{fig:trivariateexample:pcf}
\end{figure}

\subsection{Debiasing comparison}\label{app:simulation:debiasing}
For a simple example, \cref{fig:bias} shows the average of both the simple plug in estimator and the debiased estimator as well as the true $L$ function, for the first models considered in the trivariate simulation, for the partial L function of the $X$ process with itself.
We can see that even in this simple case the biased estimate has a substantial artifact at low radii that is not present in the true or debiased version.

\begin{figure}
    \centering
    \includegraphics[width=0.5\linewidth]{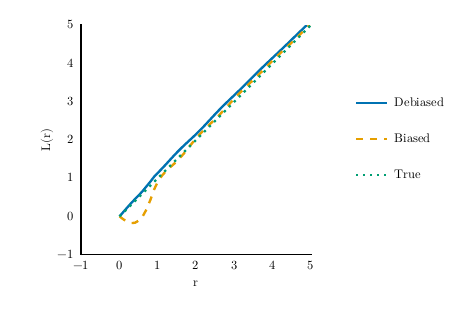}
    \caption{The average estimated $L_{XX\pred Y,Z}$ for the first trivariate model considered in the main manuscript. We show the true value, the average of the biased estimate and the average of the debiased estimate (averaged over 100 simulations).}
    \label{fig:bias}
\end{figure}

\subsection{Comparison of L function computation from spectra to direct methods}\label{app:simulation:l_function}

Another important thing to consider is the accuracy of the $L$ function computed via the spectral method compared to direct methods.
To investigate this, we compare standard border correction estimates of the $L$ function to those computed via the spectral method for standard univariate point processes.
In particular, we consider a Poisson processes, three different Thomas processes with varying cluster sizes, and a Matern hard core II process with varying inhibition radii.
The $L$ functions of each of these models are shown in \cref{fig:simulations:L_function_mse_models}.

\begin{figure}
    \centering
    \includegraphics{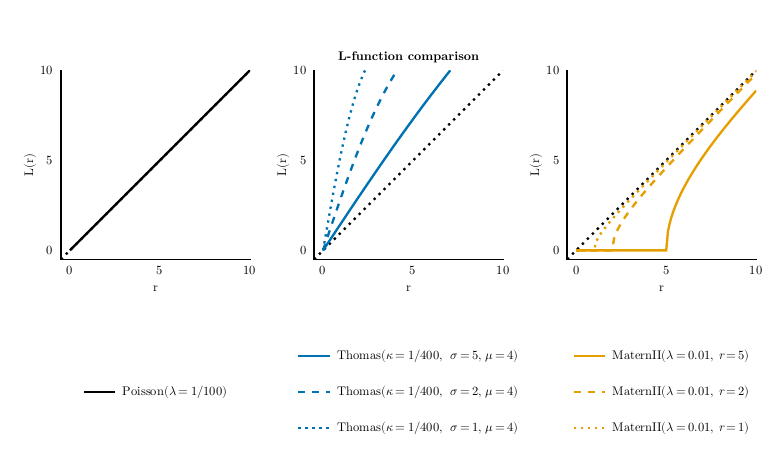}
    \caption{The $L$ functions for the models used in \cref{app:simulation:l_function} to compare the spectral and direct methods. }
    \label{fig:simulations:L_function_mse_models}
\end{figure}

We will compare the standard border corrected estimate of the $L$ function to those computed with both of the different spectral methods proposed in this work (direct and pre-rotationally averaged).
We will also vary the wavenumber grid size used in the spectral methods.
In particular, we consider grids of size $25 \times 25
, 50 \times 50, 100 \times 100,$ and a recommended grid size based on spacing scaling with the region size, as recommended in the main manuscript.
In all cases, we used a highest wavenumber based on the underlying true spectra, ensuring that we cover wavenumbers at which there is meaningful shape in the corresponding model spectra.

We then simulate 100 realisations of each model over increasing square windows of size $[0,L]^2$ for $L=100,150,200,250,300$.
We then compute average mean squared error and bias of the estimates, averaged accross all the radii considered.
We show the relative mean squared error, bias and timings in \cref{fig:simulations:L_function_mse}.
The quantities are displayed relative to the border corrected estimate, so that we show $log_{10}(\text{MSE}_{\text{method}}/\text{MSE}_{\text{border}})$ and $log_{10}(\abs{\text{Bias}_{\text{method}}}/\abs{\text{Bias}_{\text{border}}})$.

We see that in almost all cases, the spectral methods outperform the border corrected estimate in terms of mean squared error, with the exception of the Matern with the largest inhibition radius considered.
Even then, the spectral methods are competitive, which is our main goal.
Furthermore, we see that the rotationally averaged method tends to outperform the direct spectral method, particularly at smaller grid sizes.
This further justifies the use and development of this method in the main manuscript, as it is more robust in simulations than our initial approach.

\begin{landscape}
\begin{figure}[p]
    \centering
    \includegraphics[scale=0.85]{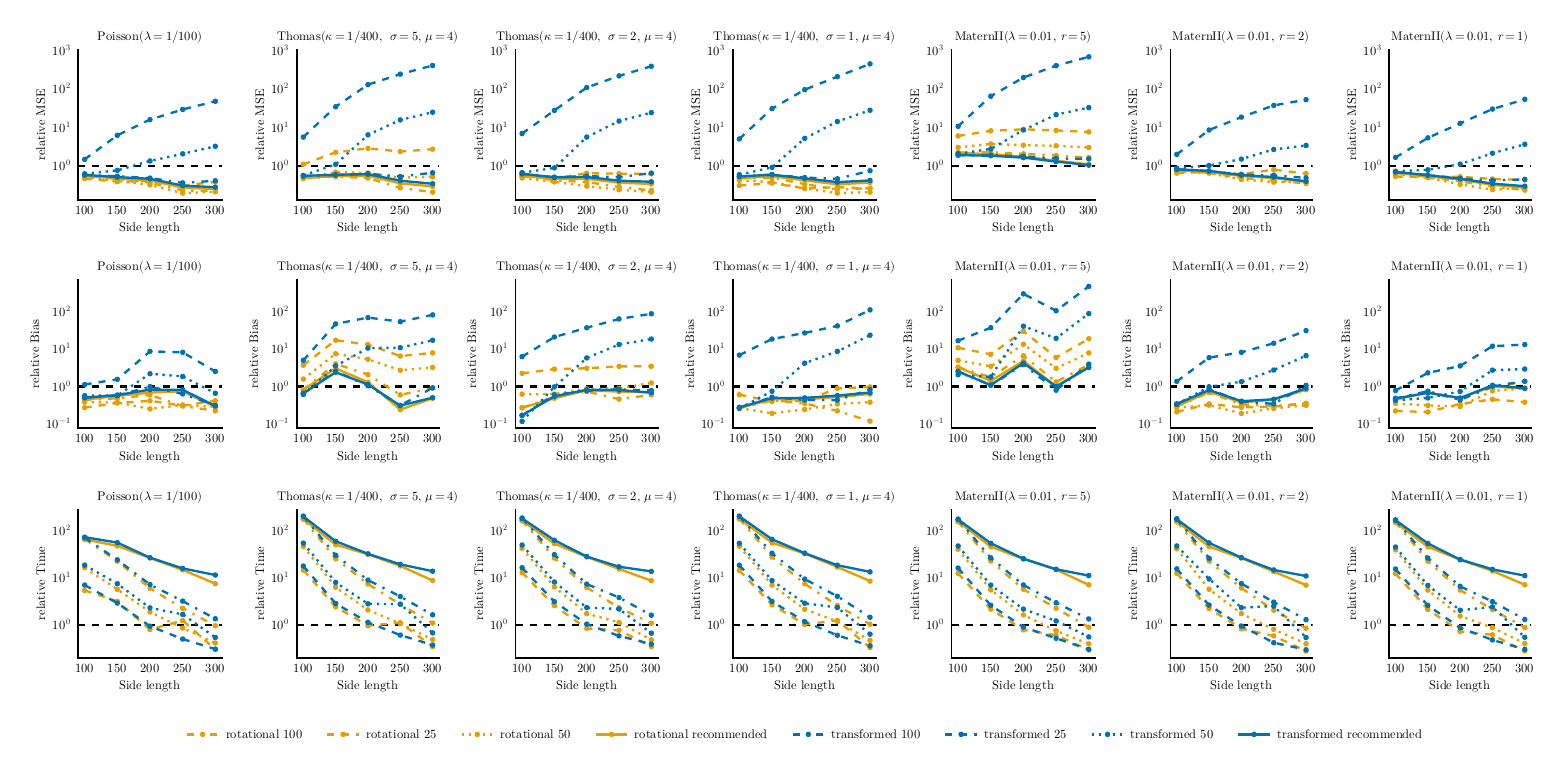}
    \caption{The relative mean squared error, bias and timings of the $L$ function estimates computed via the spectral methods compared to the border corrected estimate, for different wavenumber grid sizes. The models are the same as those shown in \cref{fig:simulations:L_function_mse_models}. }
    \label{fig:simulations:L_function_mse}
\end{figure}
\end{landscape}







\clearpage
\bibliographystyle{apalike}
\bibliography{main-bib}

@manual{R-2025,
  title        = {R: A Language and Environment for Statistical Computing},
  author       = {{R Core Team}},
  organization = {R Foundation for Statistical Computing},
  address      = {Vienna, Austria},
  year         = {2025},
  url          = {https://www.R-project.org/}
}

@article{passeggeri2025random,
  title     = {Random signed measures},
  author    = {Passeggeri, Riccardo},
  journal   = {Bayesian Analysis},
  volume    = {1},
  number    = {1},
  pages     = {1--31},
  year      = {2025},
  publisher = {International Society for Bayesian Analysis}
}

@article{Julia-2017,
  title     = {Julia: A fresh approach to numerical computing},
  author    = {Bezanson, Jeff and Edelman, Alan and Karpinski, Stefan and Shah, Viral B},
  journal   = {SIAM {R}eview},
  volume    = {59},
  number    = {1},
  pages     = {65--98},
  year      = {2017},
  publisher = {SIAM},
  doi       = {10.1137/141000671},
  url       = {https://epubs.siam.org/doi/10.1137/141000671}
}

@article{moller2016cylindrical,
  title     = {The cylindrical-function and Poisson line cluster point processes},
  author    = {M{\o}ller, Jesper and Safavimanesh, Farzaneh and Rasmussen, Jakob Gulddahl},
  journal   = {Biometrika},
  volume    = {103},
  number    = {4},
  pages     = {937--954},
  year      = {2016},
  publisher = {Oxford University Press}
}

@article{law2009ecological,
  title     = {Ecological information from spatial patterns of plants: insights from point process theory},
  author    = {Law, Richard and Illian, Janine and Burslem, David FRP and Gratzer, Georg and Gunatilleke, CVS and Gunatilleke, IAUN},
  journal   = {Journal of Ecology},
  volume    = {97},
  number    = {4},
  pages     = {616--628},
  year      = {2009},
  publisher = {Wiley Online Library}
}

@book{ben2003generalized,
  title     = {Generalized inverses: theory and applications},
  author    = {Ben-Israel, Adi and Greville, Thomas NE},
  year      = {2003},
  publisher = {Springer}
}

@incollection{andersen1995complex,
  title     = {The Complex Wishart Distribution and the Complex U-distribution},
  author    = {Andersen, HH and H{\o}jbjerre, M and S{\o}rensen, D and Eriksen, PS},
  booktitle = {Linear and Graphical Models: For the Multivariate Complex Normal Distribution},
  pages     = {39--66},
  year      = {1995},
  publisher = {Springer}
}

@article{carlson1974generalization,
  title     = {{A generalization of the Schur complement by means of the Moore--Penrose inverse}},
  author    = {Carlson, David and Haynsworth, Emilie and Markham, Thomas},
  journal   = {SIAM Journal on Applied Mathematics},
  volume    = {26},
  number    = {1},
  pages     = {169--175},
  year      = {1974},
  publisher = {SIAM}
}

@book{abramowitz1948handbook,
  title     = {Handbook of mathematical functions with formulas, graphs, and mathematical tables},
  author    = {Abramowitz, Milton and Stegun, Irene A},
  volume    = {55},
  year      = {1948},
  publisher = {US Government printing office}
}

@article{murrell2010does,
  title     = {When does local spatial structure hinder competitive coexistence and reverse competitive hierarchies?},
  author    = {Murrell, David J},
  journal   = {Ecology},
  volume    = {91},
  number    = {6},
  pages     = {1605--1616},
  year      = {2010},
  publisher = {Wiley Online Library}
}

@article{teichmann2013generalizations,
  title     = {{Generalizations of Mat{\'e}rn’s hard-core point processes}},
  author    = {Teichmann, Jakob and Ballani, Felix and van den Boogaart, Karl Gerald},
  journal   = {Spatial Statistics},
  volume    = {3},
  pages     = {33--53},
  year      = {2013},
  publisher = {Elsevier}
}

@book{gerrard1969competition,
  title     = {Competition quotient: a new measure of the competition affecting individual forest trees},
  author    = {Gerrard, Douglas James},
  volume    = {20},
  year      = {1969},
  publisher = {Agricultural Experiment Station, Michigan State University}
}

@article{baba2004partial,
  title     = {Partial correlation and conditional correlation as measures of conditional independence},
  author    = {Baba, Kunihiro and Shibata, Ritei and Sibuya, Masaaki},
  journal   = {Australian \& New Zealand Journal of Statistics},
  volume    = {46},
  number    = {4},
  pages     = {657--664},
  year      = {2004},
  publisher = {Wiley Online Library}
}

@article{baddeley2005spatstat,
  title   = {{Spatstat: an R package for analyzing spatial point patterns}},
  author  = {Baddeley, Adrian and Turner, Rolf},
  journal = {Journal of statistical software},
  volume  = {12},
  pages   = {1--42},
  year    = {2005}
}

@software{grainger_2025_17779550,
  author    = {Grainger, Jake P},
  title     = {SpatialMultitaper.jl},
  month     = dec,
  year      = 2025,
  publisher = {Zenodo},
  doi       = {10.5281/zenodo.17779549},
  url       = {https://doi.org/10.5281/zenodo.17779549}
}

@software{jake_p_grainger_2025_17779726,
  author    = {Jake P. Grainger},
  title     = {JakeGrainger/Code\_for-The\_partial\_K\_function},
  month     = dec,
  year      = 2025,
  publisher = {Zenodo},
  doi       = {10.5281/zenodo.17779725},
  url       = {https://doi.org/10.5281/zenodo.17779725}
}

@software{jake_p_grainger_2026_19594943,
  author    = {Jake P. Grainger and
               Tuomas Rajala},
  title     = {JakeGrainger/SpatialMultitaperR},
  month     = apr,
  year      = 2026,
  publisher = {Zenodo},
  version   = {v0.4.0},
  doi       = {10.5281/zenodo.19594942},
  url       = {https://doi.org/10.5281/zenodo.19594942}
}

@inbook{dixon2013ripley,
  author    = {Dixon, Philip M.},
  publisher = {John Wiley \& Sons, Ltd},
  isbn      = {9780470057339},
  title     = {Ripley's K Function},
  booktitle = {Encyclopedia of Environmetrics},
  chapter   = {},
  pages     = {},
  doi       = {https://doi.org/10.1002/9780470057339.var046.pub2},
  year      = {2013},
  keywords  = {spatial point process, spatial point pattern, clustering, spatial scale, spatial segregation, edge effect corrections},
  abstract  = {Abstract Ripley's K function summarizes spatial point process data. It can be used to describe a set of locations, test hypotheses about patterns, and estimate parameters in a spatial point process model. For a stationary point process, K(t) is the expected number of additional points within distance t of a focal point divided by the intensity of the process. A univariate version is used for one set of locations and a multivariate version is used when points can be labeled by a small number of groups. This article reviews the properties of Ripley's K function and two related functions, then illustrates the computation and interpretation using data on the locations of trees in a swamp hardwood forest.}
}

@article{mrkvivcka2021revisiting,
  title     = {Revisiting the random shift approach for testing in spatial statistics},
  author    = {Mrkvi{\v{c}}ka, Tom{\'a}{\v{s}} and Dvo{\v{r}}{\'a}k, Ji{\v{r}}{\'\i} and Gonzalez, Jonatan A and Mateu, Jorge},
  journal   = {Spatial Statistics},
  volume    = {42},
  pages     = {100430},
  year      = {2021},
  publisher = {Elsevier}
}

@article{myllymaki2017global,
  title     = {Global envelope tests for spatial processes},
  author    = {Myllym{\"a}ki, Mari and Mrkvi{\v{c}}ka, Tom{\'a}{\v{s}} and Grabarnik, Pavel and Seijo, Henri and Hahn, Ute},
  journal   = {Journal of the Royal Statistical Society Series B: Statistical Methodology},
  volume    = {79},
  number    = {2},
  pages     = {381--404},
  year      = {2017},
  publisher = {Oxford University Press}
}

@article{baddeley2000non,
  title     = {Non- and semi-parametric estimation of interaction in inhomogeneous point patterns},
  author    = {Baddeley, Adrian J and M{\o}ller, Jesper and Waagepetersen, Rasmus},
  journal   = {Statistica Neerlandica},
  volume    = {54},
  number    = {3},
  pages     = {329--350},
  year      = {2000},
  publisher = {Wiley Online Library}
}

@article{grainger2025spectral,
  title     = {Spectral estimation for point processes and random fields},
  author    = {Grainger, JP and Rajala, TA and Murrell, DJ and Olhede, SC},
  author+an = {1=highlight},
  journal   = {Biometrika},
  pages     = {asaf089},
  year      = {2026},
  publisher = {Oxford University Press},
  keywords  = {J}
}

@article{ripley1977modelling,
  title     = {Modelling spatial patterns},
  author    = {Ripley, Brian D},
  journal   = {Journal of the Royal Statistical Society: Series B (Methodological)},
  volume    = {39},
  number    = {2},
  pages     = {172--192},
  year      = {1977},
  publisher = {Wiley Online Library}
}

@article{dahlhaus2000graphical,
  title     = {Graphical interaction models for multivariate time series},
  author    = {Dahlhaus, Rainer},
  journal   = {Metrika},
  volume    = {51},
  pages     = {157--172},
  year      = {2000},
  publisher = {Springer}
}

@article{thomson1982spectrum,
  author  = {Thomson, D.J.},
  journal = {Proceedings of the IEEE},
  title   = {Spectrum estimation and harmonic analysis},
  year    = {1982},
  volume  = {70},
  number  = {9},
  pages   = {1055-1096}
}

@book{billingsley2012probability,
  title     = {Probability and Measure},
  author    = {Billingsley, Patrick},
  year      = {2012},
  publisher = {John Wiley \& Sons}
}

@book{ciarlet1989introduction,
  title     = {Introduction to numerical linear algebra and optimisation},
  author    = {Ciarlet, Philippe G and Miara, Bernadette and Thomas, Jean-Marie},
  year      = {1989},
  publisher = {Cambridge university press}
}

@article{medkour2009graphical,
  title     = {Graphical modelling for brain connectivity via partial coherence},
  author    = {Medkour, Tarek and Walden, Andrew T and Burgess, Adrian},
  journal   = {Journal of neuroscience methods},
  volume    = {180},
  number    = {2},
  pages     = {374--383},
  year      = {2009},
  publisher = {Elsevier}
}

@article{eckardt2019analysing,
  title     = {Analysing multivariate spatial point processes with continuous marks: A graphical modelling approach},
  author    = {Eckardt, M.  and Mateu, J.},
  journal   = {International Statistical Review},
  volume    = {87},
  number    = {1},
  pages     = {44--67},
  year      = {2019},
  publisher = {Wiley Online Library}
}

@article{besag1977contribution,
  title   = {{Contribution to the discussion on Dr Ripley's paper}},
  author  = {Besag, Julian},
  journal = {JR Stat Soc B},
  volume  = {39},
  pages   = {193--195},
  year    = {1977}
}

@book{daley2003introduction,
  title     = {An Introduction to the Theory of Point Processes, Volume I: Elementary Theory and Methods},
  author    = {Daley, Daryl J and Vere-Jones, David},
  year      = {2003},
  publisher = {Springer}
}

@book{moller2003statistical,
  title     = {Statistical Inference and Simulation for Spatial Point Processes},
  author    = {M{\o}ller, Jesper and Waagepetersen, Rasmus Plenge},
  year      = {2003},
  publisher = {CRC press}
}

@article{walden2000unified,
  author    = {A. T. Walden},
  journal   = {Biometrika},
  number    = {4},
  pages     = {767--788},
  publisher = {[Oxford University Press, Biometrika Trust]},
  title     = {A Unified View of Multitaper Multivariate Spectral Estimation},
  urldate   = {2023-08-10},
  volume    = {87},
  year      = {2000}
}

@article{guinness2014multivariate,
  title     = {Multivariate spatial modeling of conditional dependence in microscale soil elemental composition data},
  author    = {Guinness, Joseph and Fuentes, Montserrat and Hesterberg, Dean and Polizzotto, Matthew},
  journal   = {Spatial Statistics},
  volume    = {9},
  pages     = {93--108},
  year      = {2014},
  publisher = {Elsevier}
}

@book{brillinger1974timeseries,
  series    = {International series in decision processes},
  publisher = {Holt, Rinehart, and Winston},
  isbn      = {0030769752},
  year      = {1974},
  title     = {{Time Series: Data Analysis and Theory}},
  language  = {eng},
  address   = {New York},
  author    = {Brillinger, David R},
  keywords  = {Time-series analysis; Fourier transformations}
}

@article{henrys2009inference,
  title     = {Inference for clustered inhomogeneous spatial point processes},
  author    = {Henrys, PA and Brown, PE},
  journal   = {Biometrics},
  volume    = {65},
  number    = {2},
  pages     = {423--430},
  year      = {2009},
  publisher = {Oxford University Press}
}

@article{eckardt2021graphical,
  title     = {Graphical modelling and partial characteristics for multitype and multivariate-marked spatio-temporal point processes},
  author    = {Eckardt, Matthias and Gonz{\'a}lez, Jonatan A and Mateu, Jorge},
  journal   = {Computational Statistics \& Data Analysis},
  volume    = {156},
  pages     = {107139},
  year      = {2021},
  publisher = {Elsevier}
}

@article{dutt1993fast,
  title     = {Fast {F}ourier transforms for nonequispaced data},
  author    = {Dutt, Alok and Rokhlin, Vladimir},
  journal   = {SIAM Journal on Scientific computing},
  volume    = {14},
  number    = {6},
  pages     = {1368--1393},
  year      = {1993},
  publisher = {SIAM}
}

@book{illian2008statistical,
  title     = {Statistical Analysis and Modelling of Spatial Point Patterns},
  author    = {Illian, Janine and Penttinen, Antti and Stoyan, Helga and Stoyan, Dietrich},
  year      = {2008},
  publisher = {John Wiley \& Sons}
}

@article{baddeley1984cautionary,
  title     = {A cautionary example on the use of second-order methods for analyzing point patterns},
  author    = {Baddeley, AJ and Silverman, Bernard Walter},
  journal   = {Biometrics},
  pages     = {1089--1093},
  year      = {1984},
  publisher = {JSTOR}
}

@article{eckardt2019partial,
  title     = {Partial characteristics for marked spatial point processes},
  author    = {Eckardt, Matthias and Mateu, Jorge},
  journal   = {Environmetrics},
  volume    = {30},
  number    = {6},
  pages     = {e2565},
  year      = {2019},
  publisher = {Wiley Online Library}
}

@article{eichler2003partial,
  title     = {Partial correlation analysis for the identification of synaptic connections},
  author    = {Eichler, Michael and Dahlhaus, Rainer and Sandk{\"u}hler, J{\"u}rgen},
  journal   = {Biological cybernetics},
  volume    = {89},
  number    = {4},
  pages     = {289--302},
  year      = {2003},
  publisher = {Springer}
}

@book{daley2007introduction,
  title     = {An Introduction to the Theory of Point Processes: Volume II: General Theory and Structure},
  author    = {Daley, Daryl J and Vere-Jones, David},
  year      = {2007},
  publisher = {Springer Science \& Business Media}
}

@book{baddeley2015spatstat,
  title     = {Spatial Point Patterns: Methodology and Applications with
               {R}},
  author    = {Adrian Baddeley and Ege Rubak and Rolf Turner},
  year      = {2015},
  publisher = {Chapman and Hall/CRC Press},
  address   = {London},
  isbn      = {9781482210200}
}

\end{document}